\documentclass[print,nativefonts]{dissertation}

\usepackage[numbers]{natbib}

\usepackage{etaremune}
\usepackage{graphicx}
\usepackage{array}

\usepackage{amssymb,amsmath, amsthm}
\usepackage{footnote}

\makesavenoteenv{tabular}
\makesavenoteenv{table}
 
\usepackage{subfig} 
\usepackage{algorithm} 
\usepackage{algorithmicx}
\usepackage{xfrac}
%
%
%


\newtheorem{proposition}{Proposition}

%
%
%
%
%
\newcommand{\etal}{et~al.\/~}
\newcommand{\eg}{e.\,g.,\/~}
\newcommand{\ie}{i.\,e.,\/~}
\newcommand{\etc}{etc.\/~}

\newcommand{\cf}{see }

\newcommand{\E}[1]{\mathbf{E}\left[#1\right]}
\newcommand{\Prob}[1]{P_{#1}}
\newcommand{\Eb}{\text{DC}_{max}}

\renewcommand{\paragraph}[1]{\vspace{1em}\noindent\textbf{#1}\;}
\newcommand{\subpar}[1]{\vspace{0.75em}\noindent\emph{#1}}
\newcommand{\fakeparagraph}{\subpar{}}
\newcommand{\parspace}{\vspace{1em}}


%

\newcommand{\chapref}[1]{Chapter~\ref{#1}}
\newcommand{\chaprefs}[2]{chapters~\ref{#1} and~\ref{#2}}
\newcommand{\secref}[1]{Section~\ref{#1}}

\newcommand{\figref}[1]{Figure~\ref{#1}}

\newcommand{\tabref}[1]{Table~\ref{#1}}

%

\begin{document}

\title[A Step Back Towards the Smart Dust Dream]{Opportunistic Communication in \\Extreme Wireless Sensor Networks}
\author{Marco}{Cattani}

\frontmatter

\include{title/title}


\dedication{\epigraph{``There are always possibilities''}{Spock}}

\tableofcontents

\chapter*{Acknowledgments}
\setheader{Acknowledgments}

This thesis is the result of few ``\textit{eureka}'' moments, surrounded by much work, fun, coffee and personal experiences. For these vital ingredients, I need to thank a large crowd of extremely special persons. To me, you are all colleagues, family and friends.

First of all, I want to thank my mentor and supervisor Koen Langendoen. Through your \textit{koenification} and Dutch pragmatic attitude, you showed me how to be a better person and a more thorough researcher. You also provided me two inspiring supervisors that were key to my first research successes. 
Matthias gave me the right research directions and taught me how to fight for my ideas. Marco showed me the power of being perseverant and always positive about things. Through his countless comments, Marco patiently taught me how to deeply analyze a problem and clearly present my solution (in proper English!).

Along with my supervisors, I had many brilliant colleagues and students who were always available to listen to my weird ideas and to give me their feedback. I want to especially thank Niels and Przemek for their blunt opinions and advices. Marco, Federico, Carlo and Frederick, whose exceptional research inspired and motivated me throughout my PhD. Andreas, for the countless conversations we had about research, movies, books and food. Platon and Dimitris, students who chose and trusted me as their supervisor and of whom I am really proud. Claudio, who shared with me the experience of a PhD, with all its memorable moments: from wild Dutch parties to boring experiments in deserted museums. Finally, I want to thank Ioannis, who is not just a colleague but also my hacking-mate and trusted confidante. His hands-on experience and problem-solving attitude are something I greatly enjoy and admire. 

Outside of the working environment, my thanks go greatly towards my parents, sister, cousins and acquired relatives. You believed in me and fed me with love, and are responsible for the person I am today. In particular, I want to thank my father, who taught me how to fly,  and my mother, who instead reminded me of keeping my feet on the ground. Finally, my biggest gratitude goes to my wife Daniela, for her ever-present, powerful love. You are my first supporter and life companion, you boost my confidence and point me towards greater goals. You can understand me, help me and make me better, teaching me the power of emotions, empathy, selflessness and life.

Some special thanks go as well to the Netherlands and all its open-minded citizens: a model of what a modern civilization should be. You welcomed me without prejudices and gave me the tools to achieve great \textit{thinks}. In particular, I want to thank the student rugby club S.R.C. Thor and the 50th bestuur, who gifted me with many glorious moments and trusted me to manage their historic clubhouse: a never-ending problem solving exercise that helped me later on, through the harsh times of my PhD.

\begin{flushright}
{\makeatletter\itshape
\@firstname\ \@lastname \\
Delft, September 2016
\makeatother}
\end{flushright}

\mainmatter
\thumbtrue



%

\chapter{Introduction}
\label{chapter:introduction}

\epigraph{
     ``We often fear what we do not understand. Our best defense is knowledge.''}
	 {Lieutenant Tuvok}
	 

\dropcap{W}{hen} the idea of Smartdust was first presented in 1992 both researchers and industry got excited and inspired by the potential that such a technology could achieve. 
A swarm of tiny electromechanical devices such as sensors and robots interconnected via a wireless network that could sense the characteristics of the environment (\eg light, temperature and vibration) and react accordingly.

The hype around Smartdust sparkled the research topic of Wireless Sensor Networks (WSNs): a wide literature of works focused on sensing and communication in resource-constrained devices. 
Nevertheless, compared to the initial conception of Smartdust, these works often tackled easier problems, simplifying some of the original challenges.
In WSNs sensors are bigger, with fewer constraints and more capabilities. Networks are often static, rather than mobile, with sizes in the order of hundreds of nodes, rather than thousands. Finally, the sensing and actuation tasks are often decoupled, removing the real-time constrains from the problem.

While WSNs' simpler challenges fit many application scenarios such as precision agriculture~\cite{Wark2007} and building monitoring~\cite{Ceriotti2009}, they seldom cope with dense and dynamic scenarios, typical of smart cities~\cite{Zanella2014}, vehicular~\cite{Lee2006} and crowd monitoring~\cite{Martella2014} networks, that closer resemble the Smartdust idea.

One of the dynamic scenarios, crowd monitoring, has in the recent years shown to be particularly important for people's safety. 
Before a large-scale event (festival, concert, \etc), crowd managers use their experience to list all possible dangerous situations and plan a set of counteractions. 
During the actual gathering, crowds are continuously monitored in order to assess their conditions, while managers decide which planned actions to take. 
In order to be effective, the detection-reaction process for dangerous situations must be as timely as possible. A delayed detection can lead to late reactions and, ultimately, to catastrophic results. 
In 2010, a crowd rush in a popular electronic dance festival in Germany ended up with 21 deaths from suffocation and more than 500 injured people~\cite{Helbing2012}.

This thesis is motivated by such a scenario: the need to monitor a crowd during large-scale events, where several hundreds of people can gather within confined spaces. This work is part of the EWiDS project\footnote{\url{http://www.commit-nl.nl/projects/very-large-wireless-sensor-networks-for-well-being}} aimed at providing participants with wearable devices that can actively monitor the density of their surroundings and issue alerts when crossing dangerous thresholds. According to crowd managers, density is one of the risk factors that helps predicting dangerous situations.


To tackle the aforementioned problem, this thesis makes a step back towards the original Smartdust idea and explores the problem of communication in Extreme Wireless Sensor Networks (EWSNs), where resources are limited, nodes are mobile and densities can drastically fluctuate in space and time.

\section{Wireless Sensor Networks}
\label{sec:introduction_problem_statement}

In order to understand the challenges of Extreme Wireless Sensor Networks (EWSNs), we will first introduce some basic knowledge about traditional Wireless Sensor Networks (WSNs). By highlighting the differences between the two (summarized in \tabref{tab:introduction_wsn}), we define the crucial characteristics of EWSNs and, more importantly, its challenges.

A Wireless Sensor Network comprises of a small set of battery-powered devices, the so-called \emph{motes}, which periodically monitor the environment and cooperatively communicate the sensed information via wireless links.
WSNs are usually small in size, ranging from tens to hundreds of short-range devices. Enough to cover their usually modest deployment areas. 
From a networking perspective, the maximum neighborhood cardinality is in the order of a few tens of devices.

\paragraph{Data Collection.} 
In WSNs, motes can diffuse and process the sensed information in a distributed fashion (allowing the network to react autonomously) or can forward it to one or more \emph{sink} nodes (\cf \figref{fig:intro_wsn}). Sinks are special nodes that are usually connected to a larger network \eg the Internet, and serve as gateways to store data and forward actuation commands. In some cases, sink nodes act also as coordinators of the other nodes in the network.
Because WSNs usually involve static topologies, shortest-path routing trees can be built in order to deliver the sensed information to sink nodes over a multi-hop wireless connection. 

\paragraph{Energy Efficiency.}
Because in WSNs motes are often powered by small batteries, it is vital to design communication mechanisms that are energy efficient and allow these wireless devices to operate long enough to meet the lifetime requirements imposed by the application (often in the order of days, months, or a few years).

Unfortunately, communication mechanisms are never energy efficient enough. Designers strive to reduce the cost, weight and size of motes, while batteries are among the components that are most expensive, heavy and difficult to miniaturize. Communicating with higher efficiencies than needed allows thus to reduce the capacity of batteries \ie their size, weight and cost, putting pressure on the efficiency of the communication.

The main mechanism employed by WSNs to improve the energy efficiency of wireless devices is called duty cycling and consists in limiting the use of the energy-demanding peripherals \eg the radio, for very short and periodic amounts of time (\cf \figref{fig:intro_dutycycle}). 
The ratio between active and passive periods captures the energy efficiency of a mote \ie its duty cycle $DC$:  
\begin{align} 
	\label{eq:introduction_dc}
	\text{DC} = \frac{\delta}{T},
\end{align}	
where $\delta$ and $T$ are respectively the duration and period of the node's activity. Note that the activity period $T$ can be computed by inverting the mote's activity frequency $\phi$
\begin{align} 
T = 1/\phi, \quad DC = \delta\,\phi.
\end{align} 
As an example, a node being active for 5\,ms every 200\,ms ($\delta$ = 5\,ms, $T$ = 200\,ms, $\phi$ = 5\,Hz) will result in a duty cycle of 0.025. Having such a duty cycle means that the node is active for 2.5\% of the time. 
Compared to a node that is active for 100\,ms every second ($\delta$ = 100\,ms, $T$ = 1000\,ms, $\phi$ = 1\,Hz, $DC$ = 0.1), the lifetime of the former is 4 times longer than the latter, even though it is active 5 times more often. 
As this example shows, energy efficiency is a balancing act between activity frequency $\phi$ and duration $\delta$. A concept that will be further analyzed in \chapref{chapter:staffetta}.

Note that in this thesis, and in most WSN literature, the node's duty cycle is computed solely using the timings of the radio apparatus \ie the \emph{radio duty cycle}, since this peripheral has far higher energy consumption than the other hardware components.

\begin{figure}
	\centering
	\subfloat[A Wireless Sensor Network composed of three wireless nodes and a sink. The latter is connected to a local area network (LAN) to store the sensed data.]{ \label{fig:intro_wsn}  
	\includegraphics[width=0.48\linewidth]{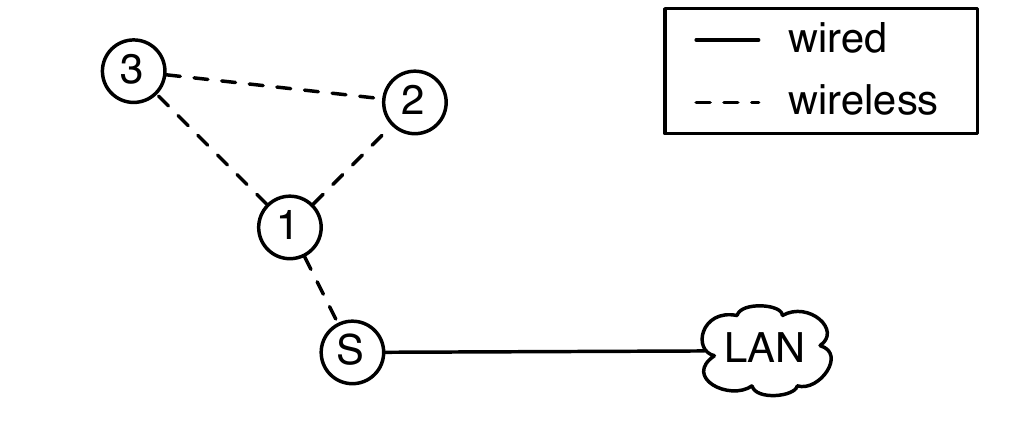} }
	\hspace*{0.03\linewidth}
	\subfloat[In Wireless Sensor Networks nodes wake up periodically ($\phi$) for short amounts of time ($\delta$) in order to save energy (duty cycling).]{ \label{fig:intro_dutycycle}  
	\includegraphics[width=0.48\linewidth]{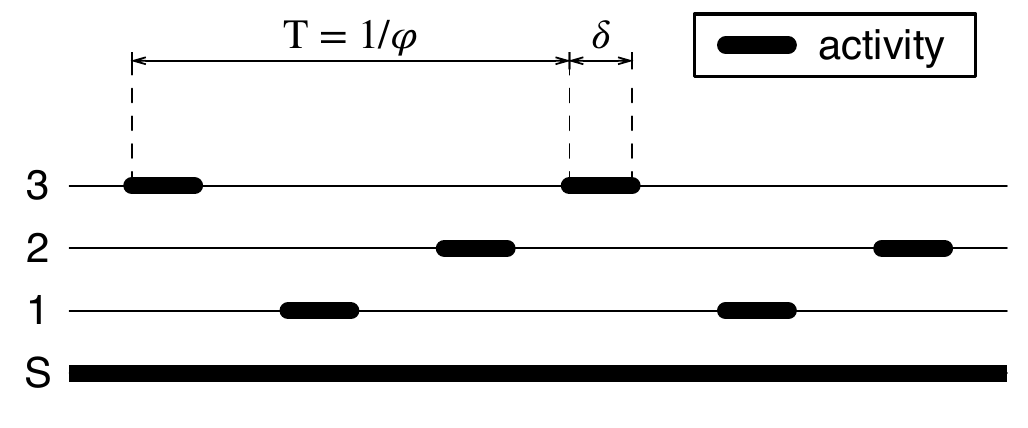} }
	\caption{Representation of a typical Wireless Sensor Network in space and time.}
	\label{fig:intro_wsn_concept} 
\end{figure}

\paragraph{Communication.} 
While duty cycling can easily improve the mote's energy efficiency by two orders of magnitude, the resulting activity periods -- which are short and infrequent -- make
communication between nodes very challenging (\cf \chapref{chapter:sofa}). 
Before communication, two or more nodes need in fact to be active at the same time \ie to \emph{rendezvous}. The smaller the duration of their activity, the harder to rendezvous and communicate.
The solutions to the rendezvous problem
can be subdivided in two main categories, \emph{synchronous} and \emph{asynchronous} mechanisms.

\subpar{Synchronous mechanisms.} These mechanisms work by synchronizing the wake-up times of nodes so that all nodes interested in communicating are active at the same time. 
This way, motes are able to rendezvous instantly, spending all their energy for communication. 

Unfortunately, node synchronization can be difficult to achieve. 
First, synchronization comes with a communication overhead. In order to synchronize, nodes must first exchange time information to adjust their internal clock (often using asynchronous mechanisms) and decide on a common wake-up schedule. In the case of network-wide communication mechanisms, this overhead can be quite significant. 
Second, smaller activity periods require a more accurate synchronization among the nodes. Thus, extremely efficient communication requires tightly synchronized clocks. 
This feature can be difficult to achieve for wireless sensor nodes, whose time-keeping hardware (oscillators) are cheap and can skew in time \ie running faster or slower than desired. This is due to several reasons such as temperature variations~\cite{Boano2010} and the manufacturing process. 
To address the clock skewing problem, nodes need to continuously exchange their clock information, increasing the communication overhead even more. 
Finally, because all nodes wake up at the same time, the bandwidth available for communication is limited to a small fraction of time, reducing the amount of information that can be communicated. This is particularly problematic in the case of dense networks, where many nodes need to communicate frequently.

\subpar{Asynchronous mechanisms.} Asynchronous mechanisms remove the complexity and overhead of synchronization mechanisms at the cost of longer rendezvous times, effectively trading simplicity and robustness for energy efficiency and communication delay. Because each node has it own knowledge of time, when a devices want to communicate it must first wait for its destination to wake up. 
This longer rendezvous increases in turn the latency of the packets and, depending on the medium access control (MAC) implemented, the bandwidth overhead. 
Even though asynchronous mechanisms are far less efficient than their synchronized counterparts they are usually preferred in real-word scenarios, where their simpler mechanisms are less prone to bugs and more robust to network dynamics (expected or not).

\paragraph{Network Dynamics.} 
Even though WSNs' scenarios are commonly static, mechanisms still need to cope to some extent with network dynamics.
For example, 
despite the fact that the device positions do not change, the links' connectivity can drastically fluctuate in time.
This is due to phenomena such as multi-path effects~\cite{Zhou2004a}, which lead to constructive and destructive interference, signal attenuation due to obstacles~\cite{Zhao2003}, and message loss due to external interference~\cite{Brown2014,Hermans2014}.

To cope with network dynamics, nodes must periodically assess the radio channel and update their knowledge about the topology. The higher the dynamics, the more often nodes need to get updates. 
This additional overhead in turn increases the required activity frequency of motes ($\phi$),
and thus, their energy consumption.

\section{Problem Statement}
\begin{table}
	\centering
\begin{tabular}{lcc}
\hline
Characteristic & WSNs & EWSNs\\
\hline
Network size & 10s to 100s & 100s to 1000s\\
Neighborhood size & 10s & 100s\\
Mobility & Quasi-static & Highly mobile\\
Update period & minutes & seconds \\
\hline
\end{tabular}
\caption{Different characteristics between Wireless Sensor Networks (WSNs) and Extreme Wireless Sensor Networks (EWSNs).}
\label{tab:introduction_wsn}
\end{table}

Similar to traditional WSNs, in Extreme Wireless Sensor Networks (EWSNs) devices are equipped with a low-power microcontroller and powered by small batteries or supercapacitors. Thus, they need to be highly energy efficient. Different from WSNs, Extreme Wireless Sensor Networks (EWSNs) raise the challenges of their non-extreme counterparts on four main aspects. 

\subpar{1) Scale.} 
EWSNs comprise of thousands of devices, approximately an order of magnitude more devices than traditional WSNs. Because of this, mechanisms that are centralized and require network-wide information have difficulties to scale.  

\subpar{2) Mobility.}
In EWSNs devices are highly mobile, instead of static, resulting in topologies that continuously change over time. In these highly dynamic conditions, mechanisms based on rigid topological structures must strive to timely adapt to the many changes, saturating the already scarce bandwidth and usually taking decisions based on wrong, outdated information.

\subpar{3) Density.}
Because in EWSNs the devices' radio range and the deployment areas are similar in dimensions to the ones in WSNs -- but nodes are far more numerous -- the resulting neighborhood cardinalities are in the order of hundreds of neighbors, rather then tens. 
This increased ``network density'' makes it difficult for communication mechanisms to efficiently share the wireless medium, straining the already limited bandwidth. 

\subpar{4) Variability.}
The combined effect of high network densities and dynamics makes the network's characteristics 
fluctuate drastically both in space and time (\cf \chapref{chapter:nemo}). 
Mechanisms that require complex parameter tuning struggle to	 find the right settings and often under-perform in the highly-variable conditions of EWSNs.

\parspace\noindent Given the aforementioned challenges, this thesis tries to answer the following question 

\begin{center}
\begin{tabular}{|c|}
\hline
\\
\quad \emph{Is it  possible to efficiently communicate in Extreme Wireless Sensor Networks ?} \quad \\
\\
\hline
\end{tabular}
\end{center}

\parspace Our take is that the conditions of EWSNs require communication mechanisms to reduce their overhead as much as possible to free up the resources for actual data transmissions. 
This idea is distilled in the following four design principles.

\subpar{1) State-less.}
Due to the scale of EWSNs, nodes cannot rely on methods that are centralized or require up-to-date information from other nodes in the network, either far-away or in the direct neighborhood. In EWSNs failures are the norm, rather than the exception, leading to network partitions due to temporary node unreachability. Protocols should be resilient to these failures and operate 
independently from the node and the network states.

\subpar{2) Opportunistic.} 
Communication mechanisms should be opportunistic and exploit existing situations to their own advantage. Information should be extrapolated from passive observations (\cf \chapref{chapter:estreme}) instead of being actively polled from neighbors. 
Mechanisms should not be artificially orchestrated and they should rely as much as possible on emerging behaviors that araise spontaneously with minimal message overhead (\cf \chapref{chapter:staffetta}).

\subpar{3) Anti-fragile.} Communication mechanisms should embrace, rather than fight, the challenges of EWSNs to the point that they perform better in extreme conditions -- where other mechanisms show their fragility (\cf \chapref{chapter:sofa}). 

\subpar{4) Robust.} When scarce resources, such as bandwidth, saturate the performance of communication mechanisms must degrade gracefully without drastic disruptions. 
To this end, mechanisms should operate in a best-effort fashion, backing off whenever conditions become too harsh. 

\parspace 
Using the four aforementioned principles as guidelines, we present and evaluate three novel communication mechanisms for EWSNs. These mechanisms are able to operate over several hundreds of devices with minimal energy consumption and bandwidth overhead and show a remarkable resilience to network dynamics. 

\section{Thesis Contributions and Outline}
\label{sec:outline}
By using the crowd monitoring scenario as a reference, this thesis tackles the problem of communication in extremely dense and dynamic scenarios, making a step back towards the original Smartdust idea.

\paragraph{Medium Access Control and Data Dissemination -- \chapref{chapter:sofa}}
In this chapter we address the cornerstone problem of Extreme Wireless Sensor Networks: efficiently sharing the communication medium among hundreds of continuously changing neighbors. In EWSNs nodes are mobile, and up-to-date network information is limited and costly to obtain. 
To cope with these challenges we propose \emph{SOFA}, a novel medium access control (MAC) protocol that is specifically designed for EWSNs. SOFA does not require any neighborhood information (state-less principle) and performs better in dense rather than sparse networks (anti-fragile principle).
To minimize the delay and overhead of communication SOFA nodes opportunistically forward their information to the first available neighbor \ie the first to wake up while duty cycling its radio. The more neighbors, the shorter the rendezvous time, the lower the communication overhead (opportunistic principle).

The randomized neighbor-selection of SOFA provides is ideal for epidemic mechanisms, such as Gossiping, that are particularly suitable for dense and mobile networks. The combination of gossip mechanisms and SOFA is able to provide a method to process information in EWSNs  
with minimal energy consumption (robustness principle). 

\paragraph{Neighborhood Cardinality Estimation -- \chapref{chapter:estreme}}
This chapter addresses the problem of estimating the neighborhood cardinality in dense and dynamic networks, where traditional techniques based on neighbor discovery do not work. 
In these extreme conditions, the neighbor discovery process takes a lot of time and neighbors are simply too many to be discovered before the network conditions change.
To this end, we propose a mechanism called \emph{Estreme} that estimates, rather than counts, the number of devices in the neighborhood. 
Estreme models the performance of asynchronous duty-cycled MAC protocols, such as SOFA, with respect to the number of active neighbors and uses this model to estimate the neighborhood cardinality with minimal delay and overhead (opportunistic principle). 
%

Estreme improves existing techniques with faster and more accurate estimations and, for the first time, allows all nodes to perform this process simultaneously without interfering with each other. 
The latter characteristic
proves to be particularly useful for crowd-monitoring applications, where crowd managers want, for example, to continuously monitor the changes in the crowd density.

\paragraph{Opportunistic Data Collection -- \chapref{chapter:staffetta}}
This chapter explores the challenges of data collection in EWSNs. Traditional routing mechanisms for WSNs are based on rigid structures and cannot cope with the ever-changing topologies of EWSNs. 
In these dynamic conditions, the communication overhead required to maintain the routing structures increases up to the point that bandwidth saturates and mechanisms (mis)route data based on outdated and wrong information.
To cope with the aforementioned problems, in this chapter we propose a data collection mechanism called \emph{Staffetta}, whose routing structure is highly adaptable and is spontaneously formed from local observations and with minimal overhead (state-less principle).
Staffetta exploits the fact that in opportunistic mechanisms it is more probable and efficient to communicate with the neighbors that wake up more often \ie that are more active. 
In the presence of a fixed energy budget and a sink (which is always active) Staffetta spontaneously creates an activity gradient, where nodes closer to the sink are more active than others and, thus, opportunistically attract more data. 
Compared to existing data collection mechanisms, Staffetta shows higher delivery rates, lower energy consumption and shorter packet latencies. 

\paragraph{Crowd Monitoring in the Wild -- \chapref{chapter:nemo}}
Finally, this chapter explores the deployment challenges of a real EWSN composed of several hundreds of mobile motes. 
This study is motivated by the fact that many protocols for WSNs (extreme or not) are developed and tested under limited, controlled conditions. 
From high-level simulations to large and mobile testbeds. Even though these tools are highly valuable for the first-stage deployment of wireless systems, many researchers stop their evaluation at these controlled conditions, often missing the variety of conditions that only real-word scenarios exhibit. This is particularly true for monitoring applications and EWSNs.

This chapter reports our deployment experience running SOFA and Estreme as a mechanism to monitor the popularity of exhibits in NEMO, a modern, open-space science museum. Different from testbed and previous mobile experiments, the museum experiment exposed our mechanisms to drastic network changes, showing the importance of certain kinds of information, such as the unique node identifier, for the correct sensing and estimation of the crowd parameters. 

\vspace{2em} \noindent Chapters \ref{chapter:sofa}, \ref{chapter:estreme}, and \ref{chapter:staffetta} are based on the following papers:

\begin{itemize}

\item M. Cattani, M. Zuniga, M. Woehrle, K.G. Langendoen, \textit{SOFA: Communication in Extreme Wireless Sensor Networks}, in \textit{11th European Conference on Wireless Sensor Networks (EWSN)} (2014)

\item M. Cattani, M. Zuniga, A. Loukas, K.G. langendoen, \textit{Lightweight neighborhood cardinality estimation in dynamic wireless networks}, in \textit{13th ACM/IEEE International Conference on Information Processing in Sensor Networks (IPSN)} (2014)

\item M. Cattani, A. Loukas,  M. Zimmerling, M. Zuniga, K. Langendoen, \textit{Staffetta: Smart Duty-Cycling for Opportunistic Data Collection}, in \textit{14th ACM Conference on Embedded Networked Sensor Systems (SenSys)} (2016).

\end{itemize}

\chapter{Medium Access Control and Data Dissemination}
\label{chapter:sofa}
\blfootnote{Parts of this chapter have been published in EWSN'14, Oxford, United Kingdom~\cite{Cattani2014}.}

\epigraph{``It was logical to cultivate multiple options''}{Spock}



\dropcap{T}{he} protocol stack of Wireless Sensor Networks (WSNs) has been mainly designed and optimized for applications satisfying one or more of the following
conditions: (i) low traffic rates (a packet per node every few minutes), (ii) medium sized densities (tens of neighbors), and (iii) static topologies. 
In Extreme Wireless Sensor Networks (EWSNs) however, these relatively mild conditions do not hold and traditional protocol stacks simply collapse: thousands of mobile nodes with hundreds of neighbors that need to disseminate information at a relatively high rate (a packet per node every few seconds, instead of every few minutes). 

From a networking perspective, communication in these extreme scenarios poses three non-trivial technical challenges. 
First, similar to traditional WSNs, EWSNs also work with devices with limited energy resources and need to rely on radio duty-cycling techniques to save energy. 
Second, due to the network scale and node mobility, we cannot rely on methods that combine duty cycling techniques with central coordinators~\cite{Ferrari2012} or that require some level of synchronization between the wake up periods of a given node and its neighbors~\cite{Hoiydi2004,Tang:2011}. The system must be asynchronous and fully distributed. 
Third, due to their inefficient bandwidth utilization, traditional unicast and broadcast primitives -- which are asynchronous, distributed and built on top of duty cycling techniques~\cite{Buettner2006,Sun2008} -- simply collapse under the traffic-demands of EWSNs.

Henceforth, providing energy-efficient communication in EWSNs requires a careful evaluation of the following problem: in asynchronous duty cycling techniques, much of the bandwidth is wasted in coordinating the rendezvous of the (sleeping) nodes. In EWSNs, nodes need to reduce this overhead to free up the channel's bandwidth for the actual \emph{data} transmissions. 

To tackle this problem, we present \emph{SOFA} (Stop On First Ack), a communication stack for EWSNs composed of a medium access control (MAC) layer based on a novel bi-directional communication primitive that we call \emph{opportunistic anycast}. This primitive establishes a data exchange with the {\em first} neighbor to wake up. In this way, SOFA avoids the need for neighborhood discovery and minimizes the inefficient rendezvous time typical of asynchronous MAC protocols. 

By selecting opportunistically the next (random) neighbor to communicate with, SOFA provides an ideal building block for algorithms based on random sampling~\cite{Cattani2011} and gossip~\cite{Boyd2005,Sarwate2012,Shah2009}. The latter ones are particularly suitable to process information in large-scale distributed systems such as EWSNs and offer an alternative to the traditional protocol stack, that cannot operate in extreme network conditions. 

After exploring the related work in \secref{sec:sofa_relatedwork}, in \secref{sec:sofa_mechanims} we present the design of SOFA, a communication protocol that utilizes \emph{opportunistic anycast} to overcome the limitations of inefficient rendezvous mechanisms. To scale in EWSNs, SOFA combines the energy efficiency typical of low-power MAC protocols with the robustness and versatility of gossip-like communication.

In \secref{sec:sofa_implementation} we explain how to efficiently implement SOFA on low-cost sensor nodes \ie motes. Implementation is particularly important, since opportunistic mechanisms are known to suffer 
when too many neighbors are available to forward the same packet (acknowledgments collision and duplicate packets).

Finally, in \secref{sec:sofa_evaluation}, we extensively evaluate SOFA both in simulations and testbed experiments. Results show that SOFA can successfully deliver messages, regardless of mobility, in networks with densities of hundreds of nodes while maintaining the duty cycle at approximately 2\%.

\section{Related Work}
\label{sec:sofa_relatedwork}
The constrained energy resources of WSNs led to a first generation of protocols that traded bandwidth utilization for lower energy consumption. 
Such protocols are based on asynchronous node operations, which implies that senders need to wait for their receiver to wake up (\emph{rendezvous phase}) before sending their data. 
While in low power listening (LPL)~\cite{Buettner2006} nodes send a beaconing sequence until the receiver wakes up, in low power probing (LPP)~\cite{Dutta2010,Sun2008}, the sender waits for a wake-up beacon from the receiver (\cf \figref{fig:sofa_lpl_lpp}). 

\begin{figure}
	\centering
		\subfloat[Low Power Listening (LPL).]{ \label{fig:sofa_lpl} 
		\includegraphics[width=0.55 
		\textwidth]{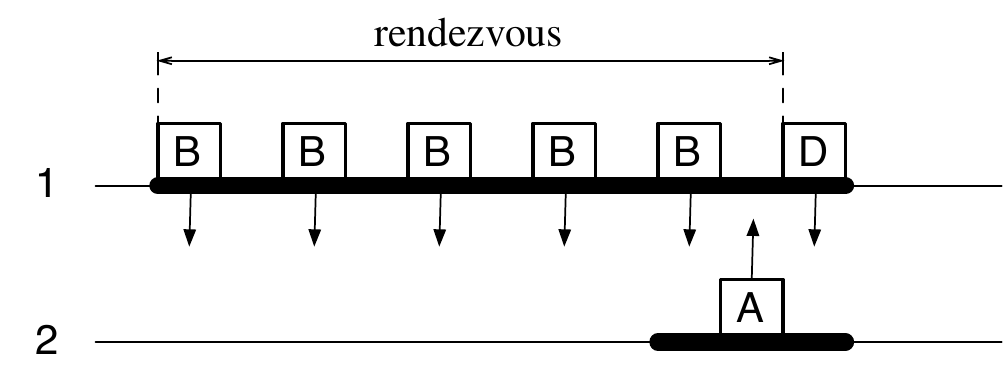} } \\
		\subfloat[Low Power Probing (LPP).]{ \label{fig:sofs_lpp} 
		\includegraphics[width=0.55 
		\textwidth]{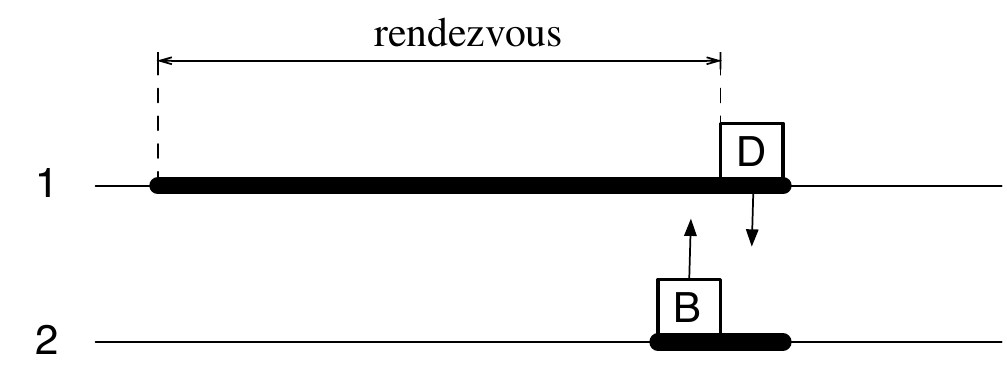} } 
	\caption{Asynchronous communication over a duty cycled mechanism. In LPL, the initiator (node 1) sends a strobe of beacons (B) until the destination wakes up and send an acknowledgment (A). In LPP, instead, the initiator simply waits for a wake-up beacon (B) from its destination. For both mechanisms, data (D) is sent after the rendezvous phase.} 
	\label{fig:sofa_lpl_lpp}
\end{figure}

Despite their high overhead -- in asynchronous mechanisms, the expected rendezvous takes half of the wake-up period~\cite{Buettner2006,Moss:2008} -- these protocols are good enough for traditional WSNs, where the data rate is low (few senders, many receivers) and most of the bandwidth can be used for coordination purposes, rather than data transmission. With a receiver's wake-up period $T$ = 100\,ms, for example, the rendezvous phase will last on average 50\,ms, a lot more than the few millisecond typically needed in WSNs for the data communication. 
In the best scenario, this will allow no more than 20 nodes to transmit their data every minute. Beyond this threshold, a typical case for EWSNs, the channel saturates. 

The WSNs community is well aware of the limitations of the first generation of low-power protocols, which are significant even for the milder condition of non-extreme WSNs. 
To reduce the overhead of the rendezvous phase, protocols such as WiseMAC~\cite{Hoiydi2004} and ContikiMAC~\cite{Dunkels2011} keep track of the wake-up periods of their neighbors and use this information to wake up just a few instants before the intended receiver (phase lock). 
This type of protocols works very well on networks with stable topologies, where the overhead of estimating the neighbors' waking periods is seldom done. 
The highly mobile scenarios of EWSNs, however, reduce significantly the efficacy of these methods. 

Higher efficiency can be achieved with global synchronization \ie having all nodes to (briefly) wake up at the same moments in time. 
Even though these mechanisms limit the available bandwidth to a small fraction of time, they practically eliminates the ``rendezvous'' problem and therefore are widely present in WSNs literature. A first family of protocols based on time slots (TDMA)~\cite{Hoesel2004,VanDam2003,Ye2002} proved to achieve high throughputs, but struggle when topologies are too dynamic and slot must often be re-allocated. On the other hands, synchronized protocols based on random access (CDMA)~\cite{Hoiydi2004,Halkes2007} are more robust to dynamics but only work in sparse networks, due to their inefficient bandwidth utilization.

Finally, a new, growing family of protocols exploits physical phenomena of the radio such as the capture effect and the constructive interference to avoid collisions and let nodes to efficiently flood the network.
These mechanisms~\cite{Ferrari2012,Landsiedel2013} proved to be agnostic to mobility and resilient to interference, while consuming low amounts of energy. Unfortunately, the radio phenomena they are based upon are very susceptible to large networks and high densities, and significantly depend on the involved hardware communication layer.

%

\section{Problem Statement}
Several notable studies have identified the important role that opportunistic communication has on improving the performance of low-power WSNs. In essence, the key idea of these studies is the following: instead of waiting for a pre-defined node to wake up, opportunistically transmit to who is available now. In ORW~\cite{Landsiedel2012}, the authors propose to use anycast communication to improve the performance of CTP~\cite{Gnawali2009}, the de-facto data collection protocol in WSNs. In Backcast~\cite{Dutta2008}, the authors show that by using anycast communication, the capture effect can be leveraged to increase the probability of receiving an ack from a viable receiver. While SOFA is motivated and inspired by these studies, there is an important difference. We do not use opportunistic anycast to improve the performance of traditional network protocols under mild conditions, but to enable a new communication protocol that scales to EWSNs. 

\section{Mechanism}
\label{sec:sofa_mechanims}
The design of SOFA follows two main goals: reduce the  inefficient rendezvous phase of low-power MAC protocols, and guarantee that the dissemination of data is performed in an efficient and reliable way. To satisfy these goals, SOFA implements an efficient communication primitive, called \emph{opportunistic anycast}, that minimizes the rendezvous overhead and natively supports Gossip, a robust data dissemination technique created for large-scale networks.

Before proceeding it is important to remark that SOFA focuses on maximizing the messages exchanged locally among neighbors (1-hop), leaving the multi-hop dissemination and aggregation of information to the upper layer \eg Gossip. 

\begin{figure}
	\begin{center}
		\subfloat[Normal conditions]{ \label{fig:sofa_mechanism_normal} 
		\includegraphics[width=0.55 
		\textwidth]{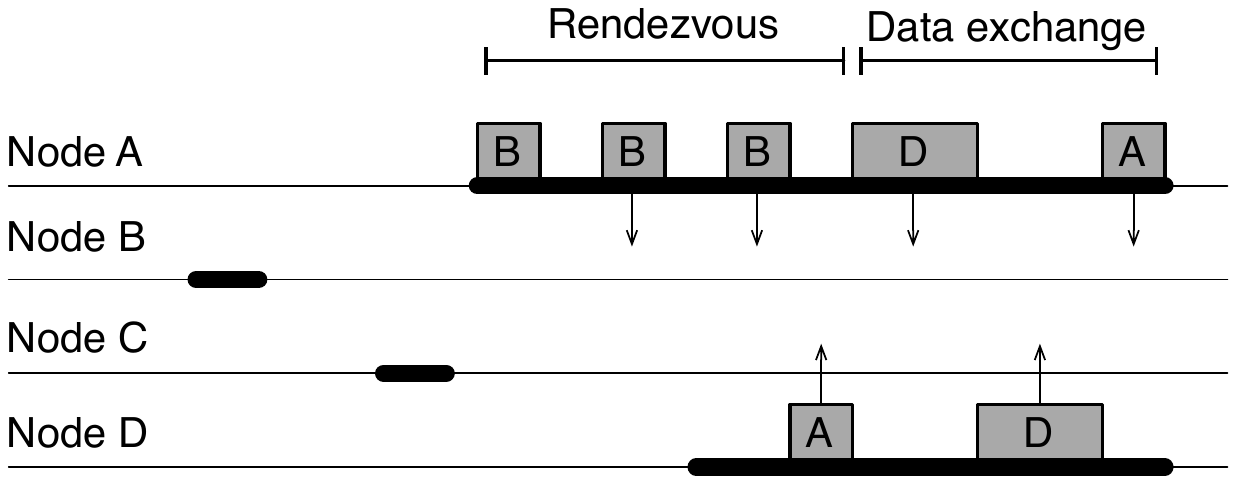} }\\ 
		\subfloat[Collision resolution (cf. Section~\ref{sec:sofa_implementation})]{ \label{fig:sofa_mechanism_collision} 
		\includegraphics[width=0.55 
		\textwidth]{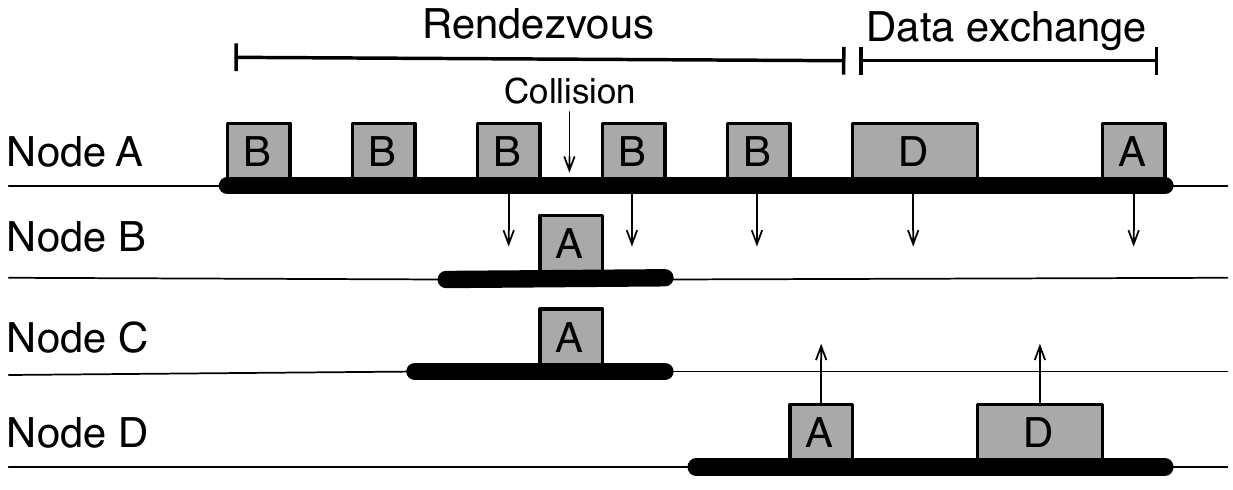} } 
	\caption{SOFA mechanism} 
	\label{fig:sofa_mechanism}
	\end{center}
\end{figure}
\subsection{The basic idea}
The general idea of SOFA can be applied to any asynchronous duty cycled MAC protocol. We focus our analysis on the LPL version of SOFA. The reason is that this implementation performs better in extreme densities, especially in terms of reliability. Nevertheless, in~\cite{Cattani2014}, we provide also insights on the LPP implementation.
 
\vspace{1em}\paragraph{Rendezvous phase.} In traditional LPL protocols~\cite{Buettner2006}, when a sender wakes up, it transmits a series of short packets --called beacons-- and waits for the receiver to wake up. When the intended receiver wakes up, it hears the latest beacon and sends an acknowledgement back. SOFA follows a similar mechanism: the sender, node~A in Figure~\ref{fig:sofa_mechanism_normal}, also broadcasts a series of beacons but only waits until \emph{any} neighbor wakes up. The main difference between the two mechanisms lays in the selection of the destination. While in LPL the destination is chosen by the upper layers in the stack, in SOFA the MAC protocol opportunistically chooses the destination that is most efficient to reach: the first neighbor to wake up. If nodes~B or C were to be chosen, node~A would need to send beacons (jam the channel) until these nodes wake up again. By sending its data to the first neighbor that wakes up (node~D), SOFA reduces the nodes' rendezvous time, allowing low-power MAC protocols to efficiently scale to EWSNs. We call this communication primitive \emph{opportunistic anycast}.

\paragraph{Data exchange phase.} Selecting the first (random) neighbor that wakes up as \emph{the} destination, has a strong relation with a family of randomized networking algorithms called gossiping~\cite{Boyd2005,Shah2009}. \emph{Gossip algorithms do not aim for traditional end-to-end communication} (where routes are formed and maintained ahead of time), instead they exchange information randomly with a neighbor (or subset of neighbors). 
The relation between SOFA and Gossiping is fundamental for the practical impact of our work. Unicast and broadcast primitives allow the development of a wide-range of algorithms and applications in WSNs such as routing, data collection, querying and clustering (to name a few). Unfortunately, under the stringent characteristics of EWSNs these basic primitives collapse. Our aim is to provide an alternative communication protocol for extreme conditions. We hope that this effort will allow the community to use SOFA as a basic building block for other gossip applications such as routing in delay tolerant networks~\cite{spyropoulos2005} and landscaping of data~\cite{Loukas2013}.

We will now describe the design of the three key characteristics of SOFA: \emph{short rendezvous phase, reliable push-pull data exchange,} and \emph{random peer sampling}. The design of a short rendezvous phase was influenced by the limitations of asynchronous duty cycled protocols. The push-pull data exchange and the random peer sampling were designed to satisfy the needs of general gossiping applications.

\subsection{Short rendezvous phase}
\label{sec:sofa_shortrendezvous} 
Stopping at the first encounter, instead of searching for a specific destination, has two important consequences on the performance of SOFA. First, and most importantly, it eliminates the main limitation that LPL has under extreme networking conditions: channel inefficiency. By drastically reducing the length of the rendezvous phase, the channel no longer gets easily saturated by medium/high traffic demands or medium/high node densities. A short rendezvous phase also reduces the duty cycle of the radio, which in turn, increases the lifetime of the node. Second, increasing the network's density (up to a point) improves the performance of SOFA. 
With more neighbors, the probability that one will soon wake up is higher.

To quantify the benefits of a short rendezvous phase, we present a simple model that captures the expected duration of the rendezvous phase as a function of the neighborhood size and the wake-up period (the time elapsed between two consecutive wake-ups of a node). Since nodes wake up periodically in a completely desynchronized way, we can model the inter-arrival times of the nodes' wake-ups as a set of independent random variables with uniform distribution. The first order statistic $U_1$ can then be used to estimate the length of the rendezvous phase. The expected length $E[U_1]$ of $N$ uniform random variables (neighbors) is given by the Beta random variable with parameters $\alpha$=1 and $\beta$=$N$ $$ U_1\sim B(1,N), \quad E[U_1]=\frac{1}{1+N} $$ 
Given a wake-up period $W$ and a neighborhood size $N$, the expected length of the rendezvous phase of SOFA can be computed as follows: 
\begin{align}
	\label{eq:sofa_rend_len}
	 E[s] = \frac{W}{1+N} 
\end{align}
Considering that the expected rendezvous time of unicast $E[u]$ is $W/2$~\cite{Buettner2006,Moss:2008} and that the time spent for the data exchange phase is negligible compared to the rendezvous (see Figure~\ref{fig:sofa_rv_testbed}), we can model the gain $G$ of SOFA compared to unicast as the following:
$$ G = \frac{E[s]}{E[u]} = \frac{W}{1+N} \frac{2}{W}= \frac{2}{1+N}$$
For a node with 99 neighbors, this means that the expected rendezvous times of SOFA is 50 times smaller than the one using unicast.  Figure~\ref{fig:sofa_rv_testbed} compares the expected length of the rendezvous phase using the proposed model with values observed in testbed experiments. In this example, $W$=1\,s and the neighborhood size ranges from 5 to 100 nodes.  The slight underestimation is mainly due to collisions, which delay the detection of the first node by the sender.
 

It is important to highlight three key points about the impact of density on SOFA. First, since the performance of SOFA is not significantly affected by changes in medium/high densities, SOFA does not need to adapt to this type of density fluctuations in mobile networks. 
Second, to reduce the duration of the rendezvous phase in low density networks, the wake-up period can be reduced (at the cost of increasing the duty cycle). This trade-off is studied in more detail in Section~\ref{sec:sofa_exploreparameters}.
Finally, in case the network switches from an extreme condition to a normal one (low density), the protocol stack can switch to the use of standard broadcast and unicast messages. To detect the density of the network, SOFA can exploit the tight correlation between the number of neighbors and the expected length of the rendezvous phase (\cf \chapref{chapter:estreme}). 

\begin{figure}
	\begin{center}
		\subfloat[Rendezvous time of SOFA compared to the Beta model]{ \label{fig:sofa_rv_testbed} 
		\includegraphics[width=0.35
		\textwidth]{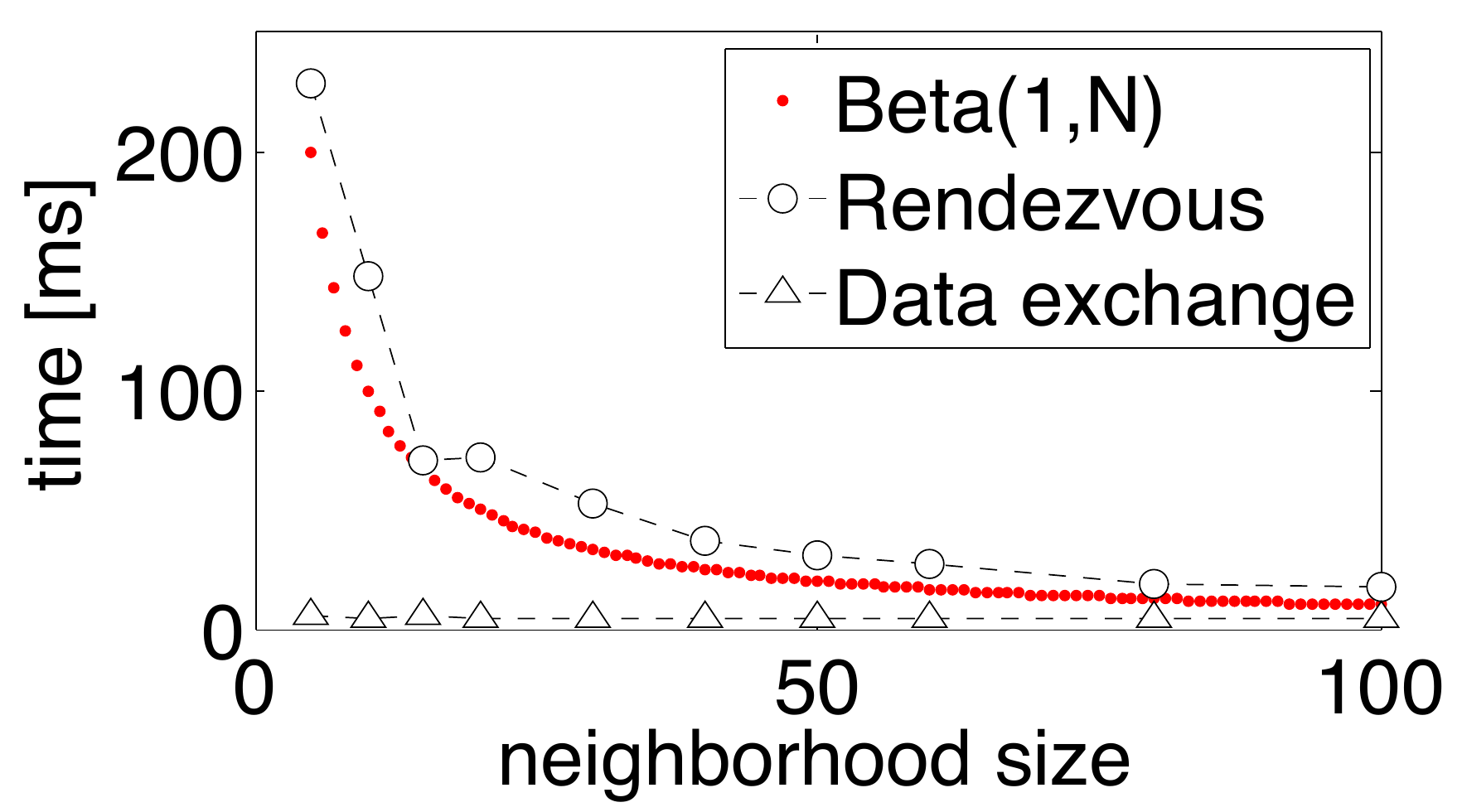} 
		} 
		\hspace{10px}
		\subfloat[Node selection. The score shows how many times each node is selected (cf. Section~\ref{sec:sofa_random})]{ \label{fig:sofa_random} 
		\includegraphics[width=0.35
		\textwidth]{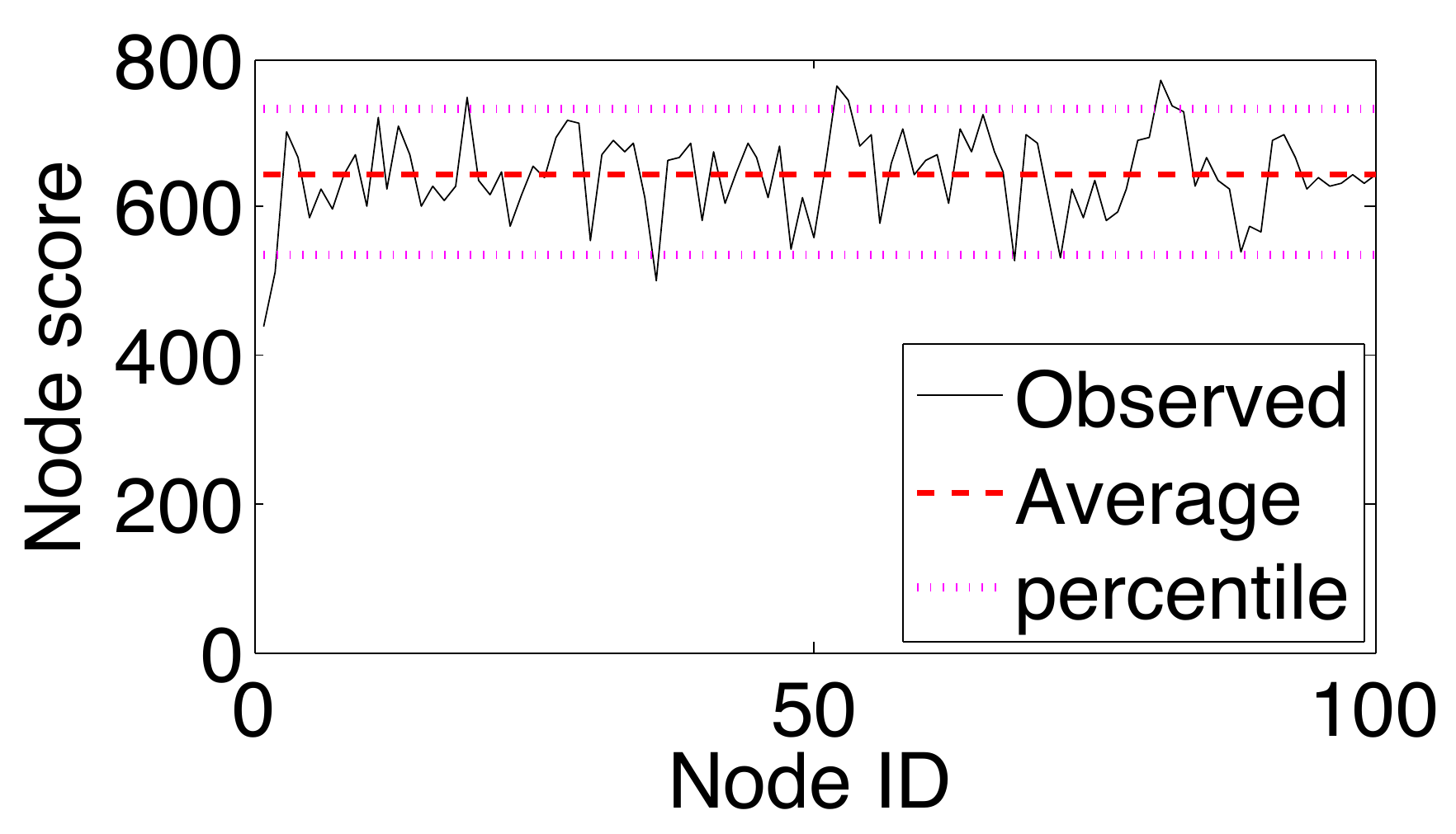} 
		} 
	\end{center}
	\caption{SOFA rendezvous phase (testbed results)} 
	\vspace{-1em}
\end{figure}

\subsection{Reliable push-pull data exchange} 

To exchange data efficiently and reliably, SOFA has two phases: a 2-way rendezvous phase and a 3-way data exchange phase. These phases are shown in Figure~\ref{fig:sofa_mechanism_normal} and their design is driven by two factors: (i) the high relative cost of the rendezvous phase compared to the data-exchange phase, and (ii) the effect of unreliable and asymmetric links on the \emph{constant mass} requirement of gossip's \emph{data-aggregation} algorithms. The effects
are explained below.

\vspace{2mm}\noindent\textbf{Using a push-pull exchange amortizes the high relative cost of the rendezvous phase.} Gossiping algorithms have two types of data communication: push and push-pull. In the push method, only the sender transfers information to the receiver(s). In the push-pull method, two nodes exchange their information. Compared to the latter, push-pull allows gossip algorithms to compute more complex aggregates and converge faster~\cite{demers1987}.
Nevertheless, from our perspective what matters most is the relative cost of the rendezvous phase. Given that the cost of this phase is high compared to the data exchange phase, it is beneficial to exchange as much information as possible once two nodes rendezvous. For this reason, SOFA implements a push-pull approach. A push approach would double the overhead of the rendezvous phase, making SOFA less resilient to extreme conditions.

\vspace{2mm}\noindent\textbf{The 2-way rendezvous phase filters out asymmetric and unreliable links, while the 3-way handshake reduces the probability of losing ``gossip mass".} Losing messages has a particularly detrimental effect on the accuracy of gossiping. For example, when two nodes agree to swap half their value (mass), the loss of a message results in a too low value on the node that missed it, which influences the outcome of all the other nodes as the error propagates in consecutive rounds. The conservation of mass is, thus, an important issue in gossiping algorithms. From a design perspective, this means that we need to consider two important points. First, nodes should avoid the use of unreliable and asymmetric links (which have been shown to be commonplace in WSNs~\cite{Zamalloa2007}). Second, if a packet is lost, we have to reduce the chances of losing mass. 

The 2-way rendezvous phase reduces the chance of using unreliable and asymmetric links. Several studies have shown that unreliable links are usually asymmetric (and vice versa)~\cite{Zamalloa2007}. On the other hand, bidirectional links are usually characterized by being more reliable. By performing a 2-way exchange before transmitting the actual data, SOFA increases the chances of using a reliable link. It is important to remark that some LPL methods do not follow this approach~\cite{Moss:2008}. These methods piggyback the data on the  beacons and acknowledgement packets, that is, they transmit information without checking first if the link is reliable and symmetric or not. 

The 3-way data exchange phase reduces the chance of losing mass in the event that a packet is lost. In spite of our efforts to filter out unreliable and asymmetric links during the rendezvous phase, the high temporal variability of low-power links can cause a reliable link to become momentarily unreliable. In the event that a packet is lost, the worst situation for two nodes is to disagree on the outcome of an event. That is, two nodes should either agree that the message exchange was successful (both nodes received the mass) or agree that no message was exchanged (aborting the exchange). If only one node deems the event as successful, then the mass of the other is lost. The latter situation happens when the last packet of an \emph{n}-way handshake is lost. This (dis)agreement problem is discussed in depth in~\cite{Boano2012}, and the authors prove that in WSNs the best strategy to reduce disagreements is to use a 3-way handshake. 

\subsection{Random Peer sampling}
\label{sec:sofa_random} 
Most gossip algorithms rely on the selection of a {\em random} neighbor (or subset of neighbors) at each round. Having a good random selection leads to a faster convergence. To ensure a proper random selection, SOFA introduces random values to the wake-up periods of each node. For a wake-up period of $W$ seconds, nodes wake up uniformly at random between [0.5$W$, 1.5$W$].

To validate the effectiveness of our approach, we performed an experiment on a 100-node testbed. For 10 minutes nodes exchange messages and count the number of times they are selected by their neighbors (their \emph{score}). Figure~\ref{fig:sofa_random} shows that the distribution of the scores is close to uniform, with the [5, 95] percentiles close to the average value. It is important to remark that this evaluation was performed on a static testbed. Mobility would further randomize the selection of neighbors, facilitating the dissemination of data, and drastically reducing the convergence time of Gossip~\cite{Sarwate2012}. 

\section{Implementation}
\label{sec:sofa_implementation}

We implemented SOFA on the Contiki OS based on X-MAC~\cite{Buettner2006}. Nodes were configured to wake up every second for 10 ms. If a beacon is received within this 10 ms period, the node sends an acknowledgement and starts the data exchange phase. Otherwise, the node goes back to sleep. Notice that these parameters set a minimum duty cycle of 1\,\%, hence, any extra activity beyond this point is part of the overhead caused by the rendezvous and data exchange phases. Below we describe the implementation of the most important features of SOFA.

\paragraph{Transmit back-off.} In traditional MAC protocols, before sending a packet, a transmitter first checks the signal level of the channel (CCA check) to see if there is any activity. If no activity is detected the packet is sent. In SOFA, we do not perform a CCA check. Instead, a potential sender listens to the channel for 10\,ms acting, practically, as a receiver. If after this period no packet is detected, the node starts the rendezvous phase. If the node detects a packet that is part of an on-going data exchange, it goes back to sleep (collision avoidance). However, if the detected packet is a beacon, the node changes its role from sender to receiver.  By performing a transmit back-off instead of a CCA check, \emph{SOFA transforms a possible collision between two senders into a successful message exchange with a very low rendezvous cost}. 

\paragraph{Collision avoidance.} One of the key challenges of operating under extreme density conditions is the higher likelihood of collisions due to higher traffic demands. SOFA follows a simple guideline to reduce the frequency of collisions: if a sender detects a packet loss --for instance, by not receiving an ack--, instead of attempting a retransmission, the node goes back to sleep. This \emph{conservative approach} reduces the traffic in highly dense networks. The main caveat of this approach is when the lost packet is the last data ack. In this case, the two parties will disagree on the data delivery, causing an information (\emph{mass}) loss. Fortunately, our testbed results show that this is not a frequent event.

There is a collision event that is not avoided by the above mentioned approach and has a higher probability of occurrence in SOFA. When two or more active receivers detect a beacon, their ACKs are likely to collide (cf. nodes B and C in Figure~\ref{fig:sofa_mechanism_collision}). The sender will receive neither of the ACKs and will consequently continue transmitting beacons. Upon receiving a subsequent beacon (not the expected data packet), the two colliding receivers infer that a collision has occurred and both will go back to sleep. The first node to wake up after the collision (node D) will acknowledge the beacon and exchanges its data. Finally, randomizing the wake up periods of nodes helps in reducing the chances that this type of collisions occurs repeatedly among the same couples of nodes.  

\paragraph{Packet addressing.} SOFA uses two main types of data packets. For the rendezvous phase, the beacons have no destination address, any node can receive and process the information. For the data exchange phase, the packets contain the destination address of the involved parties. The beacon packets are as small as possible (IEEE 802.15.4 header + 1 byte to define the packet type and 1 byte when addressing is needed).

\section{Evaluation}
\label{sec:sofa_evaluation}

To evaluate the effectiveness of SOFA we ran an extensive set of experiments and simulations. Our testbed has 108 nodes installed above the ceiling tiles of an office building. The wireless nodes are equipped with a MSP430 micro-controller and a CC1101 radio chip. To reach the highest possible neighborhood size, we set the transmission power to +10\,dBm. With these settings, the network is almost a clique. For our simulations we used Cooja, the standard simulator for Contiki. We tested network densities of up to 450 nodes and different mobility patterns. Simulations beyond this density value are not possible with normal cluster computing facilities. 
For both experiments and simulations, the baseline scenario was configured to have a wake up period $W$=1\,s and a transmission period $T$=2\,s. That is, nodes wake up every second to act as receivers and every two seconds to act as senders. Considering that nodes listen for packet activity at each wake-up for 10\,ms, the baseline duty-cycle is $\approx$ 1\,\%. Any extra consumption beyond 1\,\% is caused by SOFA. The evaluation results presented in this section consider also other values for $W$ and $T$, but unless stated otherwise the experiments are carried out using the baseline parameters. The results are averaged over 20 runs of 10 minutes each.

\subsection{Performance metrics}
The evaluation of SOFA focuses on three key areas: energy consumption, bandwidth utilization and \emph{mass} conservation. To capture the performance of SOFA in these areas, we utilize the following metrics:  

\paragraph{Duty cycle.} The percentage of time that the radio is active. Duty-cycle is a widely utilized proxy for energy consumption in WSNs because radio activity accounts for most of the energy consumption in WSNs nodes.

\paragraph{Exchange rate.} The number of successful data-exchanges (3-way handshakes) in a second. This is a per-node metric. If, instead, we count the total number of data exchanges in a second over the entire network we refer to the \emph{global exchange rate}.

\paragraph{Mass delivery ratio.} The percentage of times that the data-exchange phase ends up without any information loss. Recall that if the ack of the data phase is lost, the receiver deems the exchange as successful, but the sender deems the exchange as unsuccessful and ignores the previously received packet (mass loss). The \emph{mass delivery ratio} is a metric focused on evaluating the viability of SOFA as a basic communication primitive to gossip algorithms. 

\begin{figure}
\begin{center}
		\subfloat{  
		\includegraphics[width=0.35 
		\textwidth]{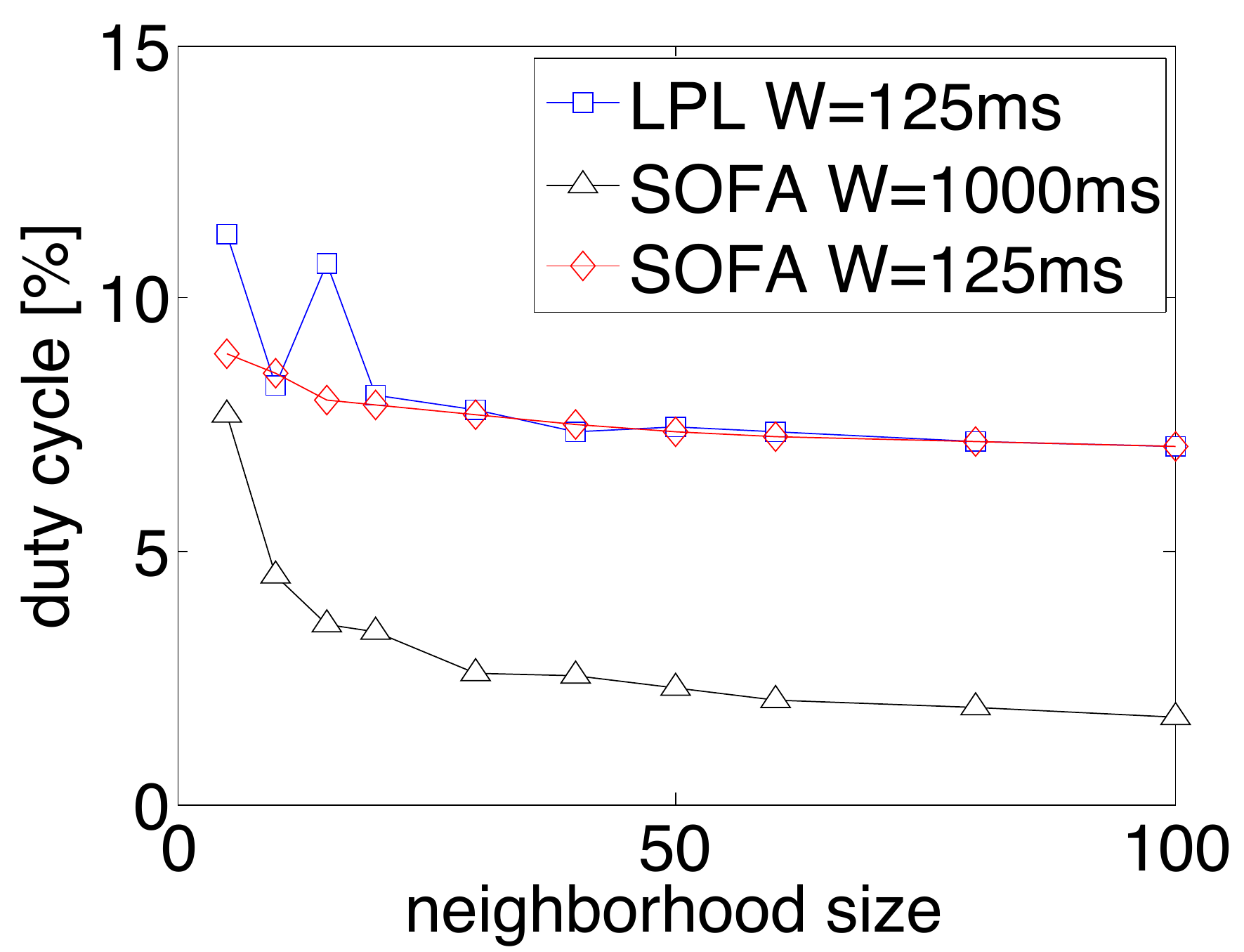} } 
		\hspace{10px}
		\subfloat{
		\includegraphics[width=0.35
		\textwidth]{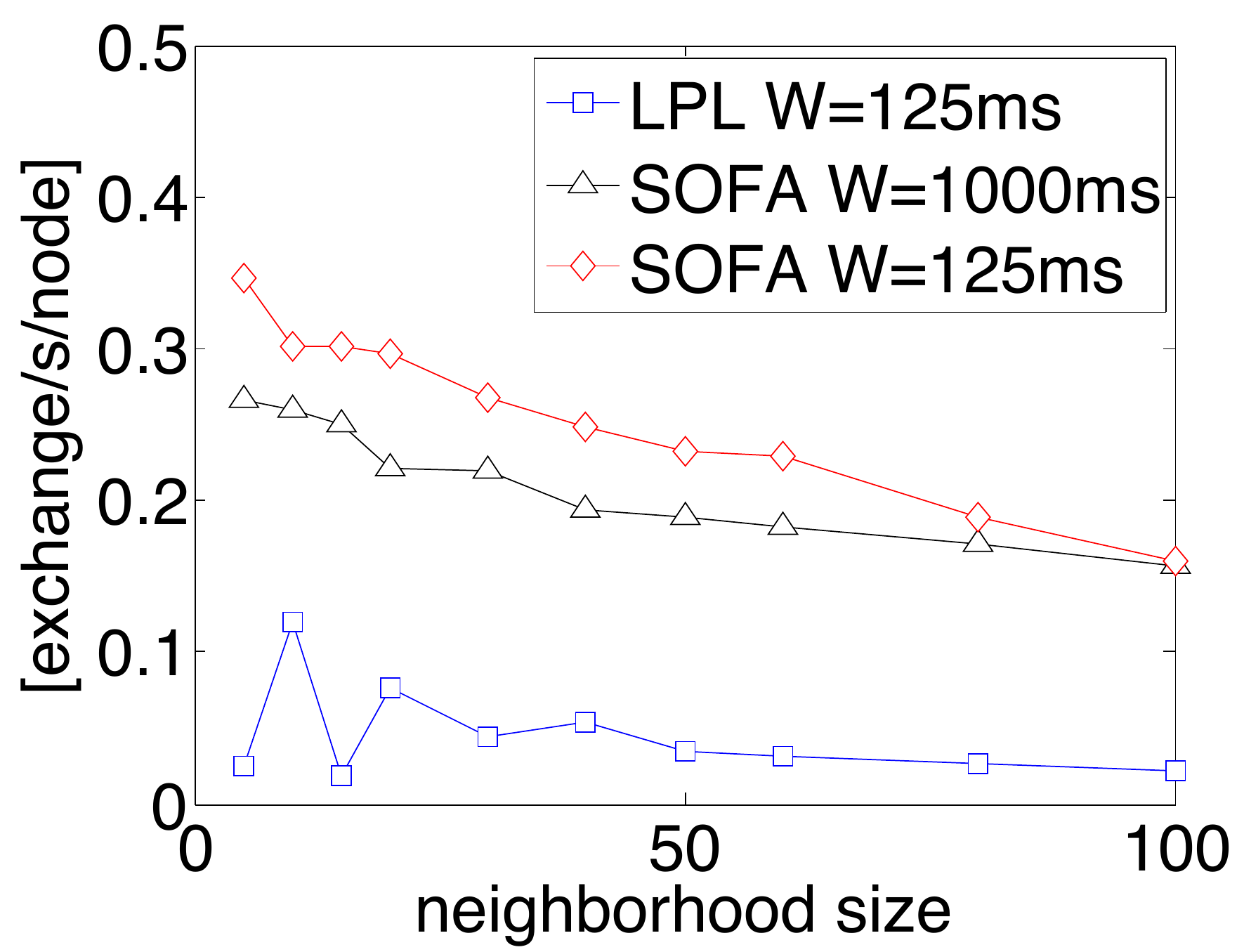} } 
\end{center} 
\caption{SOFA compared to LPL (testbed results).}%
\label{fig:sofa_sota}
\end{figure}%
     
\subsection{Results}
	 
Previously in this paper, we argued that traditional low-power methods collapse under the stress imposed by extreme networking conditions. This subsection quantifies this claim. We compare SOFA with the standard Contiki implementation of LPL (X-MAC) on our testbed. To provide a fair comparison, LPL chooses a random neighbor from a pre-computed list of destinations at every transmission request. That is, we do not enforce on LPL the necessary neighbor discovery process that would be needed to obtain the destination address (SOFA does not need an address to bootstrap the communication). 

Figure~\ref{fig:sofa_sota} compares the duty cycle and the exchange rate of SOFA and LPL in our testbed. For LPL, the evaluation shows only the result for $W$=125\,$ms$ because LPL collapses with the baseline $W$=1\,$s$. This collapsing occurs because, with $W$=1\,$s$, the rendezvous phase of LPL requires on average 0.5\,$s$. Hence, 5 nodes require on average a 2.5-seconds window to transmit their data, but the transmission period is 2\,$s$, which leads to channel saturation. Comparing the best parameter for LPL ($W$=125\,$ms$) with the best parameter for SOFA ($W$=1\,$s$) shows that SOFA widely outperforms LPL. For most neighborhood sizes (30 and above), SOFA uses four times less energy and delivers five times more packets for the same $T$.

It is important to remark that SOFA is not a substitute for traditional low power methods, as they aim at providing different services. SOFA cannot provide several of the functionalities required by applications relying on unicast and broadcast primitives. Most WSNs applications are designed for data gathering applications sending a few packets per minute. In these scenarios, the state-of-the-art in WSNs research performs remarkably well. The aim of our comparison is to highlight that traditional methods were not designed to operate under extreme conditions neither to efficiently support Gossip applications. We will now analyze the performance of SOFA based on different parameters and scenarios.

\begin{figure}
	\begin{center}
		\subfloat[Duty cycle, $W$=1000ms]{ \label{fig:sofa_dc_different_rates_1hz} 
		\includegraphics[width=0.35 
		\textwidth]{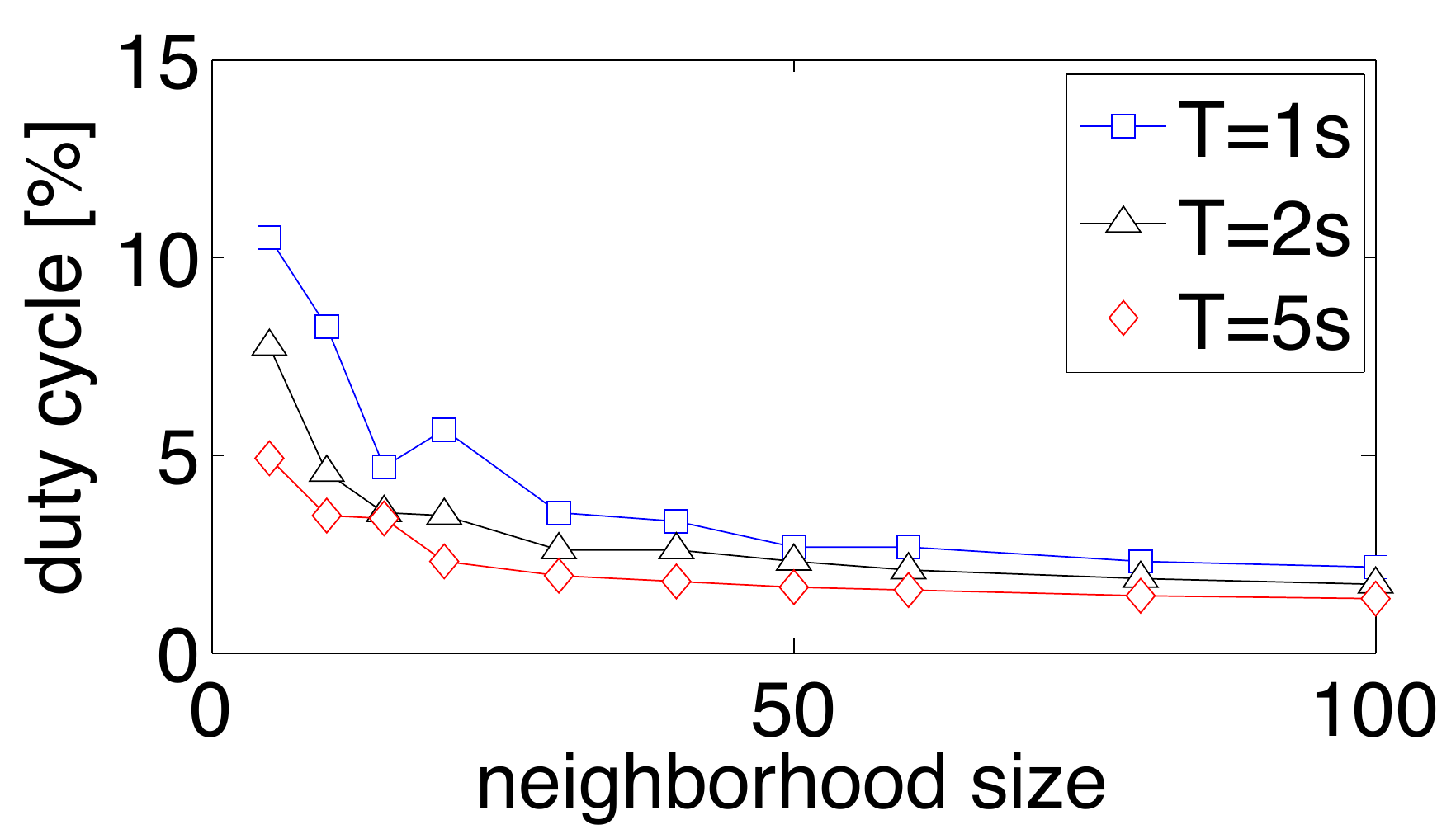} }
		\hspace{10px} 		
		\subfloat[Duty cycle, $W$=125ms]{ \label{fig:sofa_dc_different_rates_8hz} 
		\includegraphics[width=0.35 
		\textwidth]{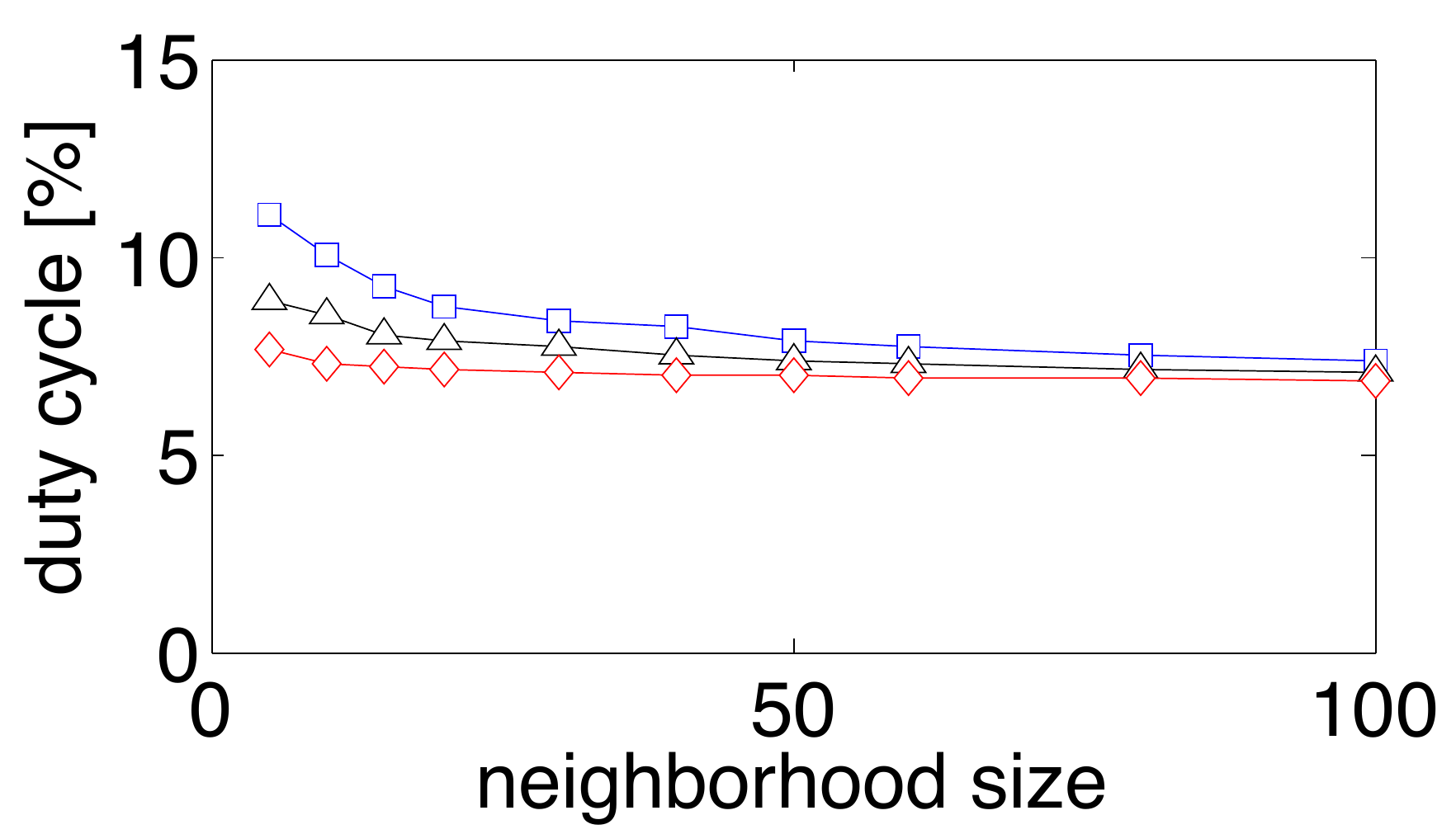} }\\ 
		\subfloat[Normalized Exchange rate, $W$=1000ms]{ \label{fig:sofa_tp_different_rates_1hz} 
		\includegraphics[width=0.35 
		\textwidth]{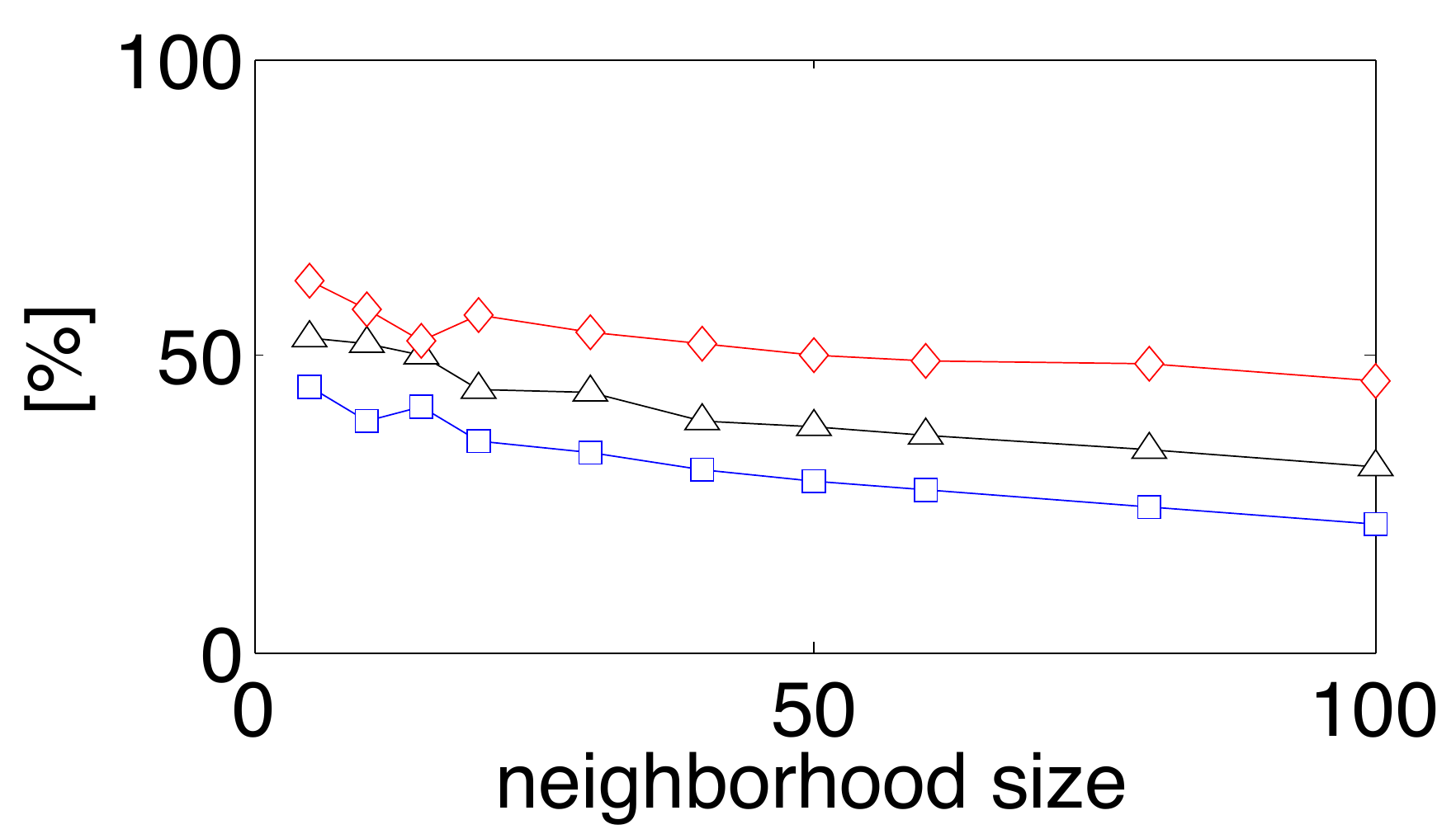} }
		\hspace{10px}
		\subfloat[Normalized Exchange rate, $W$=125ms]{ \label{fig:sofa_tp_different_rates_8hz} 
		\includegraphics[width=0.35 
		\textwidth]{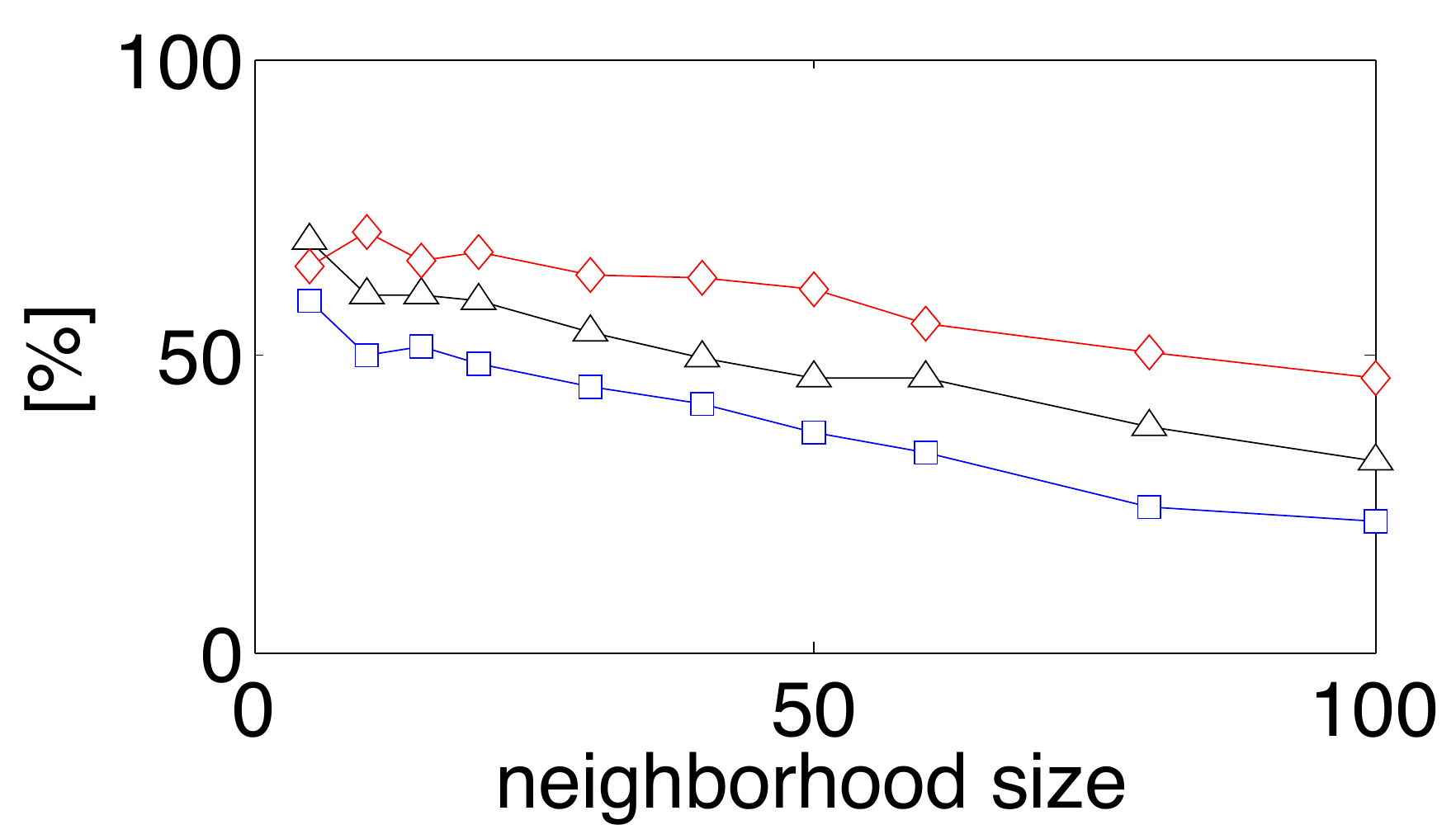} 
		}
	\end{center}
	\caption{Testbed performance for different wake-up times $W$ and transmission periods $T$. Note that the \emph{exchange rate} is normalized to $T$.} \label{fig:sofa_different_rates} 
\end{figure}

\subsection{Exploring SOFA parameters}
\label{sec:sofa_exploreparameters}
SOFA is a simple protocol with only two parameters available for fine-tuning: the wake-up period $W$ and the transmission period $T$. We now evaluate the performance of SOFA as a function of these parameters. The results of this subsection are all based on testbed experiments. Figure~\ref{fig:sofa_different_rates} shows the performance of SOFA for two  different wake-up periods (125 and 1000 ms), and for three different transmission periods (1, 2 and 5 seconds).

\vspace{1em}\paragraph{The impact of the transmission period T.} Let us start by analyzing the impact of $T$ on the duty cycle. Figure~\ref{fig:sofa_dc_different_rates_1hz} shows two important trends. First, beyond a certain neighborhood size ($\approx$\,30), $T$ does not have a significant impact on the duty cycle. Decreasing the transmission period certainly increases the duty cycle of the node, but not by much. Second, in low/medium dense networks (below 30 neighbors), increasing $T$ has a more significant effect on the duty cycle, but it is still a sub-linear relation. An increment of $T$ by a factor of five, increases the duty cycle by only a factor of two. The reason for the difference in duty cycle between low/medium and high density networks, is that at lower densities, SOFA spends more time on the rendezvous phase. This implies a higher overhead at each transmission attempt. Conversely, increasing the density increases the likelihood of finding a receiver sooner.

Note that, thanks to the \emph{transmit back-off mechanism} (which changes the role of senders to receivers to reduce collisions), increasing the transmission rate decreases the length of the rendezvous phase. With nodes sending data more often, the probability that two senders are active at the same time is higher. While in a normal MAC protocol this would lead to collisions, in SOFA it translates into an efficient message exchange (the rendezvous time is minimal) among the two senders.
As for the impact of $T$ on the exchange rate, SOFA behaves as most protocols do when they work under high traffic demands: the higher the traffic rate, the more saturated the channel, and the lower the probability to exchange information. This trend is observed in Figure~\ref{fig:sofa_tp_different_rates_1hz}. It is important to notice, however, that the exchange rate decreases in a gentle manner.

\paragraph{The impact of the wake-up period W.}  Intuitively, reducing the wake-up period should reduce the rendezvous time (because nodes wake up more frequently), which in turn should free up bandwidth and allow a higher exchange rate. However, the trade-off for a more efficient use of bandwidth would be a higher duty cycle. Figures~\ref{fig:sofa_dc_different_rates_8hz} and \ref{fig:sofa_tp_different_rates_8hz} show the performance of SOFA with a wake-up period $W$=125\,ms. With this value, the baseline duty cycle is 8\,\%. The figures show that reducing $W$ does increase the relative exchange rate, but mainly on low/medium dense networks (by $\approx$\,50\%). Therefore, it is possible to improve the performance of SOFA in low density networks at the cost of increasing the energy consumption. For high density networks, however, we have a similar throughput but with a duty cycle that is four times higher.

\subsection{SOFA under extreme densities and in mobile scenarios} 
\label{sec:sofa_extremedensities}

The previous testbed results show that SOFA performs well in densities as high as 100 neighbors. However, from a practical perspective it is important to determine (i) the saturation point of SOFA, i.e., how many nodes SOFA can handle before saturating the bandwidth, and (ii) the impact of mobility. Unfortunately, there are no large-scale mobile testbeds available in the community, and hence, we rely on the Cooja simulator to investigate these aspects.
\begin{figure}
		\begin{center}
			\subfloat[Duty cycle]{ \label{fig:sofa_dc_cooja_vs_testbed} 
			\includegraphics[width=0.35 
			\textwidth]{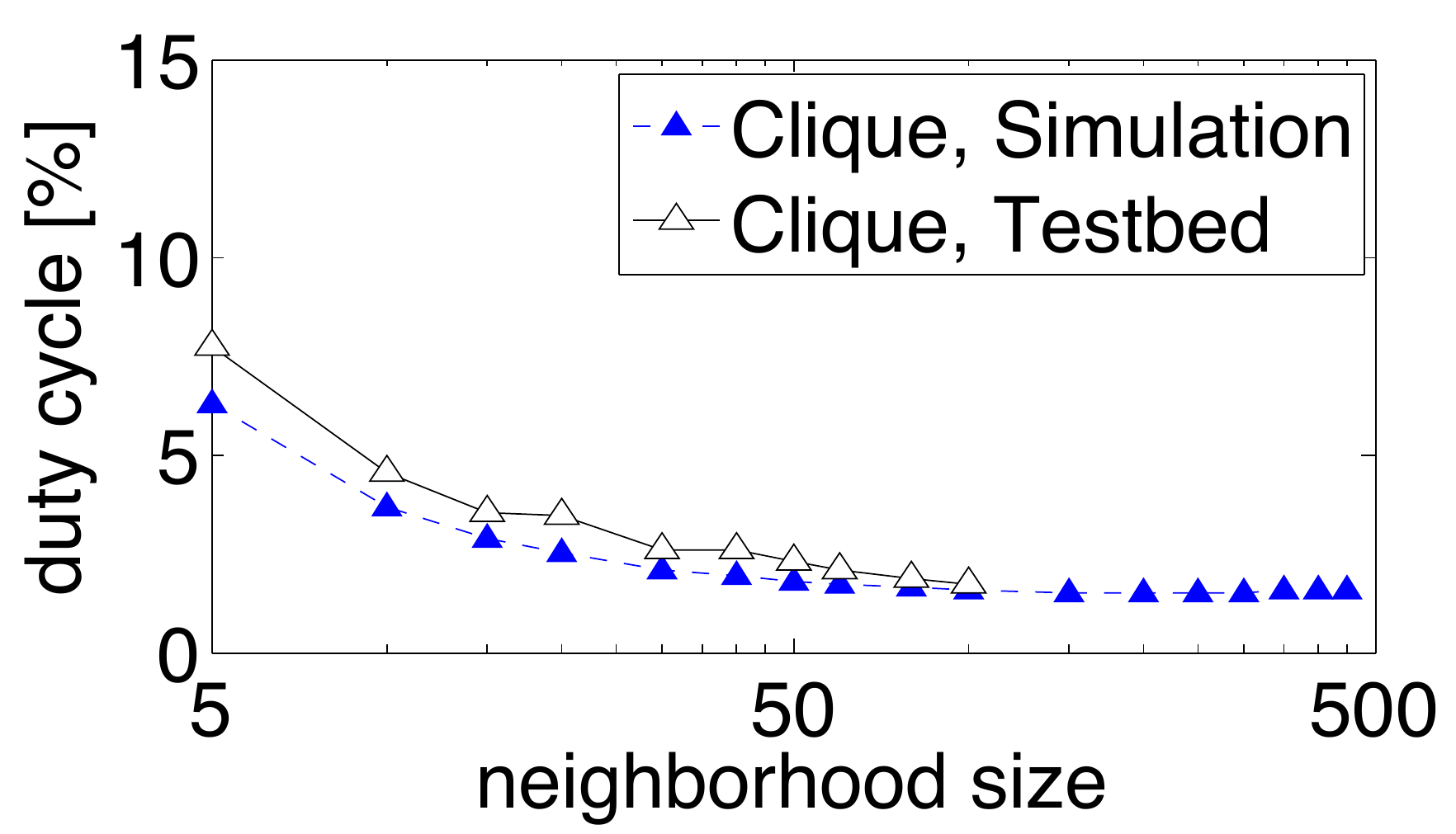} }
			\hspace{10px} 
			\subfloat[Mass delivery ratio]{ \label{fig:sofa_dr_cooja_vs_testbed} 
			\includegraphics[width=0.35 
			\textwidth]{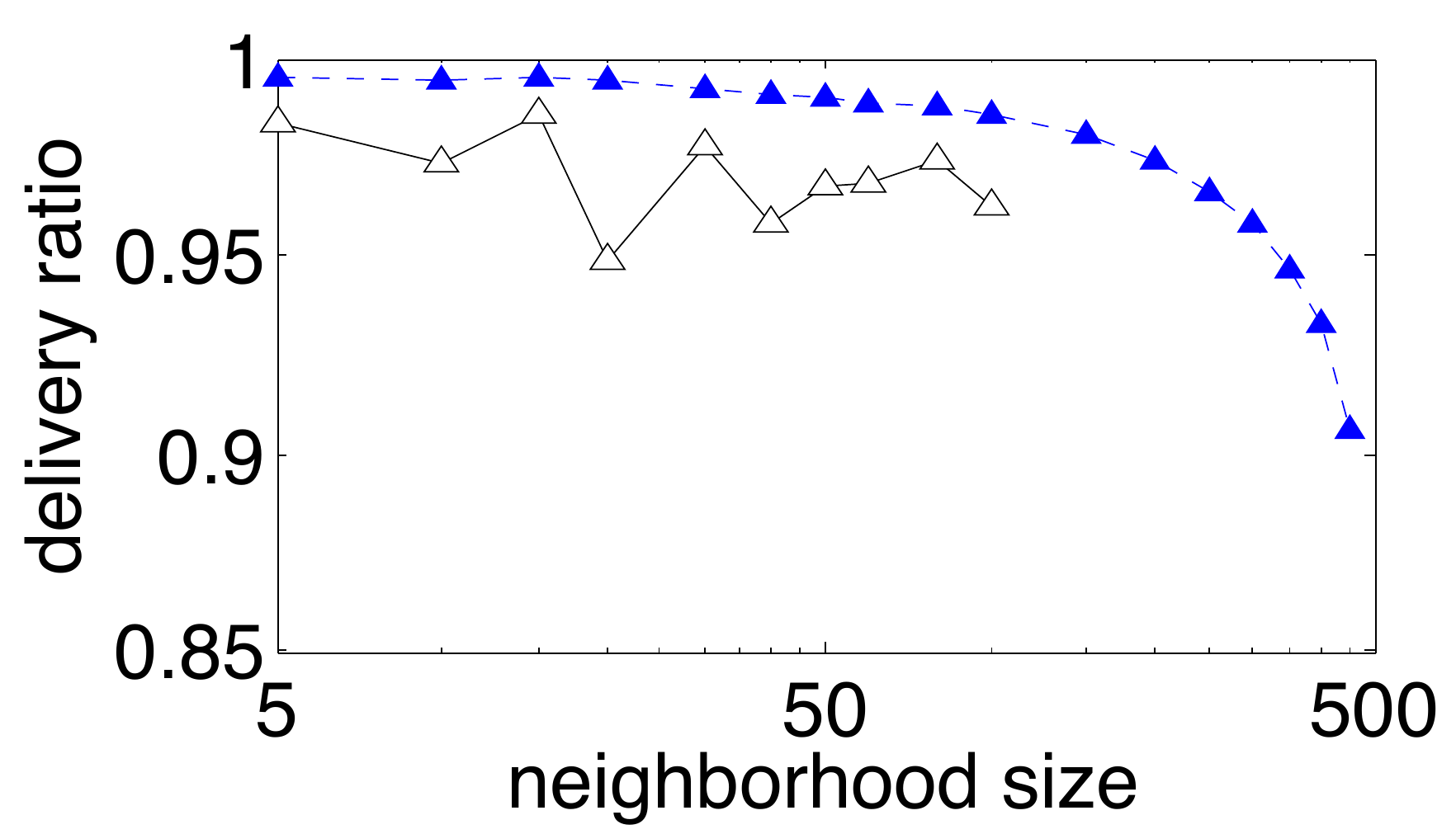} }\\
			\subfloat[Exchange rate]{ \label{fig:sofa_tp_compare} 
			\includegraphics[width=0.35 
			\textwidth]{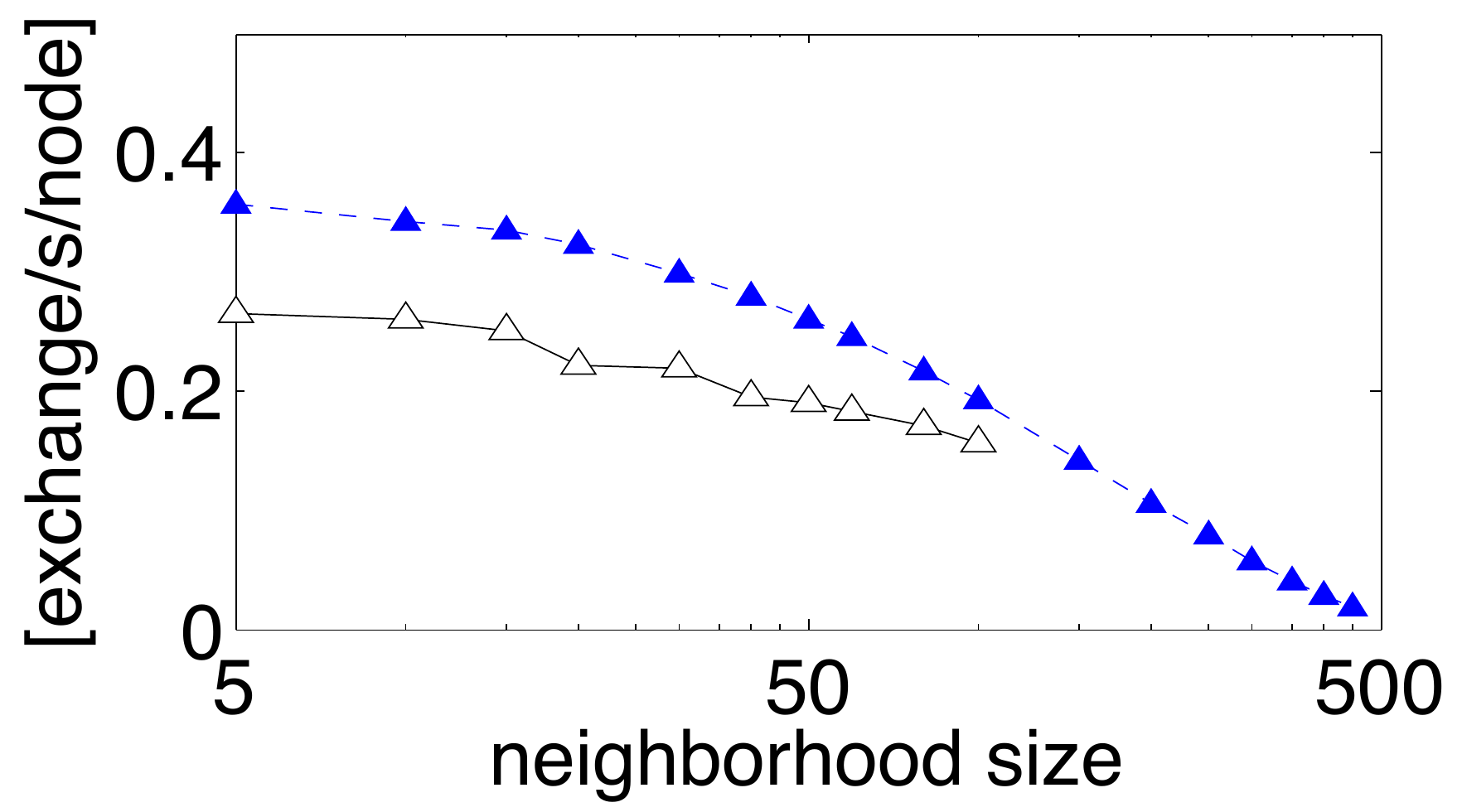} }
			\hspace{10px}
			\subfloat[Global exchange rate]{ \label{fig:sofa_tp_global} 
			\includegraphics[width=0.35 
			\textwidth]{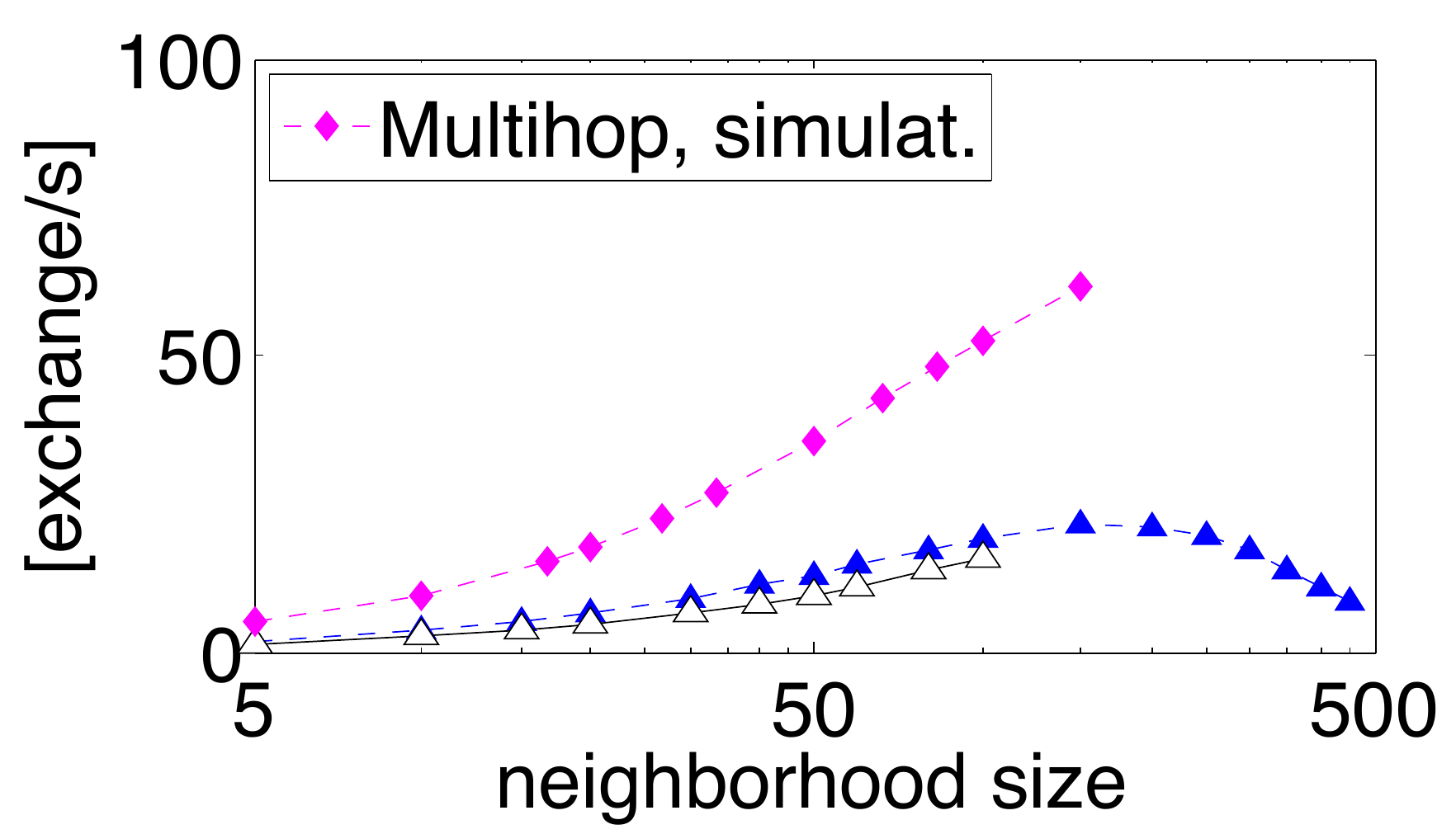} 
			}
		\end{center}
		\caption{SOFA's performance in extreme network conditions (testbed and simulation results).} \label{fig:sofa_extreme} 
\end{figure}

\paragraph{Insight 1.}\emph{SOFA shows a strong resilience to extreme densities.}  Figures~\ref{fig:sofa_dc_cooja_vs_testbed} and \ref{fig:sofa_tp_compare} show the prior testbed results together with the simulation results (notice that the x-axis is in a log scale). These results consider clique networks for both the testbed and simulation results. First, it is important to notice that Cooja captures, in a pretty accurate way, the trends observed on the testbed. Figure~\ref{fig:sofa_dc_cooja_vs_testbed} shows that the duty cycle continues to decrease (almost monotonically) and stabilize after a density of more than 100 neighbors. Figure~\ref{fig:sofa_tp_compare} shows that the exchange rate degrades monotonically but in a graceful manner. 

There is, however, a more important question to answer about SOFA: \emph{at what density does it saturate?} The clique curves (bottom two curves) in Figure~\ref{fig:sofa_tp_global} provide some insight into this question. In these experiments, we evaluated the global exchange rate at different densities. For the tested parameters, SOFA saturates when the density approaches 200 neighbors per node. Note that these are clique scenarios. In multi-hop networks, SOFA can exploit the well known spatial multiplexing effect (parallel data exchanges) and achieve higher global exchange rates. The top curve in \ref{fig:sofa_tp_global} depicts this behavior. The highest point represents a network with 450 nodes and an average density of 150 neighbors.

\paragraph{Insight2.}\emph{The performance of SOFA remains the same in static and mobile scenarios.} By being a stateless protocol, with nodes acting independently in an asynchronous and distributed fashion, SOFA does not require spending energy on maintaining information about the node's neighborhood and it is independent from the network topology and mobility. 

To test SOFA with dynamic topologies, we simulated an area of 150x150 meters where nodes moved according to traces generated by the BonnMotion's random waypoint model. We tested three speeds: 0\,m/s (static), 1.5\,m/s (walking) and 7\,m/s (biking). The radio range was set to 50 meters, with every node being connected, on average, to one third of the network. The maximum density was 150 nodes in a 450-node network. The resulting multi-hop networks had an effective diameter of just below three hops, which ensures that hidden terminal effects are taken into consideration.  Figure~\ref{fig:sofa_mobility} shows the duty cycle and the exchange rate of SOFA under the patterns \emph{static}, \emph{walking} and \emph{biking}. We can see that the speed of the nodes does not influence the energy consumption, the delivery ratio and the exchange rate. 

\begin{figure}
	\begin{center}
		\subfloat[Duty cycle]{ \label{fig:sofa_dc_mobility} 
		\includegraphics[width=0.3 
		\textwidth]{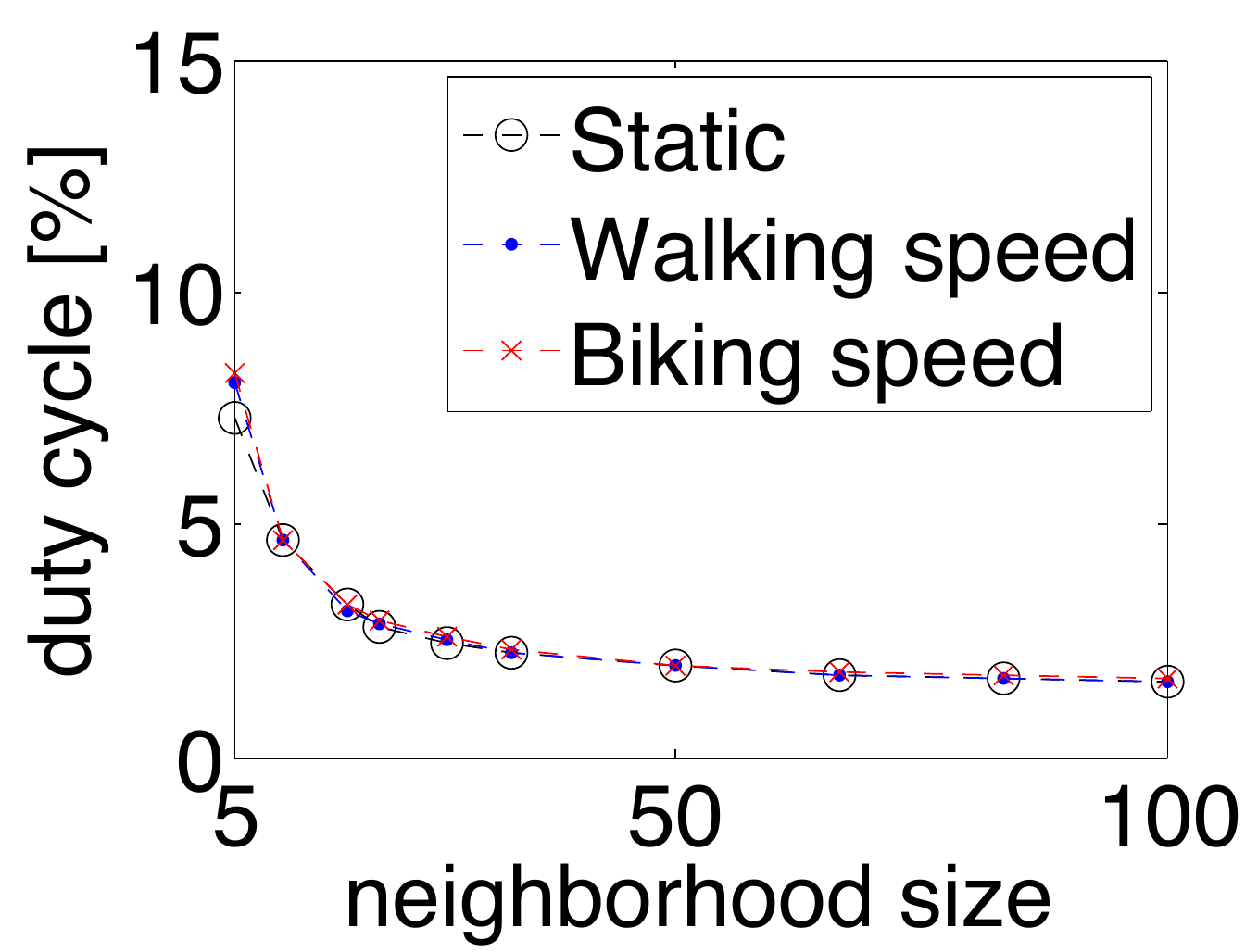} }
		\hspace{7px} 
		\subfloat[Mass delivery ratio]{ \label{fig:sofa_dr_mobility} 
		\includegraphics[width=0.3
		\textwidth]{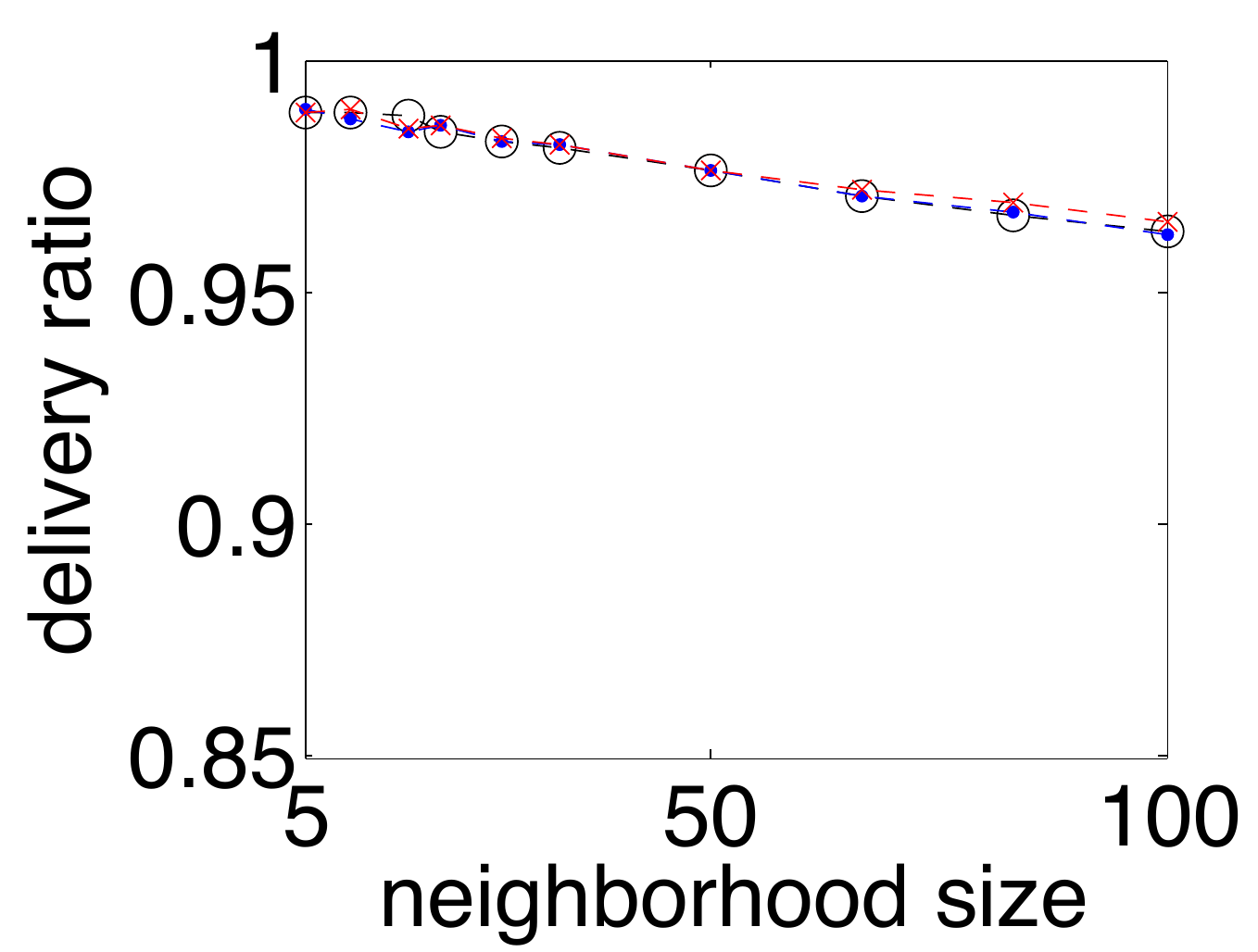} }
		\hspace{7px}
		\subfloat[Exchange rate]{ \label{fig:sofa_tp_mobility} 
		\includegraphics[width=0.3
		\textwidth]{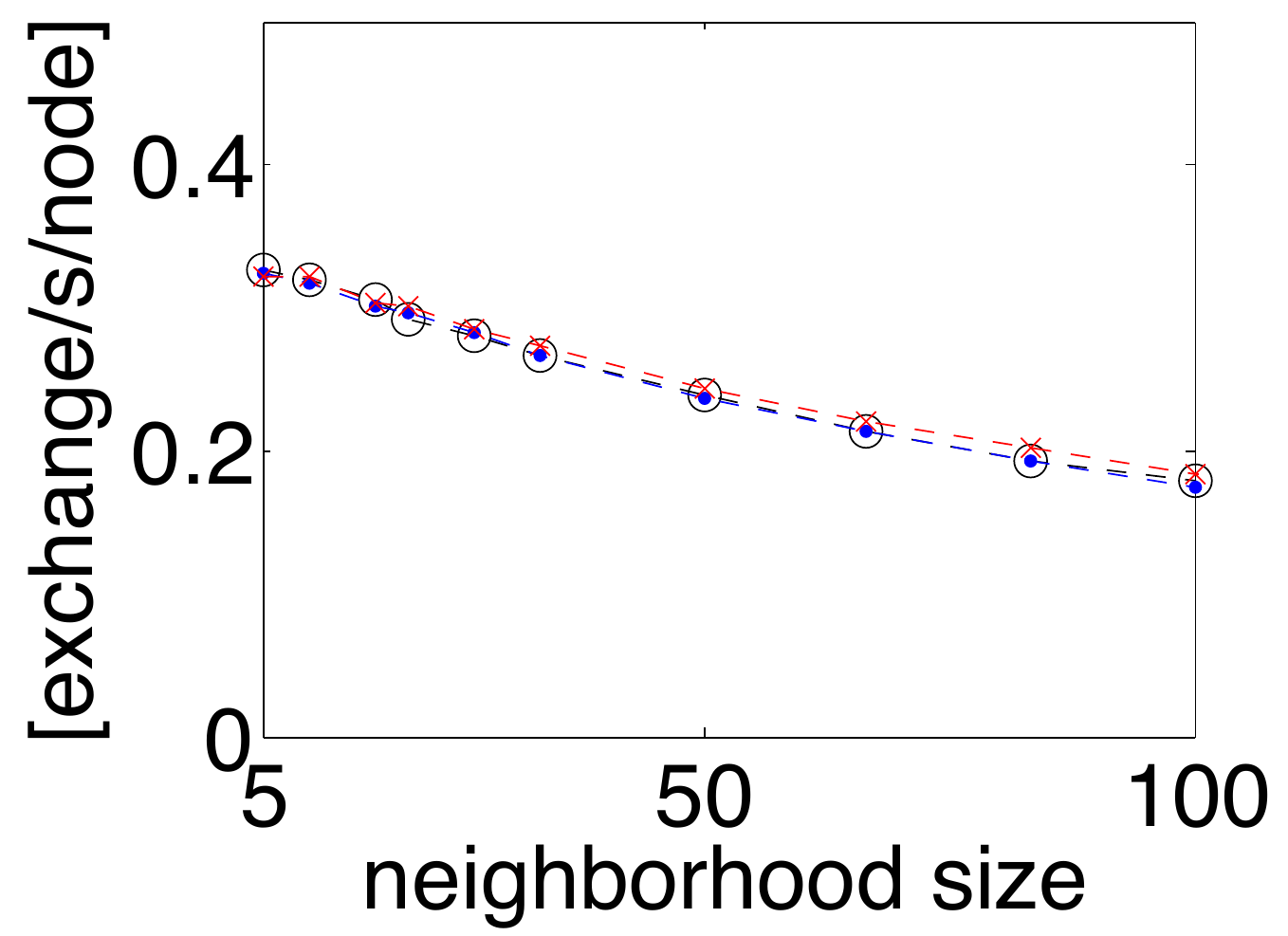} }
	\end{center}
	\caption{SOFA performance under different mobility scenarios (simulation results).} \label{fig:sofa_mobility} 
\end{figure}

\paragraph{Insight 3.}\emph{SOFA natively supports Gossip.} 
As mentioned before, one the goals of our study is to develop a communication primitive that is useful for general gossip applications. In gossip, it is important to conserve mass. Our 3-way handshake phase guarantees that, unless the last ack is lost, the two nodes will reach either a positive agreement (both nodes exchange their mass) or a negative agreement (both nodes keep their mass). Clearly, a positive agreement is the most desirable outcome, but both outcomes guarantee that no mass is lost during the exchange. The most important issue is to reduce the possibility of disagreements (when only one node, the sender, deems the transaction as successful).

To evaluate SOFA's ability for mass conservation, we compute the mass delivery ratio. This  metric represents the fraction of data exchanges that end up successfully. Figure~\ref{fig:sofa_dr_cooja_vs_testbed} depicts the \emph{mass delivery ratio} of SOFA under different densities. The figure shows that even under extreme densities (450 neighbors) SOFA is able to achieve a high percentage of successful exchanges (above 90\,\%). This is an important result. In the previous subsection, we found that SOFA saturates at approximately 200 nodes, beyond this point the exchange rate decreases monotonically. But,  Figure~\ref{fig:sofa_dr_cooja_vs_testbed} shows that the few exchanges that are able to occur beyond this point are able to be completed successfully. In other words, even under extremely demanding conditions SOFA has a remarkable ability to conserve mass. This feat is due to the careful design of SOFA aimed at (i) selecting reliable links (rendezvous phase),  (ii) implementing a transmit back-off instead of a CCA (to avoid sender-based collisions),  (iii) avoiding the use of retransmissions (which would jam the channel) and (iv) providing a method to reduce mass losses due to packet drops (3-way handshake).

\section{Conclusions}
\label{sec:sofa_conclusions}

In this chapter we made the first step towards communication in Extreme Wireless Sensor Networks, where nodes are mobile, density can reach hundreds of neighbors and bandwidth must be spared for the actual communication between nodes. 
With these challenges in mind, this chapter presented the design of SOFA, a medium access control protocol that, following the four design principles presented in \chapref{chapter:introduction}, is able to diffuse and process information in EWSNs in a robust and lightweight fashion.

By communicating with the first duty-cycling neighbor to wake up (opportunistic principle), SOFA is more efficient in extreme conditions (high densities), rather than in milder ones (anti-fragile principle). SOFA's asynchronous and distributed mechanism is immune to mobility and nodes' failures (state-less principle) and reaches a bandwidth saturation at densities close to 200 nodes. Over this point, it is still able to provide a reliable communication for densities of up to 450 nodes (robustness principle). 

As we will see in the next chapters, SOFA forms the base communication layer for other network services and protocols for EWSNs, such as a lightweight density estimator (\cf \chapref{chapter:estreme}) and an opportunistic data collection protocol (\cf \chapref{chapter:staffetta}).
%
%
%
%
%


\chapter{Neighborhood Cardinality Estimation}
\label{chapter:estreme}

\blfootnote{Parts of this chapter have been published in IPSN'14, Berlin, Germany~\cite{Cattani2014b}.}

\epigraph{``When every logical course of action is exhausted, the only option that remains is inaction''}{Tudok}


\dropcap{K}{nowing} the neighborhood cardinality in Wireless Sensor Networks is an essential building block of many adaptive algorithms, such as resource allocation~\cite{Burri2007} and random-access control~\cite{Loukas2013a}. 
Cardinality estimation is also a valuable tool on itself. It can be used to monitor the surrounding environment, transforming the radio device into a smart sensor~\cite{Weppner2011}. 
In crowd-monitoring, for example, the number of (personal) devices in the communication range can be used as an indicator of the crowd density and serve as an alarm in case such density crosses a dangerous threshold.

Unfortunately, cardinality estimation in Extreme Wireless Sensor Networks is hard. Different from traditional studies, these networks are mobile, dense and require \emph{all} nodes to estimate the cardinality concurrently. Moreover, due to devices' limited capabilities, much of the already limited bandwidth is used for communicating. Therefore, we need an estimator that is not only accurate, but also fast, asynchronous (due to mobility) and lightweight (due to concurrency and high density).

In this chapter we propose \emph{Estreme}, a neighborhood cardinality estimator with extremely low overhead that leverages the rendezvous time of low-power medium access control (MAC) protocols like SOFA (\cf \chapref{chapter:sofa}).
 
In \secref{sec:estreme_mechanism}, we model the rendezvous time using order statistics and derive a neighborhood cardinality estimator. The model permits us to (a) provide four rules that are necessary and sufficient for using Estreme in a wide family of communication protocols, (b) gain insights into the performance of the estimator, and (c) derive bounds for the estimation error.

In \secref{sec:estreme_implementation}, we implement Estreme on top of SOFA, the low-power MAC protocol presented in \chapref{chapter:sofa}. Even though Estreme is conceptually simple, implementing it on real nodes raises a number of challenges: observing the right event (first wake-up), measuring the rendezvous time accurately and exploiting spatial/temporal correlations in the sampling process. We thoroughly analyze these challenges and provide viable solutions.

Finally, in \secref{sec:estreme_results}, we extensively evaluate Estreme on a testbed consisting of 100 wireless nodes performing concurrent estimations.  Our implementation achieves an estimation error of $\approx$\,10\% with an overhead of just 4 bytes per estimation and a duty cycle of $\approx$\,3\%.  Moreover, Estreme provides a good trade-off between agility and precision. For example, when the neighborhood size changes abruptly from 30 to 60 nodes, the estimated cardinality of nodes converges within one minute to the 10\% error range.


\section{Problem Statement}
\label{sec:estreme_problem_statement}

Neighborhood cardinality estimation is a broad and active research area, but for the purposes of our work, we are interested in studies that perform empirical evaluations on dynamic networks, \ie networks with frequent topology changes due to fluctuations in link quality and node movement. Within this scope, the state-of-the-art achieves an accuracy between 3\% and 35\% for 25 smartphones, relying on audio signals~\cite{Kannan2012}. Other studies, using bluetooth signals in smartphones~\cite{Weppner2011} and radio signals in sensor nodes~\cite{Zeng2013}, achieve comparable results for similar settings. Nevertheless, only a fraction of the nodes perform the estimation process.

We advance the state-of-the-art in two ways. First, we move from scales of a few tens of neighboring nodes to one hundred nodes. Second, we allow all nodes to perform the estimation concurrently. Solving this novel estimation problem is challenging. Network dynamics require estimations that are fast and asynchronous. These latter characteristics limit the number of samples (\ie information) that can be collected, which in turn decreases accuracy. On the other hand, higher density and concurrency pose an extra burden on the sampling process and necessitate an efficient use of bandwidth.  

To cope with these challenges we propose \emph{Estreme}, a low-overhead cardinality estimator that is robust to mobility and supports multiple estimators running simultaneously in an asynchronous manner. The key idea behind Estreme is simple: in networks where all nodes perform periodic but random events within a given period, the time difference between two consecutive events (rendezvous time) captures the density of the neighborhood. The shorter the rendezvous time, the higher the density, and vice-versa.

\section{Mechanism}
\label{sec:estreme_mechanism}

The neighborhood set $V_u$ of a node $u$ in a wireless network consists of all nodes $v$ in the radio vicinity of $u$. Our objective is to compute $n_u = |V_u|$ \ie its cardinality. In the following, we use $n$ instead of $n_u$ when $u$ becomes implicit.

\vspace{2mm}\noindent\textbf{Cardinality estimator.}  We assume that, within a given period $t_w$, all nodes in the network perform an event in a random and desynchronized manner, as in Figure~\ref{fig:estreme_model}. Considering a random point in time, the time sequence of the subsequent events can be modeled as a set of independent random variables following a uniform distribution ($X_1 \dots X_n$). This random point in time represents the moment when a node, called \emph{initiator}, wishes to estimate the cardinality of its neighborhood. The rendezvous time with the first $k$ neighbors (events) captures the cardinality. Intuitively, the longer it takes to rendezvous, the lower the cardinality (because the distribution of random events during $t_w$ is sparser). 

The rendezvous time $T_r(k)$ with the first $k$ neighbors is a random variable and can be modeled using order statistics. The density function of $T_r (k)$ is known to follow the beta distribution~\cite{Balakrishnan2007}
\begin{equation} 
	T_r(k)\sim \text{Beta}(\alpha, \beta), 
\label{eq:estreme_beta} 
\end{equation}
and in our scenario $\alpha = k$ and $\beta = n + 1 -k$.  Considering the period $t_w$, the expected value of the rendezvous time, \ie the expected time it takes to observe $k$ events is\vspace*{-1ex}
\begin{equation} 
	\mathbb{E}[T_r(k)]= t_w \frac{\alpha}{\alpha+\beta} = t_w \frac{k}{n+1}.
\label{eq:estreme_rendezvous} 
\end{equation} 
Inverting the expectation, we obtain a simple-to-compute estimator for the neighborhood cardinality $n$ based on the average of the observed rendezvous times $\bar{t}_r$.
\begin{equation}
	 \hat{n} = t_w\frac{k}{\bar{t}_r}-1. 
\label{eq:estreme_cardinality} 
\end{equation}

In our work we consider $k=1$, that is, we estimate the cardinality by using the average rendezvous time with the first event only. As we will describe later, $k=1$ is chosen because it minimizes the amount of bandwidth used to collect a sample and well fits with the SOFA mechanism presented in \chapref{chapter:sofa}. 

It is important to note that in theory,  for $k=1$ the expectation of the estimator diverges~\cite{Giroire2009}, because as $n \rightarrow \infty$, $t_r \rightarrow 0$ and $ \hat{n} \rightarrow \infty$. In practice, $t_r$ remains positive. However, for very large neighborhoods, as $n \rightarrow \infty$, Estreme could use $k \geq 2$. For these values of $k$ the expectation of the estimator becomes $n/(k-1)$~\cite{Chassaing2007a}.

\begin{figure}
	\centering 
	\includegraphics[width=0.6 
	\textwidth]{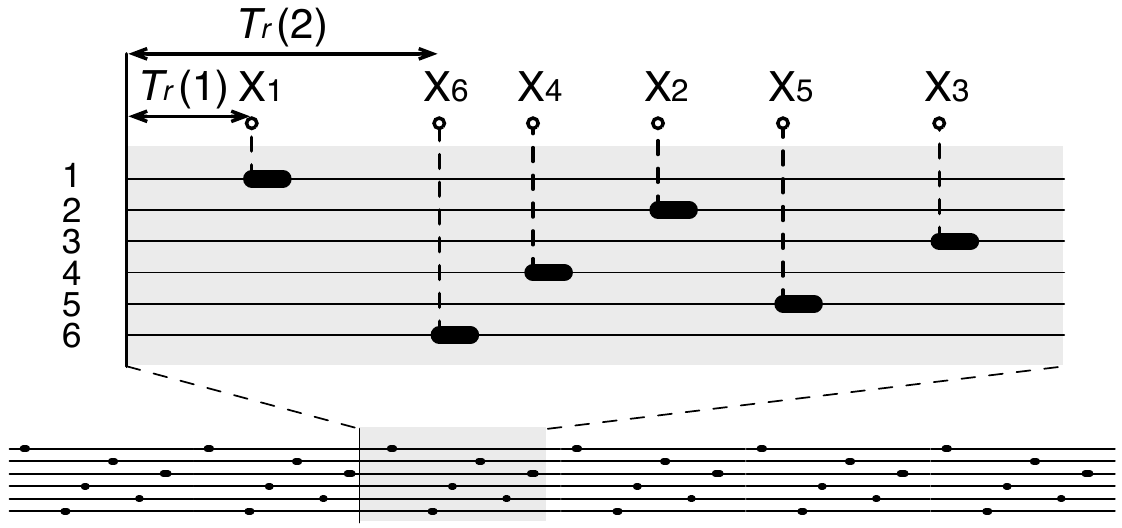} \caption{Modeling the rendezvous times with the first $k$ neighbors using the Beta random variable $T_r(k)$. The random variables $X_1 \dots X_n$ model the wake-up times of nodes.}
	\label{fig:estreme_model} 
\end{figure}

\vspace*{2mm}\noindent\textbf{Applicability of Estreme}. Because of its simplicity, the proposed estimator is applicable to any network protocol that abides to three requirements:
\begin{description} 
	\item[R1.] Periodically (every $t_w$), nodes generate independent and random events.\label{rule:periodical}
	\item[R2.] The event of a node is observable by all neighbors.\label{rule:observable}
	\item[R3.] Nodes can accurately measure inter-event periods.\label{rule:time}
\end{description}
Requirement~1 guarantees that, given a random moment in time, we can model the occurrence of the first event using a uniform random variable. 
Requirement~2 ensures that an \emph{initiator} node appropriately identifies the first $k$ events. That is, it does not mistake a later event for an earlier one. For example, due to collisions, earlier events can be lost and Estreme would underestimate the cardinality. 
Requirement~3 ensures that the rendezvous time is measured accurately. Otherwise an overestimation (shorter rendezvous than factual) or underestimation (longer rendezvous than factual) of the cardinality would occur.

\subsection{Timing inaccuracies}
\label{sec:estreme_lemma}

As we will observe in the next section, measuring accurately the rendezvous time is challenging because several delays are introduced. Some of these delays can be measured and overcome, but others are difficult to track, such as computation and transmission delays. Analyzing the effects of these delays is central to understanding the performance of Estreme in real scenarios. In this section we derive a bound for the cardinality (estimation) error caused by a positive delay $\epsilon$ in the rendezvous time.
\begin{proposition}
\label{lemma:estreme_bound}
Given a timing error $\epsilon$ in the rendezvous time $t_r$, the expected cardinality error is
\begin{align*}
	\mathbb{E}[e_n] = \frac{\hat{n}_{\epsilon} - n}{n} = \Theta \left(-\frac{\rho}{1 +\rho}\right),
\end{align*}
where $\rho = \epsilon(n+1)/t_w$ , and $\hat{n}_{\epsilon}$ is the expected value of the estimated cardinality considering $\epsilon$.
\end{proposition}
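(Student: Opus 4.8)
The plan is to read $\hat{n}_\epsilon$ as the output of the $k=1$ estimator \equref{eq:estreme_cardinality} when the error-free expected rendezvous time is inflated by the delay, \ie by substituting $\bar{t}_r \mapsto t_r + \epsilon$ with $t_r = \mathbb{E}[T_r(1)] = t_w/(n+1)$ from \equref{eq:estreme_rendezvous}. This is the natural interpretation given the remark just before the statement: the literal expectation of the raw $k=1$ estimator diverges, so ``the expected value of the estimated cardinality considering $\epsilon$'' is obtained by plugging the shifted expected rendezvous time into the estimator,
\[
	\hat{n}_\epsilon = \frac{t_w}{t_r + \epsilon} - 1 .
\]

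Next I would compute the absolute error $\hat{n}_\epsilon - n$. The one useful trick is to rewrite the constant term via $n + 1 = t_w/t_r$, so that $\hat{n}_\epsilon - n = t_w/(t_r+\epsilon) - t_w/t_r$; combining the two fractions over the common denominator $t_r(t_r+\epsilon)$ gives $\hat{n}_\epsilon - n = -t_w\epsilon/\bigl(t_r(t_r+\epsilon)\bigr)$. Dividing by $n$ and substituting $t_r = t_w/(n+1)$ reduces the relative error to an explicit rational function of $n$, $\epsilon$ and $t_w$.

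Introducing $\rho = \epsilon(n+1)/t_w$ to absorb the clutter, and using $\epsilon(n+1) = \rho\,t_w$ to cancel the $\epsilon$ and $t_w$ factors, the relative error collapses to the exact identity
\[
	\mathbb{E}[e_n] = -\frac{\rho}{1+\rho}\cdot\frac{n+1}{n} .
\]
Since $(n+1)/n = 1 + 1/n$ lies in $(1,2]$ and tends to $1$, this bounded, convergent prefactor does not change the asymptotic order, which yields $\mathbb{E}[e_n] = \Theta\bigl(-\rho/(1+\rho)\bigr)$ as stated.

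The computation is routine algebra, so the real care is conceptual. First, I would pin down the meaning of $\hat{n}_\epsilon$ so as to sidestep the divergence of the $k=1$ estimator's expectation (plugging in $t_r+\epsilon$ rather than averaging the nonlinear reciprocal map). Second, I would be explicit about reading the $\Theta$: because $\rho$ itself scales linearly in $n$ for fixed $\epsilon$ and $t_w$, the content of the claim is that the relative error equals $-\rho/(1+\rho)$ up to the bounded factor $(n+1)/n$, rather than a statement about a fixed $\rho$. Justifying that this prefactor may be absorbed into the $\Theta$ is the only step worth spelling out.
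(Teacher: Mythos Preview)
Your proposal is correct and follows essentially the same approach as the paper: both derive the exact expression $\mathbb{E}[e_n] = -\epsilon(n+1)^2/\bigl(n(t_w+\epsilon(n+1))\bigr)$ by plugging the delayed expected rendezvous time into the estimator, then factor it as $-\tfrac{\rho}{1+\rho}\cdot\tfrac{n+1}{n}$ and absorb the bounded prefactor $(n+1)/n$ into the $\Theta$. Your use of the intermediate variable $t_r$ and the explicit factorization is slightly cleaner than the paper's presentation via an auxiliary $\Phi$ and sandwich constants $k_1,k_2$, but the substance is identical.
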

\begin{proof}
The proof consists of two steps. First, we derive $\hat{n}_{\epsilon}$ based on an expected rendezvous time delayed by $\epsilon$. We then propose a bound on the resulting error $e_n$ and show that it is tight. 
%
Let us start with the derivation of $e_n$. Substituting the expression of $\hat{n}_{\epsilon}$ derived from \eqref{eq:estreme_rendezvous} and \eqref{eq:estreme_cardinality} into the definition of $\mathbb{E}[e_n]$, we obtain

\begin{equation}
	\mathbb{E}[e_n] = - \frac{\epsilon (n+1)^2}{n \left(t_w + \epsilon (n+1)\right)}.
	\label{eq:estreme_en}
\end{equation}
Observe that for $\epsilon>0$, the above is negative; any positive delay causes the underestimation of $n$. Unfortunately, even though exact, Equation~\eqref{eq:estreme_en} is not intuitive. It is easy to see that, the simpler and more insightful expression 
\begin{align}
	\Phi &= - \frac{\epsilon (n+1)^2}{(n+1) \left(t_w + \epsilon (n+1)\right)} \nonumber \\
	     &= -\frac{\epsilon (n+1)/t_w}{1 + \epsilon (n+1)/t_w} = -\frac{\rho}{1 +\rho}
	     \label{eq:estreme_phi}
\end{align}
tightly bounds $\mathbb{E}[e_n]$. It is sufficient to show that, for each $n$, positive $k_1$, $k_2$ exist such that
\begin{equation}
\label{eq:estreme_bound}
  k_1 \Phi \leq \mathbb{E}[e_n] \leq k_2 \Phi, \text{ for all } n \geq 1. \nonumber 
\end{equation}
Indeed, the above inequality holds for
\begin{equation}
\label{eq:estreme_k1k2}
k_1 \leq {(n+1)}/{n} \quad \text{and} \quad k_2 \geq {(n+1)}/{n}. \nonumber 
\end{equation}
Setting $\rho = \epsilon(n+1)/t_w$ in Equation~\eqref{eq:estreme_phi} concludes our proof.
\end{proof}

Proposition~\ref{lemma:estreme_bound} leads to two observations for practical implementations of Estreme.
\emph{First, given a fixed measurement error, longer periods $t_w$ are preferable in terms of estimation error.} As $\rho \rightarrow 0$ the estimation error tends to $0^{-}$. This can be caused both by $\epsilon$ being very small (this is obvious, since a small measurement error implies small estimation errors), but also by long periods $t_w$ and small neighborhood sizes. While we cannot choose the number of neighbors, we should prefer longer periods $t_w$ to shorter ones. 
\emph{Second, to run Estreme in EWSNs, a platform should be able to measure the rendezvous time with sub-millisecond accuracy.} If we apply the Proposition to a neighborhood of 100 nodes with a period $t_w=1000\,ms$, the estimation error introduced by a measurement delay $\epsilon = 1\,ms$ is  
$$ \Theta(-\frac{0.101}{1+0.101} = - 0.09).$$
That is, with a measurement error of just 1\,ms, the estimation error is 9\%.
Obtaining an accurate estimation of the rendezvous time at sub-millisecond accuracy is therefore central for the correct operation of Estreme.

\section{Implementation}
\label{sec:estreme_implementation}

Estreme is a general framework that can be implemented over many communication protocols. In
our work, we build Estreme on top of SOFA, the low-power listening MAC protocol presented in \chapref{chapter:sofa}. We first describe a \emph{Naive} implementation of Estreme and highlight its limitations (Section~\ref{sec:estreme_naive}). Then, we analyze these limitations and provide solutions for them (Sections~\ref{sec:estreme_observable}, \ref{sec:estreme_accurate}, and~\ref{sec:estreme_average}).

\subsection{Naive implementation}
\label{sec:estreme_naive}
Similar to other low-power listening MAC protocols, in SOFA nodes wake up periodically and with a fixed frequency $t_w$. When a node, called \emph{initiator}, wants to communicate (node 1, \figref{fig:estreme_mechanism}), it sends a strobe of beacons (B) until the first neighbor (node 3) wakes up. The wake-up of this neighbor is the observable event required by Estreme. To adhere to the uniform and random distribution of requirement~1, nodes wake up deterministically at every $t_w$, plus a random delay in the range $[{-t_w}/{_2}, {t_w}/{_2}]$. 
Thus, the probability density function of the rendezvous times adheres to the Beta distribution required by Estreme to perform the estimation. 


%
\begin{figure}
	\begin{center}
		\includegraphics[width=0.75
		\textwidth]{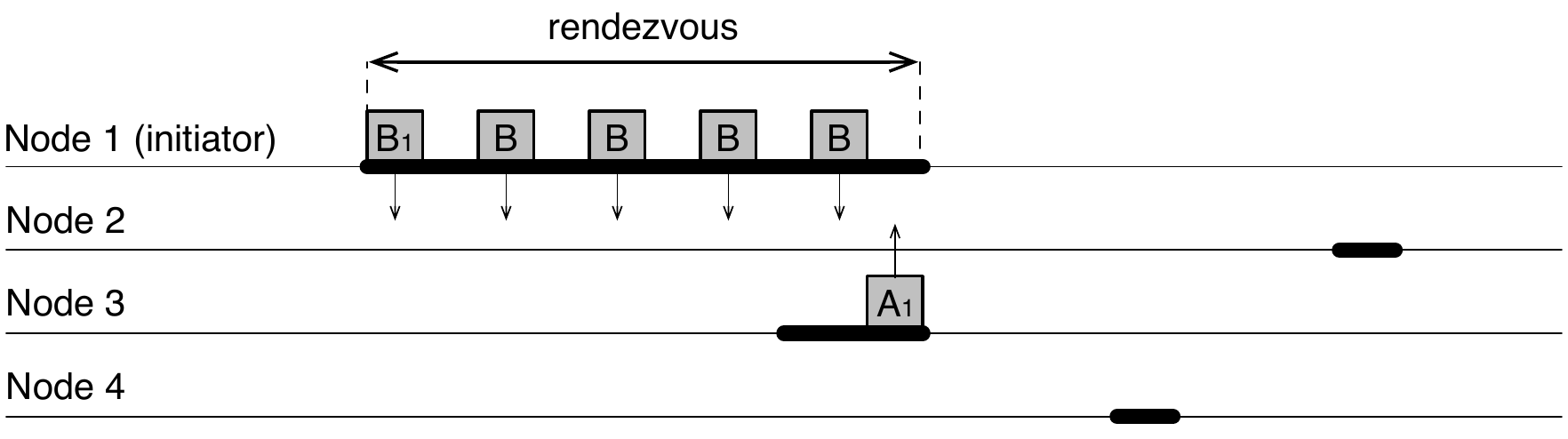} 
	\end{center}
	\caption{Estreme mechanism applied to SOFA} 
	\label{fig:estreme_mechanism} 
\end{figure}
At first glance, the rendezvous time is simple to measure: the initiator starts
a timer before sending its first beacon (B{\small1}, \figref{fig:estreme_mechanism}) and stops the timer upon
receiving the acknowledgement (A{\small1}, \figref{fig:estreme_mechanism}). We evaluated this \emph{Naive}
method on our testbed for various neighborhood sizes, from 10 to 100. The
cardinality estimation is based on the mean rendezvous time ($\bar{t_r}$) of
the last 50 samples observed by the initiator. The initiator sampled its
neighborhood at a rate of 1\,Hz.

Figure~\ref{fig:estreme_naive_rendezvous} compares the mean rendezvous times with (i) our analytical model ($\mathbb{E}[T_r]$ according to Equation~\ref{eq:estreme_rendezvous}), and (ii) Monte Carlo simulations performed in \textit{Matlab}, which capture an ideal networking environment. Note that for all neighborhood cardinalities, the \textit{Naive} implementation of Estreme over-measures the rendezvous time. 
Figure~\ref{fig:estreme_naive_error} shows that, as a consequence, the estimation error of the neighborhood  cardinality (black bars) is significant, between 10\% and 20\%. Based on the \emph{maximum} difference ($\epsilon$) between the \emph{Naive} measurements and the expected rendezvous time $\mathbb{E}[T_r]$ (Proposition~\ref{lemma:estreme_bound}), we observe that the estimation error could reach values between 20\% and 40\%.

The problem with the ``\emph{Naive}'' implementation is that it does not adhere to Estreme's 2nd and 3rd requirements. Sometimes an initiator misses the first event, and mistakes subsequent events as being the first (non-compliance with requirement 2), and the calculation of the rendezvous time includes delays that are not part of the rendezvous itself (non-compliance with requirement 3). 
In the next sections we describe and solve these two  problems, and explain a method to accelerate the estimation convergence of Estreme.
\begin{figure}
	\begin{center}
		\subfloat[Rendezvous times]{ 
		\label{fig:estreme_naive_rendezvous}  
		\includegraphics[width=0.5
		\textwidth]{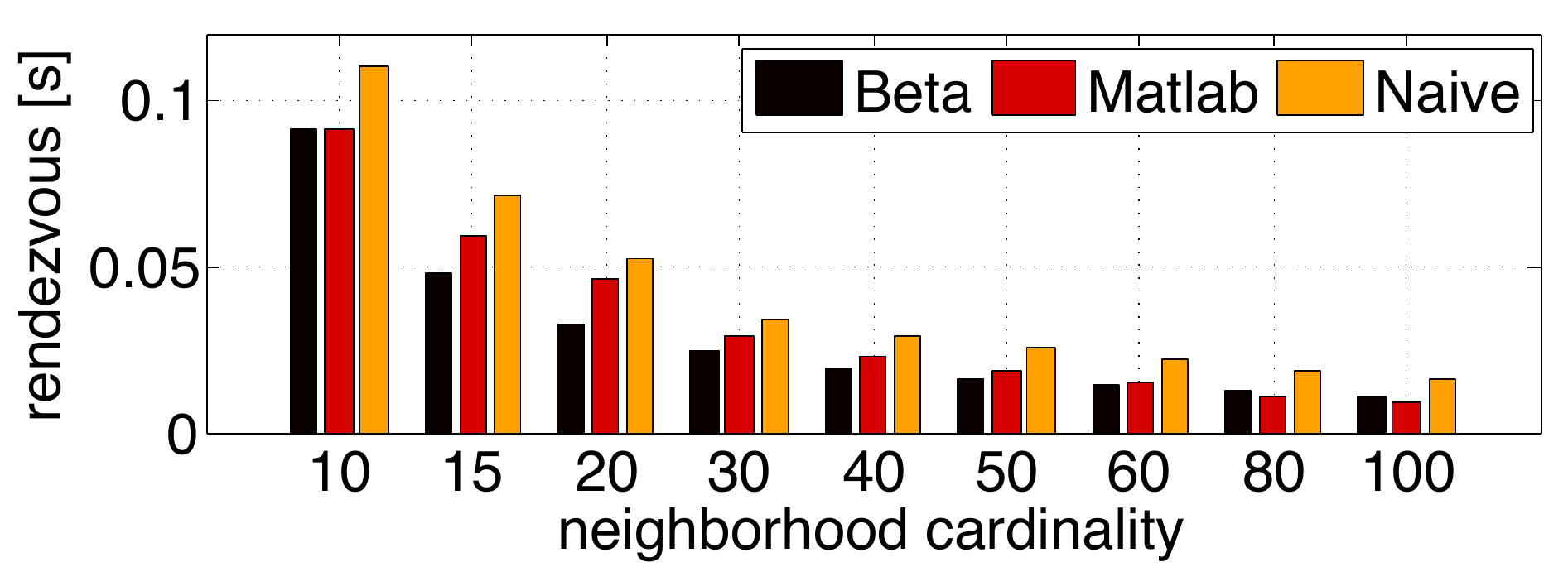} } 
		\subfloat[Estimation error and bound]{
		\label{fig:estreme_naive_error}
		\includegraphics[width=0.5
		\textwidth]{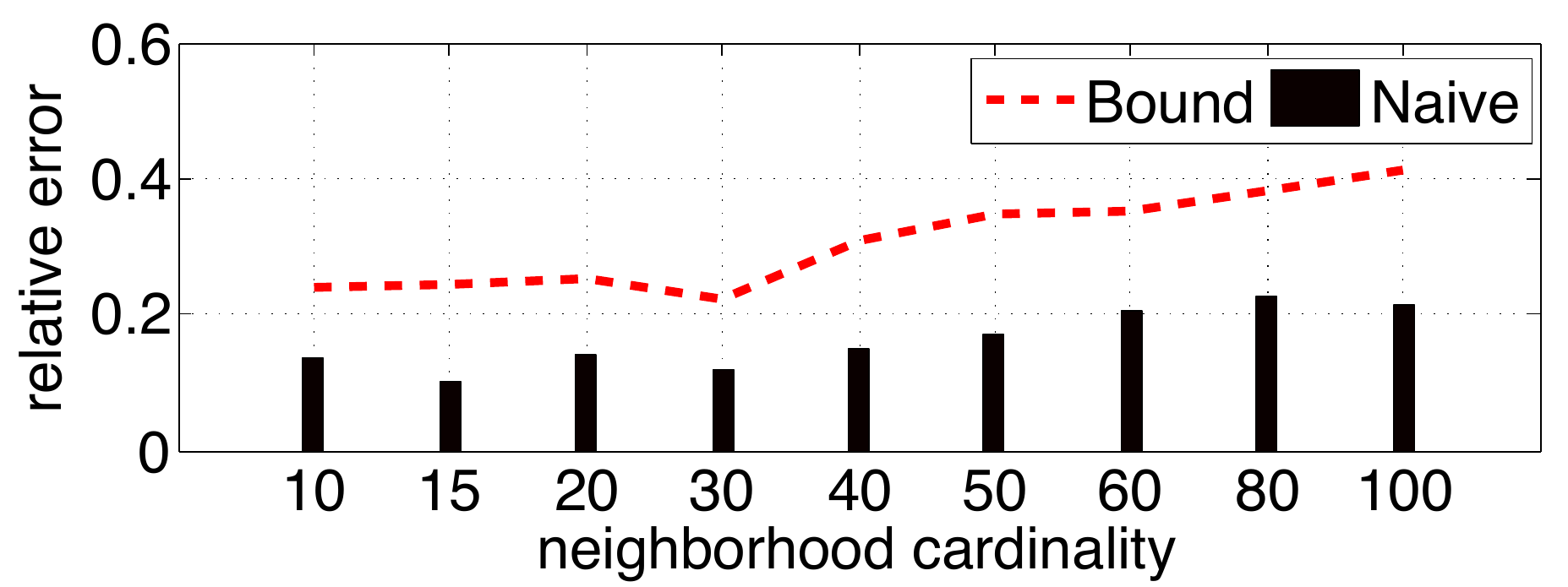} }
	\end{center}
	\caption{Naive implementation of Estreme compared to the Beta model.} \label{fig:estreme_naive} 
\end{figure}
\subsection{Correct observations}
\label{sec:estreme_observable}

Since Estreme's estimations are based on the wake-up sequence of nodes, it is essential that such observations are correct. But collisions can affect this requirement, see Figure~\ref{fig:estreme_delays}. If two or more neighbors wake up between two consecutive beacons, a collision will occur. 
Denoting $t_b$ as the inter-beacon interval, the collision probability is

\begin{align*}
P(\text{collision}) & = \sum_{j} \sum_{i=2}^{n} P\left(
\begin{array}{c}
  \text{first } i \text{ nodes wake up at } \\
  \text{the } j^{\textit{th}} \text{ interval}
  \end{array}
\right)\\
    & = \sum_{j=1}^{ \lfloor t_w/t_b \rfloor} \sum_{i=2}^n \binom{n}{i} \left( \frac{t_b}{t_w} \right) ^i \left( 1-\frac{t_b}{t_w}j \right)^{n-i}
\end{align*}
where $t_b/t_w$ is the probability that a node wakes up at each inter-beacon duration, and $1-(t_b/t_w)j$ is the probability that a node wakes up after the $j^{th}$ interval. The likelihood of a collision is not high. For example, if $n$ is 10, 50 and 100, the collision probability is 0.02, 0.11 and 0.22, respectively. Note that, even though this probability captures the collision of any number of nodes, most of the probability mass is concentrated on the case when only two nodes collide. 

To reduce the chances of missing the first event(s), Estreme implements the following conflict resolution mechanism: if a node detects that its acknowledgment is lost (by receiving the beacon again), the node will retransmit its acknowledgement with probability $p$ and go back to sleep with probability $(1-p)$. Denoting $n_c \geq 2$ as the number of colliding nodes, the conflict resolution mechanism leads to three outcomes:
 
(i) \emph{starvation}, with probability $(1-p)^{n_c}$, all nodes go to sleep and the first event(s) is lost; 

(ii) \emph{completion}, with probability $n_c\,p\,(1-p)^{n_c-1}$, only one node remains awake and sends successfully the acknowledgement; 

(iii) \emph{contention}, more than one node remain awake and the contention 
process restarts with those nodes, with probability $1-(1-p+ n_c\,p)(1-p)^{n_c-1}$. 

\noindent Since we want to maximize the probability of \emph{completions},
\begin{equation}\label{eq:estreme_optimal_p}
\frac{\partial}{\partial p} n_c\,p\,(1-p)^{n_c-1} = 0 \implies p =\frac{1}{n_c}.
\end{equation}

In reality, however, Estreme will not know the number of contending nodes. In our implementation we settle for $p=0.5$, because the higher $p$, the lower the probability of \emph{starvation}, but according to Equation~\ref{eq:estreme_optimal_p}, $p\leq 0.5$. As an example, if $n = 100$ the probability that exactly two nodes collide and reach starvation is  $P(\text{collision)} P(\text{starvation}) = 0.19 \times 0.\overline{3} = 0.06$.

To avoid the (unlikely) possibility of infinite \emph{contentions}, nodes will back off and go to sleep after a maximum number of unsuccessfully retransmissions.

Note that this conflict resolution mechanism introduces a delay in the measurements of the rendezvous time that badly affects Estreme's accuracy. In the next section, we provide a comprehensive solution to overcome this delay.

\subsection{Accurate measurements}
\label{sec:estreme_accurate}
\begin{figure}
	\centering 
	\includegraphics[width=0.55 
	\textwidth]{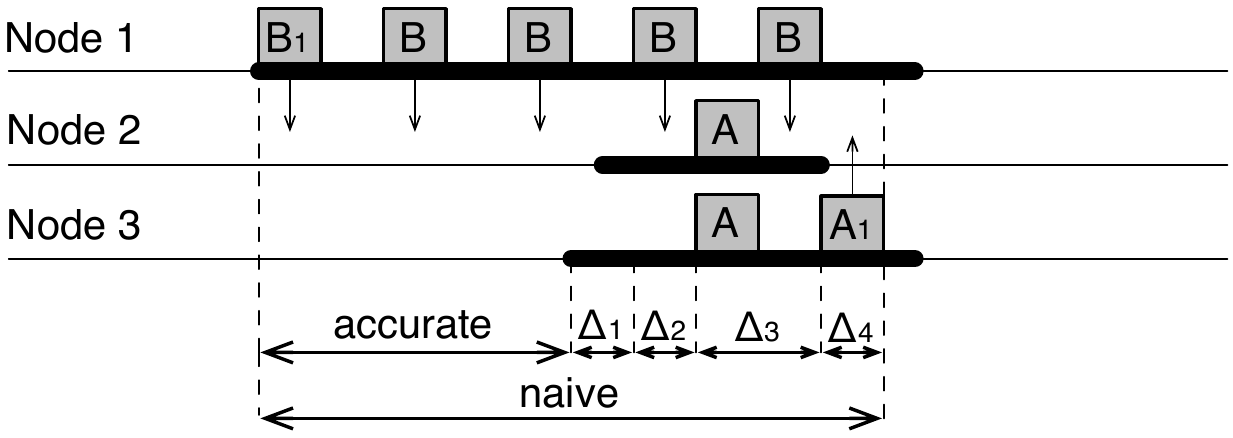} \caption{Delays of the rendezvous measure.}
	\label{fig:estreme_delays} 
\end{figure}

Encountering the first node that wakes up is a necessary, but not sufficient, step to properly estimate the neighborhood cardinality. \emph{The rendezvous time of nodes needs to be measured accurately}.  
 
\paragraph{Overcoming delays.} In principle, the measuring of the rendezvous time should stop at the moment when the first neighbor wakes up. But, as Figure~\ref{fig:estreme_delays} shows, the \emph{Naive} mechanism includes different kinds of delays; namely, collisions ($\Delta${\small3}), the time required to transmit the radio packets ($\Delta${\small2}, $\Delta${\small4}), and the listening time between the actual wake-up and the moment when the beacon is sent ($\Delta${\small1}). This initial listening period $\Delta${\small1} is a typical feature of LPL protocols such as SOFA and X-MAC~\cite{Buettner2006} and can be as long as the inter-beacon interval $t_b$. 

To obtain an accurate measurement, we need to subtract these delays from the naive measurements. In our implementation, this is done through the receiver (node 3), by starting a timer once the receiver wakes up and piggybacking the elapsed time on the acknowledgments ($\Delta${\small1} + $\Delta${\small2} + $\Delta${\small3}). Assuming a fixed bit rate and a fixed acknowledgement size, both reasonable assumptions, $\Delta${\small4} can be computed off-line as a constant and be systematically subtracted by the initiator after the acknowledgement is received. In our case $\Delta${\small4} = 1.1\,ms.

\paragraph{Accurate timing in Contiki OS.}
In Contiki, the clock library allows measurements with a maximum precision of 2.83\,ms. This coarse-grained measure is not suitable for our estimation purposes. To ensure the maximum possible accuracy, we measured the rendezvous times using Contiki's real-time module (\emph{rtimer}). Its precision depends on the processor's clock frequency (32 KHz in our devices) and allows to measure time-ranges with sub-milli\-second accuracy. 

\subsection{Improving the estimation process}
\label{sec:estreme_average}

As any other estimator based on order statistics, Estreme is bound to the law of large numbers: the larger the number of samples, the closer the mean gets to the expected value and the more accurate the estimation. Unfortunately, in mobile networks nodes cannot collect many samples because they only have a limited amount of time to capture the current status of their neighborhoods. The central question is hence, \emph{how can we facilitate a fast gathering of samples?} Estreme's design tackles this problem in the temporal (T-Estreme) and spatial (S-Estreme) domains.

\paragraph{Averaging samples in time.}  Intuitively, given a certain period $T$, we would like nodes to gather as many samples as possible during that period. The key characteristic to achieve this goal is an efficient use of bandwidth, that is, to spend as little time as possible gathering each sample. By using SOFA's primitive (opportunistic anycast), \emph{Estreme reduces the use of bandwidth to the minimum time required to observe an event}. To process these samples, T-Estreme utilizes a simple moving window average (MWA) filter, where the last $w$ samples of the rendezvous time are averaged to estimate the cardinality. We tried other robust statistical methods based on the median and on alpha-trimmer filters, but there was not much difference with regards to MWA (because requirements 2 and 3 already remove most outliers). Like most MWA estimations, the tradeoff in the time domain is between accuracy and adaptability: the bigger $w$, the more time it takes to adapt to changes in cardinality.

\paragraph{Averaging samples in space and time.} Different from most cardinality estimators in the literature, Estreme is designed from inception to allow \emph{all} nodes to perform concurrent estimations. This characteristic is not only good because some practical applications require it, but perhaps more importantly because it can quadratically increase the convergence of the estimation. In S-Estreme, besides sending the delay information described in Section~\ref{sec:estreme_accurate}, a receiver also sends the average of its own time window. \emph{Every node is hence able to process $w^2$ samples in the same time required to locally collect $w$ samples}. Note that in S-Estreme, it is preferable to choose $w$ based on the expected cardinality of the neighborhood $n$. With $w > n$, some nodes' samples will be counted multiple times. With $w < n$, on the other hand, only part of the neighbors' samples will be taken into account by the estimator. While in our experiments $w$ is kept constant, it is also possible to dynamically adapt $w$'s size based on the cardinality estimation provided by T-Estreme.
It is important to highlight that while in most cases S-Estreme will significantly improve the estimation process, in case of drastic (spatial) changes in network density, errors can be introduced due to the spatial smoothing that is inherent to S-Estreme. This phenomenon is analyzed in Section~\ref{sec:estreme_results} and, later, in \chapref{chapter:nemo}.

\section{Evaluation}
\label{sec:estreme_results}

\begin{figure}
	\begin{center}
		\includegraphics[width=0.6
		\textwidth]{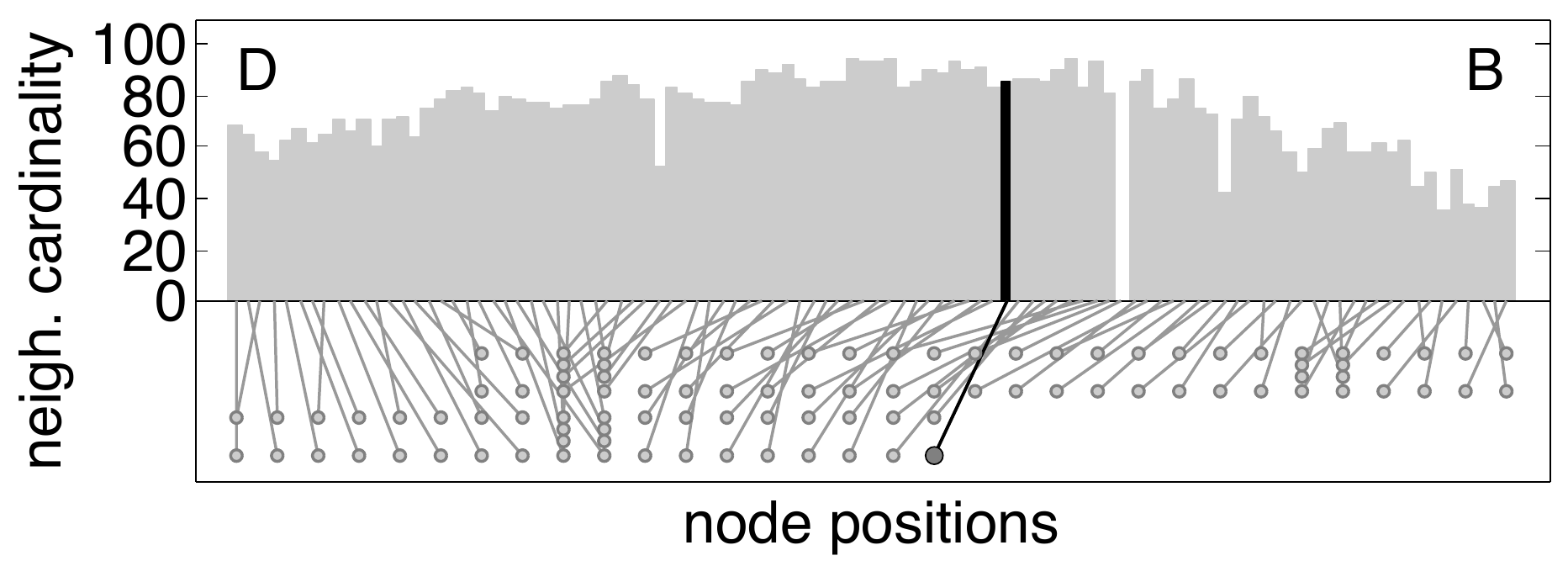} 
	\caption{Relation between the topology of the testbed (bottom) and neighborhood cardinalities (top). For specific experiments, only the results from a representative node (in black) are shown.} \label{fig:estreme_testbed} 
	\end{center}
\end{figure}

To evaluate Estreme, we ran an exhaustive set of experiments on a testbed
consisting of 100 sensor nodes equipped with an MSP430 processor and a CC1101
radio. The testbed is deployed in the ceiling of our offices and covers
an entire floor of the building. As the testbed was operated throughout
the week, Estreme was subject to all kinds of environmental conditions (e.g.,
human activity during office hours and temperature swings during the weekend)
inducing quite a variety of link dynamics, see~\cite{Woehrle2012b}. 

To test
network dynamics in a more controlled manner, we follow a twofold approach:
(i) we systematically turn on/off various nodes in the network, and (ii) we provide a few people with nodes and have them walk around our floor and
building to test transitions between areas with different node densities.

\paragraph{Ground truth}: In testbed experiments, it is
difficult to define the ground truth of the neighborhood cardinality. This
occurs because the quality of links is highly variable, and cardinality
changes significantly over time. This high variability is not particular to
our testbed. In~\cite{Zuniga2010}, the authors report that in the Twist testbed,
the cardinality of some nodes oscillate between 12 and 17 in periods of 30
minutes. 

For each experiment, we compute the ground truth cardinality of a node
as the number of neighbors that have been \emph{observed} at least once
during the duration of an experimental run. A node is \emph{observed} if it
successfully exchanges at least one acknowledgement with the corresponding
initiator. Note that the ground truth cardinality is an upper bound that
is seldom reached. The fact that a node is observed during a period $t_w$
does not imply that it will be observed in the next period(s). Hence, all the
estimation errors reported in this section are \emph{worst case errors}.

Figure~\ref{fig:estreme_testbed} shows the network topology (bottom part) together with
the neighborhood cardinality (upper part) of each node in a \emph{typical}
experiment. Due to the shape of the building (narrow and long), when the
maximum power is used (+10dBm), the central nodes communicate with most
of the network, while the nodes on the two far-ends reach approximately
half of the other nodes. 
This setup creates a network with a minimum diameter of 2 and offers a wide range of
cardinalities, approximately from 35 to 85 neighbors. These network cardinalities allow us to understand the behavior of Estreme in case of \emph{un-even densities}. 
Also, to provide a more accurate
comparison of different methods, in some cases we use a representative node
with low cardinality variability. This representative node is marked black
in Figure~\ref{fig:estreme_testbed}.
%

\paragraph{Metrics and parameters}. To evaluate the accuracy
of Estreme, we computed the \emph{relative error} of the estimations as $\left|{(\hat{n}-n})/{n}\right|$, where $\hat{n}$ is the estimated neighborhood
size and $n$ is the ``ground truth'' neighborhood cardinality. With regards
to the parameters, unless stated otherwise, our experiments use the default settings listed in Table~\ref{tab:estreme_parameters}.
\begin{table}[!t]
\renewcommand{\arraystretch}{1.3}
\small
\centering
\begin{tabular}{c|c|c}
\hline
Symbol & Parameter & Value\\
\hline \hline
$t_w$ & wake-up period &  1\,second\\
$t_s$ & sampling period & 1\,second\\
$w$ & window size & 50 samples\\
$n$ & network size & 100 nodes\\
\hline
\end{tabular}
\caption{Default parameter settings.}
\label{tab:estreme_parameters}
\end{table}

\vspace{2mm}\noindent\textbf{Window size exploration.} We evaluated the performance of T-Estreme with different window sizes (from 10 to 100) and different network sizes (10, 50, and 100 nodes). 
Figure~\ref{fig:estreme_time_window_exploration} shows the results for
a central node (the black node in Figure~\ref{fig:estreme_testbed} in the full testbed
(100 nodes)).  T-Estreme reaches a plateau at $w=50$. For $n=10$ and $n=50$
(results not shown) we obtained similar results. Since bigger windows will
reduce the estimator's agility to adapt to changes in cardinality, we chose
$w=50$ as the default value. 
When both time and spatial information is taken into account, the
cardinality can be estimated as follows: 
$$ \hat{n} = \hat{n}_T\,\alpha +
\hat{n}_S\,(1-\alpha),$$ 
where $\hat{n}_T$ is the cardinality estimated
by T-Estreme, i.e. considering only the node's own samples, and
$\hat{n}_S$ is the cardinality estimated by S-Estreme, i.e. considering
only the node's neighbors averages. In our work, we only report the
extremes. When T-Estreme is used  $\alpha=1$, when S-Estreme is used
$\alpha=0$. Clearly, both estimations, temporal and spatial, can be combined by fine-tuning $\alpha$ according to the needs of the application. 
Figure~\ref{fig:estreme_space_window_exploration} shows the results for
S-Estreme. Due to the gains in spatial correlation, S-Estreme with $w=10$
provides similar results to T-Estreme with $w=50$. 

Note that for network
sizes beyond 70, the performance of S-Estreme decreases. This phenomenon is
due to the smoothing behavior caused by spatial averaging and it is evaluated
in more detail later in this section.  
\begin{figure}
	\begin{center}
		\subfloat[Exploring different window sizes ($w$) for T-Estreme]{. 
		\label{fig:estreme_time_window_exploration}  
		\includegraphics[width=0.5
		\textwidth]{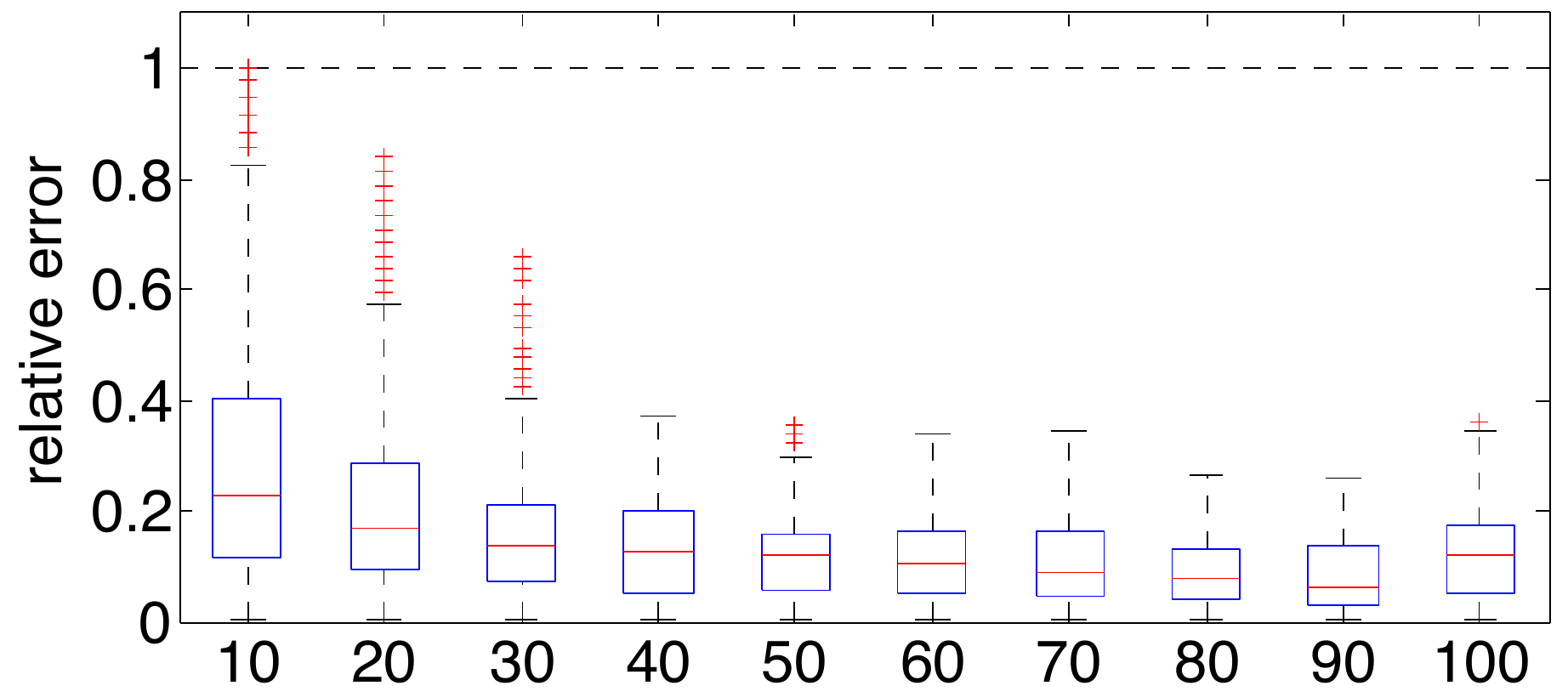} }\\
		\subfloat[Exploring different window sizes ($w$) for S-Estreme.]{
		\label{fig:estreme_space_window_exploration}
		\includegraphics[width=0.5
		\textwidth]{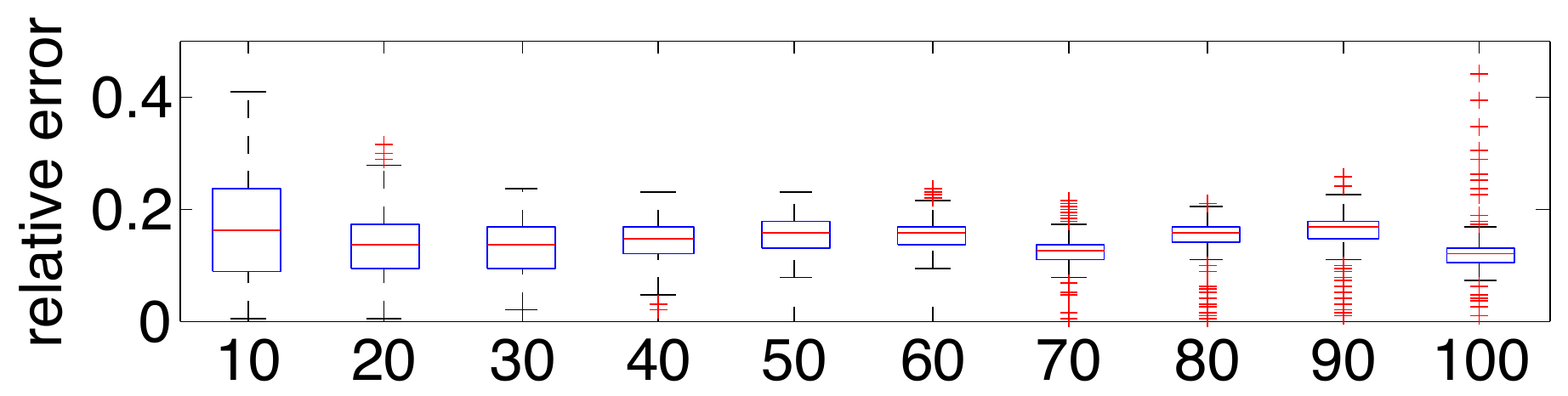} }
		\end{center}
	\caption{Exploring the error and variance of Estreme for different sizes of the sampling windows.} \label{fig:estreme_window_exploration} 
\end{figure}

\paragraph{Comparing with the state of the art.} To evaluate the performance of Estreme, we implemented a \emph{baseline} estimator combining the ideas of three studies~\cite{Ammar1991,Demirbas2011,Zeng2013}. The specific difference of Estreme with each one of these studies is described in the related work section (Section~\ref{sec:estreme_relatedwork}). The baseline estimator works as follows. The initiator broadcasts a request and the subsequent time is divided into 10 time slots. At each slot $m$, nodes jam the channel with probability $\frac{1}{2^m}$. The idea of jamming the channel is taken from \cite{Zeng2013}. Sending raw radio signal strengths (jamming), instead of sending packets, permits estimations in the order of a few ms (like Estreme). The idea of using time slots is borrowed from~\cite{Demirbas2011}. This time-slotted method permits coarse-grained but fast estimations of large cardinalities (which is also the focus of Estreme). Note that the baseline method requires all nodes to be awake at the moment of the estimation, while our method is asynchronous and requires only two active nodes at a time.

\begin{figure}[ht]
	\centering 
	\includegraphics[width=0.96 
	\textwidth]{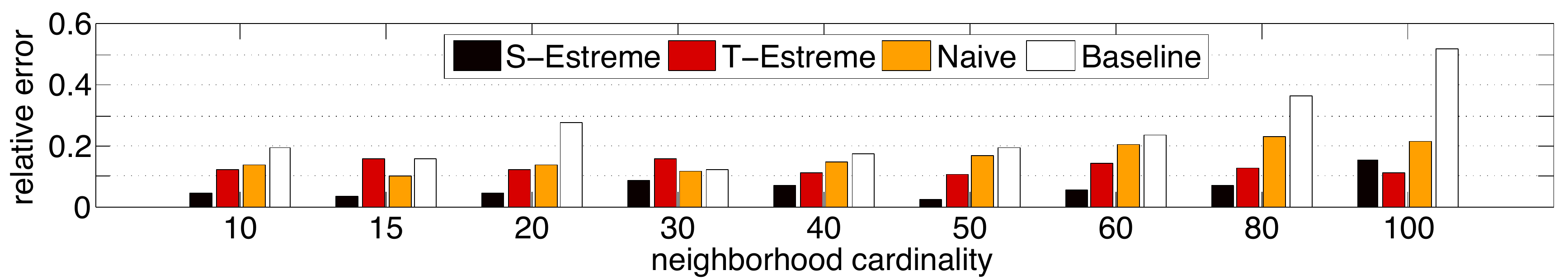} \caption{Estreme's relative error compared to the baseline.}
	\label{fig:estreme_baseline} 
\end{figure}

\subsection{Results}
This section tackles two key aspects of Estreme:  (i) the ability to provide accurate cardinality estimations for large neighborhood sizes while performing concurrent estimations, and (ii) the agility to adjust to network dynamics. 

\paragraph{Insight 1.} \emph{In the temporal domain, T-Estreme provides an accuracy that is similar to those of current solutions for low cardinalities, but at high cardinalities the performance of T-Estreme is many times better}. Figure~\ref{fig:estreme_baseline} depicts the accuracy for various neighborhood sizes and various methods. For each tuple $<$method, neighborhood size$>$, the experiment ran for an hour. For T-Estreme, S-Estreme and Naive, the estimation process runs concurrently on all nodes. For Baseline, the estimation runs only on the \emph{representative} node (the black node in Figure~\ref{fig:estreme_testbed}), because running Baseline on all nodes requires synchronizing their requests, otherwise collisions occur. At low cardinalities (under 40), there is no clear gain for T-Estreme. Even further, while Baseline takes $\approx$10 ms for each estimation, T-Estreme takes on average 1000/$n$ ms, e.g. 100 ms for 10 nodes. Under these conditions, Baseline would have sufficient bandwidth to perform concurrent communications as well (in spite of requiring an extra synchronization step). At higher cardinalities however, the accuracy of T-Estreme remains around 10\%, while Baseline's performance deteriorates significantly. At $n=100$, Baseline and Estreme spend a similar amount of time on each estimation. The reason for the poor performance of Baseline is its coarse-grained logarithmic approach. The estimation could be made more accurate by following a linear approach, as in~\cite{Zeng2013}, but this would require a high level of synchronization and longer times, which may not be permissible in dynamic settings.

\begin{figure}
	\begin{center}
		\subfloat[Averaged estimation accuracy.]{
		\label{fig:estreme_error_bound}
		\includegraphics[width=0.5
		\textwidth]{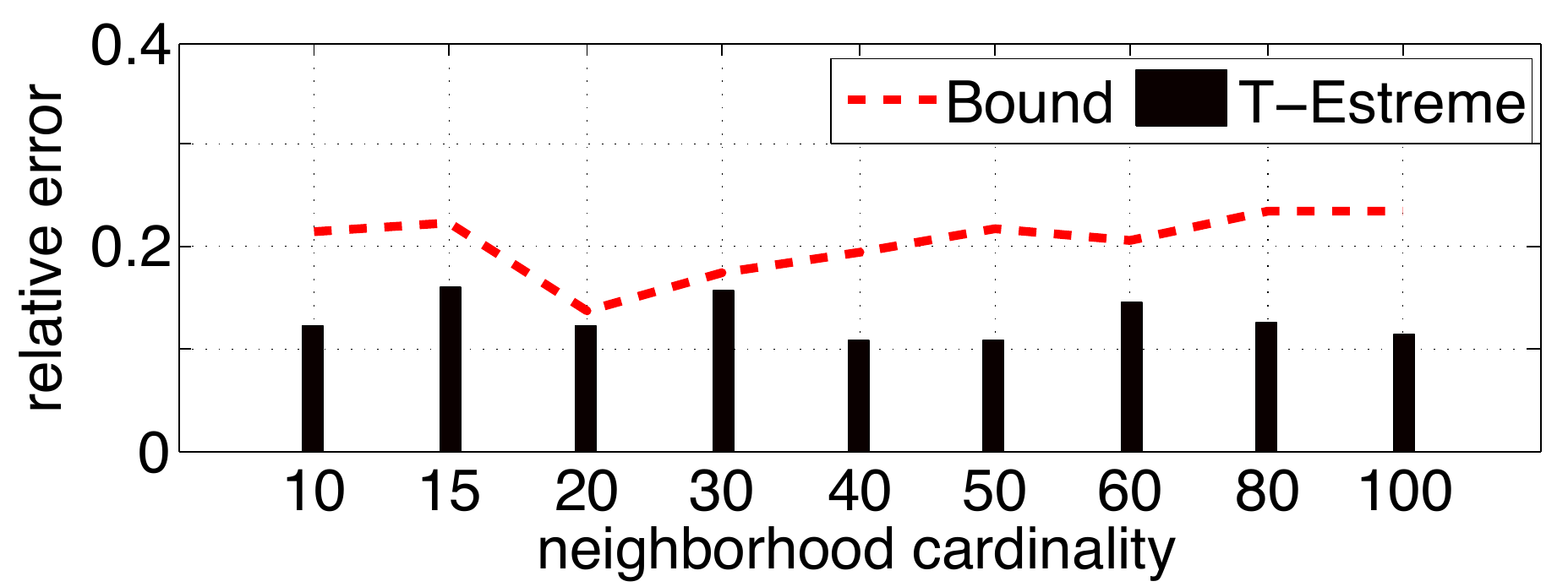} }\\
		\subfloat[Underlying distributions (at the node level) of Estreme's observation errors.]{ 
		\label{fig:estreme_error_distribution}  
		\includegraphics[width=0.5
		\textwidth]{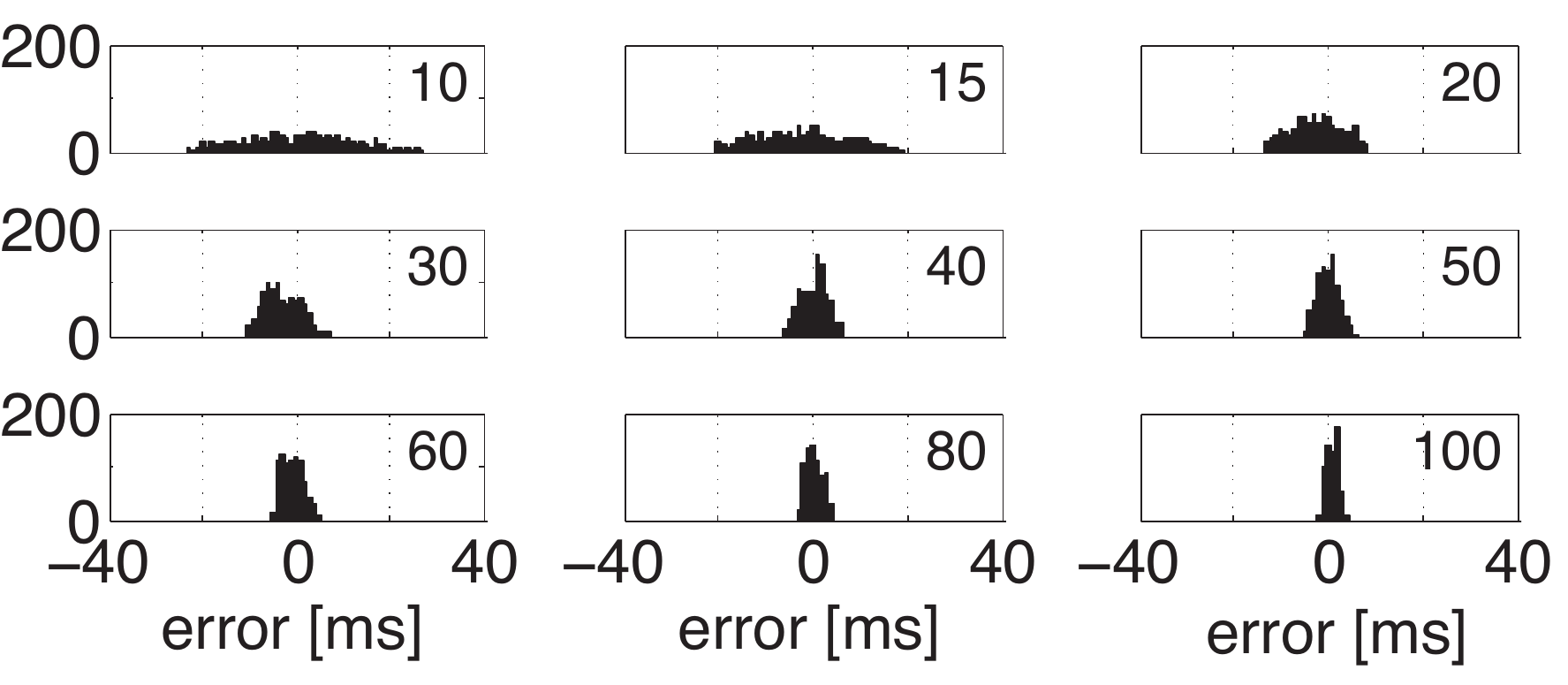} } 
		\end{center}
	\caption{Estimation and observation accuracies of T-Estreme across different cardinalities.} \label{fig:estreme_error_exploration} 
\end{figure}

An important characteristic of T-Estreme is that its accuracy remains rather stable for various cardinalities, between 10\% and 15\%, and the error bound validates this trend, see Figure~\ref{fig:estreme_error_bound} . None of the other estimators shown in Figure~\ref{fig:estreme_baseline} share this characteristic. Intuitively the error should increase as $n$ increases because 1\,ms of error at $n=10$ should matter less than the same error at $n=100$. This rather stable behavior occurs because when $n$ increases, not only the expected rendezvous time decreases linearly but so does the range of errors, see Figure~\ref{fig:estreme_error_distribution}. These trends cancel each other out. At this point it is also important to notice that the distribution of the rendezvous samples follows a Gaussian-like distribution. 

\begin{figure}
	\begin{center}
		\subfloat[T-Estreme in a static scenario with un-even cardinalities.]{ 
		\label{fig:estreme_space_avgtime}  
		\includegraphics[width=0.48
		\textwidth]{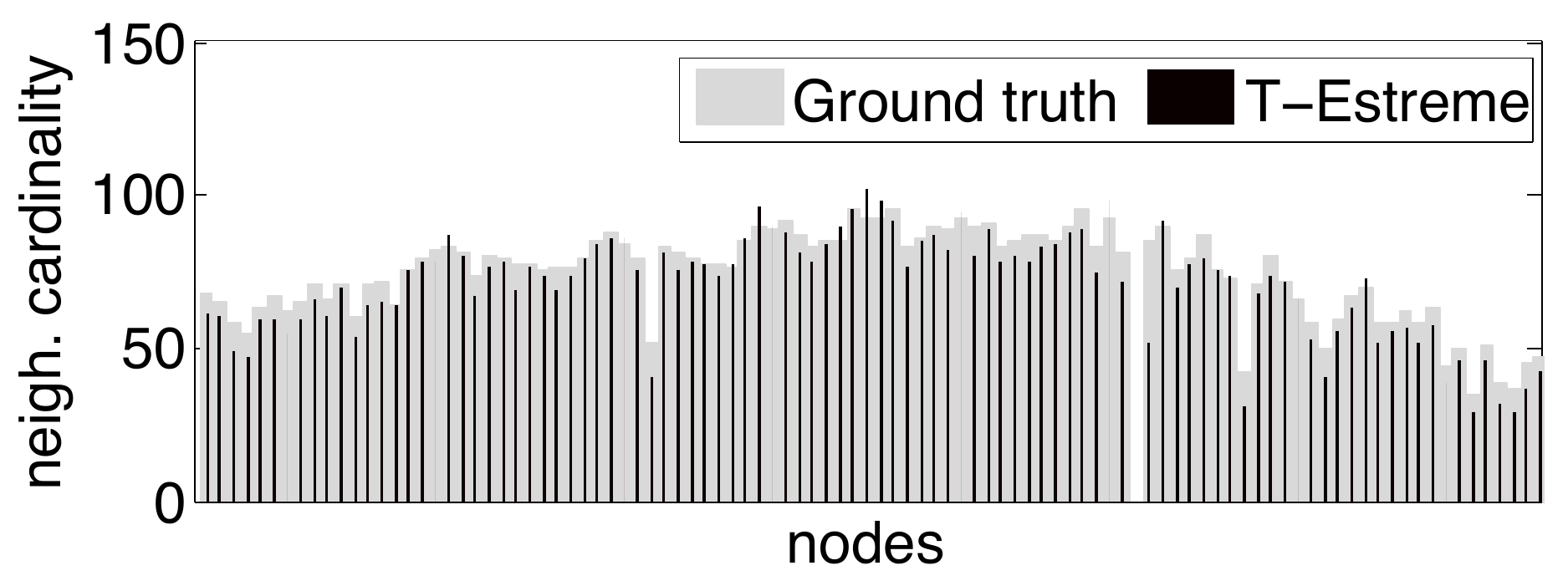} } 
		\subfloat[S-Estreme in a static scenario with un-even cardinalities.]{
		\label{fig:estreme_space_avgspace}
		\includegraphics[width=0.48
		\textwidth]{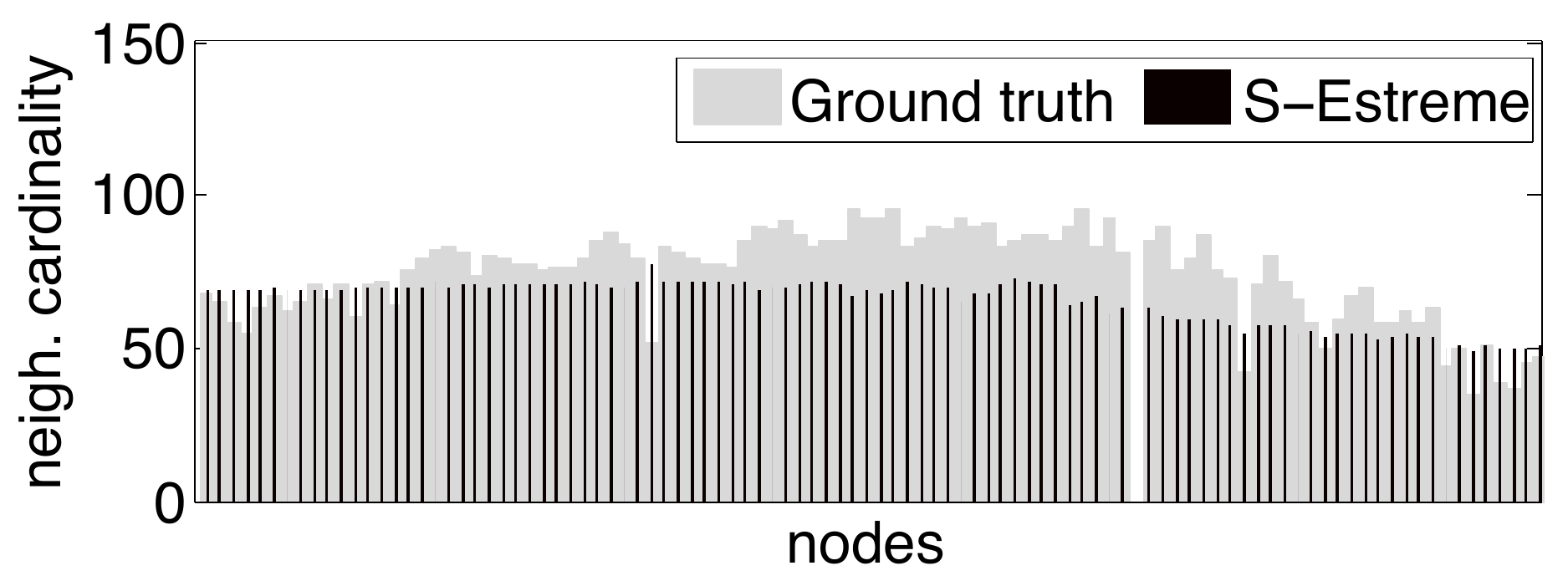} }
		\\
		\subfloat[T-Estreme with cardinality that grows over time (30 nodes every 30 minutes).]{ 
		\label{fig:estreme_time_avgtime}  
		\includegraphics[width=0.48
		\textwidth]{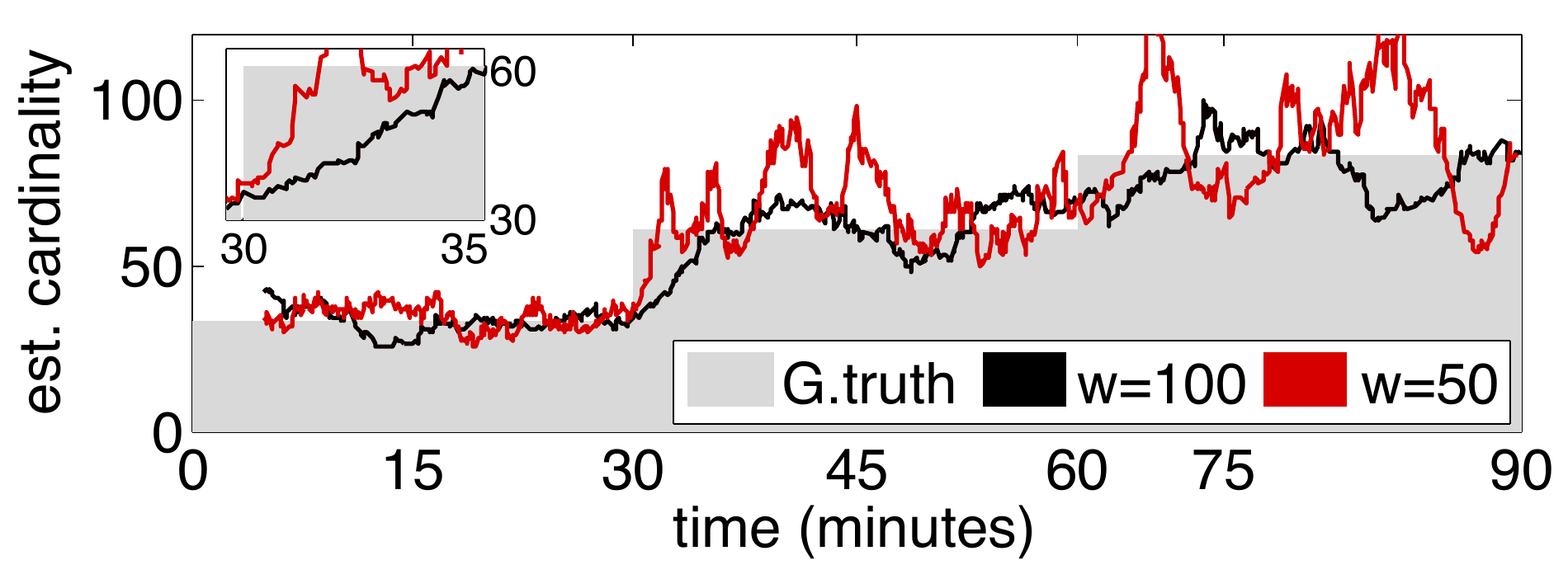} } 
		\subfloat[S-Estreme with cardinality that grows over time (30 nodes every 30 minutes).]{
		\label{fig:estreme_time_avgspace}
		\includegraphics[width=0.48
		\textwidth]{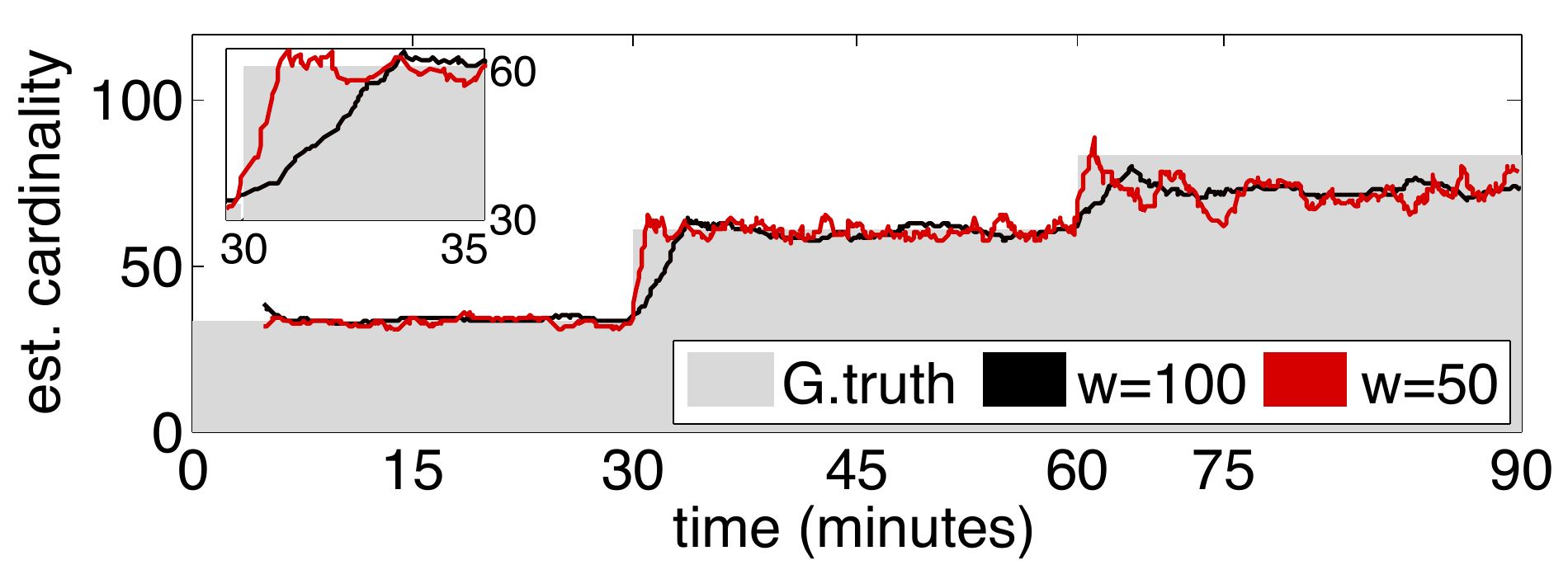} }
		\end{center}
	\caption{Estreme estimations in networks with spatial (top) and temporal (bottom) variations in neighborhood cardinality.} \label{fig:estreme_timespace_exploration} 
\end{figure}
\paragraph{Insight 2.} \emph{In the spatial domain, S-Estreme outperforms current solutions at low and high cardinalities, but in scenarios with un-even deployments the estimation accuracy is compromised due to spatial averaging effects}. The ability of S-Estreme to collect data in a quadratic manner allows for a remarkable accuracy in a short period of time. Figure~\ref{fig:estreme_baseline} shows that for most cardinalities S-Estreme has an error under 5\%. The main exception occurs at $n=100$, and it is due to the spatial averaging of un-even densities. Figures~\ref{fig:estreme_space_avgtime} and~\ref{fig:estreme_space_avgspace} capture the spatial averaging effect. In T-Estreme, each node has a precise view of its own neighborhood, as shown by the accurate mapping between the individual node estimations and the ground truth (Figure~\ref{fig:estreme_space_avgtime}). In S-Estreme however, by averaging the views of its neighbors, a node with a low cardinality --with regards to its neighbors-- will overestimate its density, and vice versa (Figure~\ref{fig:estreme_space_avgspace}). Recall that we are showing the extremes of the parameter $\alpha$ (0 and 1), an intermediate $\alpha$ can be used to balance the benefits of T-Estreme and S-Estreme. It is also important to highlight that, while in principle spatial correlations can be applied to any estimator that exchanges packets with its neighbors, in mobile scenarios this spatial correlation requires a very efficient use of bandwidth (to be fast and to allow concurrency), which is one of the key characteristics of Estreme.

\paragraph{Insight 3.} \emph{Under network dynamics, Estreme adapts to sudden cardinality changes in a few minutes.} We run a series of 90-minute experiments in which the neighborhood cardinality grows. Starting from 30 nodes, every 30 minutes, 30 nodes are added. Figure~\ref{fig:estreme_time_avgtime} shows the estimations of the \emph{representative} node for T-Estreme. With twice the window size $w: 50\rightarrow 100$, the estimator has a lower variance (more accurate), but it takes three times longer (five minutes) to adapt to the change in the neighborhood cardinality (zoom-in on the top left corner of Figure~\ref{fig:estreme_time_avgtime}).  

S-Estreme, on the other hand, achieves a more accurate and faster convergence with $w=50$, see Figure~\ref{fig:estreme_time_avgspace}. In less than one minute it is able to adapt to the new cardinality. On the last jump however ($n: 60 \rightarrow 90$), S-Estreme suffers from the spatial averaging effect: the \emph{representative} node starts receiving averages from nodes at the far-ends of the testbed (lower density), resulting in an underestimation of the neighborhood size. Note that a 1-minute convergence implies that a person at walking speed (1\,m/s) covers approximately 60 meters while sampling the current neighborhood. Assuming a device with a transmission range of 50\,m, Estreme should be able to cope with the dynamics of practical environments.

As a final experiment, we equipped 3 colleagues with a sensor node and asked them to move according to a predefined path. The experiment lasted 50 minutes. In the first 15 minutes, we asked our colleagues to have a chat at the ground floor of the building (location A, no testbed coverage). After taking the elevator, they reached the 9th floor where the testbed is located and moved towards one far-end of the testbed where the cardinality is lower (location B, cf.\ Figure~\ref{fig:estreme_testbed}). After standing there for 15 minutes, they were asked to move slowly to the other end of the floor (location D). The slow movement (in section C) was required to get an accurate measurement of the ground truth: at each step we required approximately 10\,s to get a snapshot of the cardinality of the testbed node that was closest to the mobile node. 

Figure~\ref{fig:estreme_mobility} shows the estimated neighborhood cardinality of one of the mobile nodes. This figure highlights the tradeoff of T-Estreme and S-Estreme. If a quick estimation is required, S-Estreme is the best solution. On the other hand, if a more accurate, but longer, measurement is needed T-Estreme should be used. 
By considering both estimators is possible to combine the agility of S-Estreme with the accuracy of T-Estreme (see the trade-off estimator in Figure~\ref{fig:estreme_mobility} with $\alpha =0.5$).

\begin{figure}
	\centering 
	\includegraphics[width=0.5 
	\textwidth]{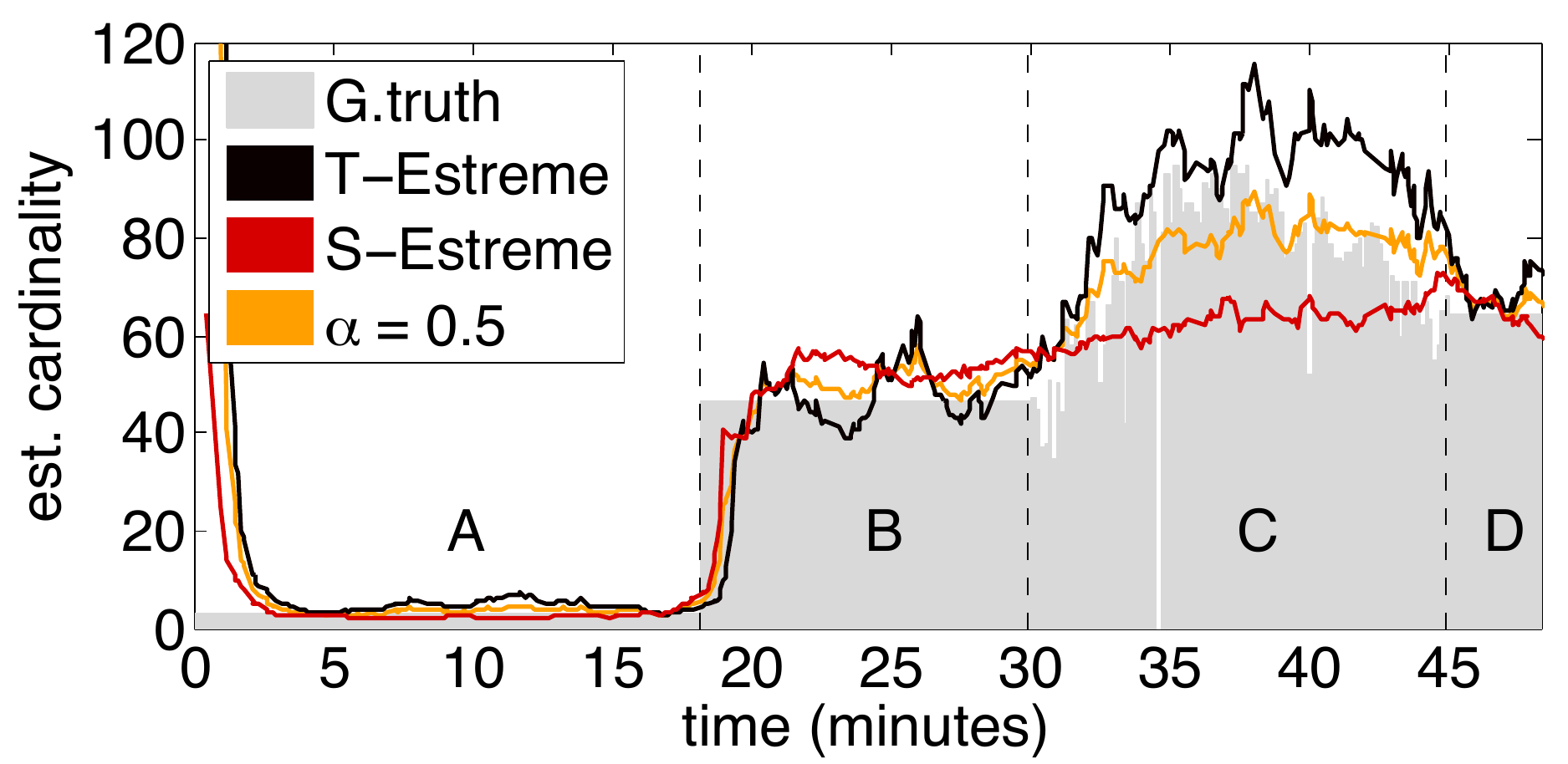} \caption{Estreme's estimations in a 50-minute experiment with a group of 3 mobile nodes.}
	\label{fig:estreme_mobility} 
\end{figure}

\paragraph{Energy efficiency and bandwidth utilization.}
Estreme is not only a simple method, its overhead is very limited. T-Estreme
only needs to piggyback the extra-delays incurred by the receiver (2 bytes).
S-Estreme adds another 2 bytes to convey the current cardinality average of the
receiver. In our implementation, the final data overhead is thus 4 bytes per
estimation. This low overhead together with the short rendezvousing of SOFA,
not only leads to a very efficient utilization of bandwidth, but also to a very
low duty cycle (energy consumption). Figure~\ref{fig:estreme_energy} shows the average
duty cycle of nodes running Estreme on our testbed. The overhead of Estreme is
quite limited, the average duty cycle is between 5\% and 2\%, and decreases
with growing neighborhoods (as the time spent on rendezvousing decreases).
\begin{figure}
	\centering 
	\includegraphics[width=0.5 
	\textwidth]{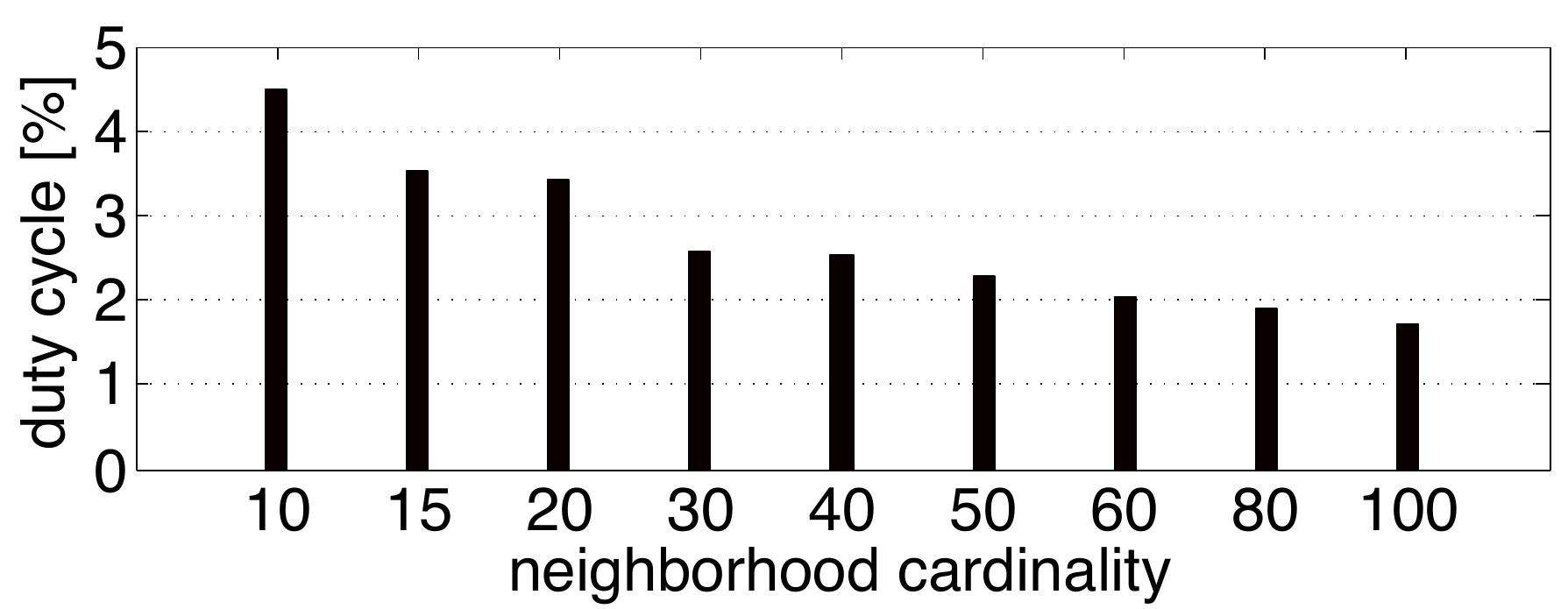} \caption{Average duty cycle of nodes with different cardinalities.}
	\label{fig:estreme_energy} 
\end{figure}

\paragraph{Exploring the parameter space.} Throughout our experiments, we used some default values for the parameters of both the underlying MAC and Estreme itself, namely, the wake-up period ($t_w$) and the sampling period ($t_s$). These parameters influence the performance of Estreme and, hence, it is important to assess them more thoroughly. We tested Estreme with two different wake-up periods, 0.5 and 1 second, and with five different sampling periods, 0.5, 1, 2, 5 and 10 seconds. Each $<$$t_s, t_w$$>$ tuple was run for an hour using 100 nodes. Table~\ref{tab:estreme_limit} show the result of our experiments. The metrics of interest are the relative error and the relative sampling rate. The relative sampling rate captures the percentage of sampling requests that are successfully completed. 

\paragraph{Relative error.} As predicted by our model, increasing $t_w$ decreases the error. Changing the sampling period, instead, does not seem to have a clear consequence on the estimation error, except for very low $t_s$ and $t_w$, which may indicate some channel saturation.

\paragraph{Relative sampling rate}. Ideally, we would like every request from an initiator to be completed, but as $t_s$ decreases this is not possible due to channel saturation issues. For the default setup in our experiments $<$$t_w=1s, t_s=1s$$>$ only approximately 30\% of the requests are successful. This relative low rate occurs because, in Estreme, an initiator that observes and on-going estimation goes back to sleep (according to SOFA's back-off mechanism). As the channel saturates, more and more requests are silently canceled. It is important to highlight however that the fraction of requests that are successful provides the same rendezvous accuracy as of those  requests that are performed on lightly-loaded channels. In general, the \emph{relative sampling rate} can be raised by reducing $t_w$, but results in larger estimation errors, so is to be avoided.

\section{Related work}
\label{sec:estreme_relatedwork}
We start by providing a comparison of Estreme within the (mobile) sensor network domain. We then broaden our scope and overview solutions proposed in RFID systems and mobile phone networks. For a concise overview of the related work see Table~\ref{tab:estreme_sota}. 

\paragraph{Mobile sensor networks.} Although neighborhood cardinality estimation is an essential building block of various protocols, it has been little explored within the context of (mobile) sensor networks. We follow by comparing to the three overarching approaches: 

(i) \emph{Polling and group testing mechanisms are the closest to Estreme}. Their goal it to efficiently estimate the cardinality of the $n$ neighbors that hold a given property. Different from Estreme, these mechanisms do not specifically target mobile networks and, in some cases, do not estimate the \emph{exact} number of neighbors. Backcast~\cite{Dutta2008} tests if $n>0$, while~\cite{Demirbas2011} tries to asses if $n$ crosses a certain threshold. If $n>0$, Strawman~\cite{Osterlind2012} is able to efficiently identify one of the responders. Among the various polling mechanisms, the one that is the most related to our work is the study by Zeng \etal~\cite{Zeng2013}, where the authors propose two different methods. LogPoll is able to estimate the logarithmic cardinality of a neighborhood, \ie $\log{n}$, in just $4\,ms$, while LinearPoll is able to recognize all the identities of the responders at the cost of spending more energy and time. This remarkable performance comes at a high cost and relies on several assumptions. LogPoll and LinearPoll model and periodically calibrate the signal strength of neighbors, effectively limiting the applicability of these estimators to static networks with minor link dynamics. Additionally, LinearPoll can only count a limited number of neighbors, because it requires neighbors to send radio signal strengths that are $\Delta$ apart. LogPoll, on the other hand, can only estimate $\log{n}$, which gets increasingly coarse-grained as $n$ grows. 

\begin{table}
\renewcommand{\arraystretch}{1.3}
\small
\centering
\begin{tabular}{c|ccccc}
\hline
\multicolumn{6}{c}{Relative error ($\left|\hat{n}-n\right|/n$)} \\	
\hline\hline
$t_w$ &  \multicolumn{5}{c}{Sampling period $t_s$ (s)} \\
(s) & 10 & 5 & 2 & 1 & 0.5 \\	
\hline
0.5 & 0.16 & 0.16 & 0.29 & 0.31 & 0.24\\
1 & 0.13 & 0.11 & 0.14 & 0.11 & 0.07\\
\hline
\end{tabular}
\quad
\begin{tabular}{c|ccccc}
\hline
\multicolumn{6}{c}{Relative sampling rate (\%)} \\	
\hline\hline
$t_w$ &  \multicolumn{5}{c}{Sampling period $t_s$ (s)} \\
(s) & 10 & 5 & 2 & 1 & 0.5 \\	
\hline
0.5 & 93.4 & 76.9 & 58.7 & 40.2 & 23.9\\
1 & 81.6 & 73.3 & 49.3 & 31.8 & 17.7\\
\hline
\end{tabular}
\caption{Relative error and sampling rate for different $t_w$ and $t_s$.}
\label{tab:estreme_limit}
\end{table}

\begin{table}[!t]
\renewcommand{\arraystretch}{1.3}
\small
\centering
\begin{tabular}{@{}c|l@{\hspace{1em}}c@{\hspace{1em}}c@{\hspace{1em}}c@{\hspace{1em}}l@{\hspace{1em}}l@{\hspace{1em}}l@{\hspace{1em}}c@{}}
\hline
Work & Evaluation & Mobile & Scale & Error & Technique & Estimator & Device & Concurrent\\
\hline\hline
\cite{Kannan2012} & Real world & yes & 25 & 3-35\% & Audio tones & LoF~\cite{Qian2011} & Phone & few \\

\cite{Weppner2011}  & Real world & yes & 50 & 20-70\% & Bluetooth & Classifier & Phone & few \\


\cite{Zeng2013} & Testbed & no & 32 & 1\% & RSSI meas.& Classifier & Mote & no \\

\cite{Iyer2011}  & Simulation & yes & 100 & 25\% & Radio obs.& MLE est. & Mote & yes \\

\cite{Qian2011} & Simulation & yes & 65K & 10\% & Data packets & LoF~\cite{Qian2011} & RFID & no \\

\cite{Han2010a} & Simulation & yes & 10K & 1\% & Data packets & Order stat. & RFID & no \\
\hline
\hline
Estreme & Testbed~\footnote{In \chapref{chapter:nemo} Estreme is also evaluated in real-world conditions.} & yes & 100 & 10\% & Radio obs.& Order stat. & Mote & yes \\
\hline
\end{tabular}
\caption{Comparison of the various methods to estimate neighborhood cardinality. In the ``Technique'' column, ``meas.'' stands for measurements and ``obs.'' for observations. In the ``Estimator'' column, ``est.'' stands for estimator and ``stat.'' for ``statistic''.}
\label{tab:estreme_sota}
\end{table}

(ii) \emph{Cardinality estimation can also be provided through neighbor discovery mechanisms like Disco~\cite{Dutta2008}, WiFlock \cite{Purohit2011} and U-Connect~\cite{Kandhalu2010}}. Unfortunately, even with the help of acceleration techniques such as ACC~\cite{Zhang2012}, these mechanisms take in the order of 10 minutes or more, which is too slow to cope with the mobile networks addressed by Estreme. 

(iii) \emph{Last, the NetDetect~\cite{Iyer2011} algorithm, while similar in spirit to Estreme, makes stronger assumptions and achieves lower estimation accuracy.} Similar to Estreme, NetDetect relies on the underlying distribution of packet transmissions to estimate neighborhood cardinality. Nevertheless, with the same default window as Estreme (50 samples), it estimates the cardinality of a 100-node network with a relative error of 0.25 (in simulations). Furthermore, unlike Estreme which is built on top of duty-cycling MAC protocols, NetDetect uses the Aloha protocol and assumes that the radio is always on.

\paragraph{RFID}. An area in which cardinality estimation plays a central role is Radio-Frequency Identification (RFID). Different from sensor networks, RFID systems are designed to track and monitor thousands of goods in storage facilities. These systems assume a centralized initiator, called \emph{reader}, and a set of cheap, passive devices called \emph{tags}. To estimate the neighborhood cardinality (the number of tags), the reader starts a \emph{response collection} process, which is usually based on a TDMA scheme. This process consists of multiple trials, each one having multiple slots. Within each trial, every tag selects a slot in which it sends a response to the reader. This slot is chosen based on the tag's state, a random number, and a command sent by the reader at the beginning of each trial.

Based on the number of observed responses, collisions, and empty slots the initiator is able to estimate the population size~\cite{Qian2011}. This process is repeated iteratively multiple times (trials) to improve the accuracy of the estimators. While usually the reader cannot terminate a trial until the last assigned slot is observed, in~\cite{Han2010a} the authors propose to use order statistics to drastically reduce the trials' duration (an initiator needs to observe only the first $k$ responses).
Compared to Estreme, this work is able to achieve an accuracy of 1\% and scale to thousands of nodes. On the other hand, it has only been tested in simulations and requires two orders of magnitude more samples than Estreme. 

Adapting such a mechanism (and others from the RFID community) to wireless sensor networks is possible but very challenging. For a correct observation, neighbors will need to be synchronized, so that responses in adjacent slots do not collide. Moreover, a global scheduling mechanism should be in place to guarantee that at any moment in time only one node in the neighborhood can act as a reader (initiator).

Recently, Chen \etal~\cite{Chen2013} discovered that having a 2-phase estimation (a rough one followed by a more refined one) is the key to improve the estimation accuracy, independently of the chosen technique. In light of this, we plan to extend Estreme with a second estimation phase that will use the IDs of the encountered neighbors to improve the estimation accuracy.

\paragraph{Mobile phones.}
Initial studies have used mobile phones to estimate the density of crowds. 
The most relevant work uses audio tones to count neighbor devices~\cite{Kannan2012}. The main challenge involved is to successfully transmit data packets using low-quality speakers/microphones, as well as to cope with the presence of environmental auditive noise. Thanks to the richer computational capabilities of mobile phones, techniques such as Fast Fourier Transform can be used to code the signal. Energy efficiency is also addressed, but on a different scale than Estreme. Here the comparison is against the typical consumption of a WiFi network interface. Finally, similar to Estreme, this system is able to support multiple, concurrent estimations. While in theory the method scales to hundreds of devices by using multiple frequencies, the system was only tested with two concurrent estimators.

An interesting alternative to cardinality estimation is proposed in \cite{Weppner2011}. The phones scan for discoverable bluetooth devices and, based on the number of unique identifiers discovered, an estimation of the crowd density is performed. Different from Estreme and the other related works, this work focuses on classifying the density into four classes instead of providing a cardinality estimation. Even though the estimation task is simpler, unrealistic assumptions on the distribution of bluetooth-enabled devices results in estimation errors that vary from 20 to 70\%.

\section{Conclusion}
\label{sec:estreme_conclusions}
In this chapter we addressed the issue of determining the neighborhood cardinality in Extreme Wireless Sensor Networks, where link quality fluctuations and
node mobility ask for a robust and agile approach. An additional and important focus of our work is to handle high densities and concurrent estimations, requirements that are necessary in public safety applications such as crowd monitoring. Traditional approaches cannot meet these stringent requirements, so we developed Estreme: a cardinality estimator based on monitoring the inter-arrival times of (randomized) events raised by neighboring nodes \ie SOFA's wake-ups (\chapref{chapter:sofa}). 

To minimize channel usage, which is essential in EWSNs, Estreme
leverages on the short rendezvous time offered by SOFA's opportunistic anycast. 
To obtain high accuracy, Estreme employs a bag of tricks to account for various overheads, (transmission) latencies, collisions, and other factors that all distort
the true rendezvous time with the first node to wake up. We derived a
theoretical model underlying the rendezvous times, and used it to gain
insight into the accuracy of Estreme as well as to derive bounds on the estimation error.

As a proof of concept, we implemented Estreme on the Contiki OS, and evaluated it on a testbed with node densities up to 100 nodes. Estreme achieves solid performance results with typical estimation errors
below 10\%, which compares favorably to state-of-the-art solutions. Estreme was
also demonstrated to handle abrupt changes in density exemplified through an
experiment involving few nodes moving through our testbed.

Later in \chapref{chapter:nemo} we will also evaluate Estreme on several hundreds of devices moving in a large-scale science museum, showing that Estreme is able to scale to extreme mobile environment as soon as the spatial changes in density are not too drastic.


\chapter{Opportunistic Data collection}
\label{chapter:staffetta}

\blfootnote{Parts of this chapter have been published in SenSys'16, Stanford, USA~\cite{Cattani2016}.}

\epigraph{Kal-toh is not about striving for balance. It is about finding the seeds of order, even in the midst of profound chaos.}{Lieutenant Tuvok}

\dropcap{I}{n} Wireless Sensor Networks, data collection is a solved problem. That is, from the packet-reception
perspective, with protocols reported to collect more than 99.9\,\% of all data. 
However, data collection is still a challenge when viewed from the real-world perspective, where networks dynamics and energy conditions (see \chapref{chapter:introduction}) play a major role. 
In that light, ORW is the first protocol to effectively use opportunistic anycast in WSNs~\cite{Landsiedel2012} to perform data collection. 
ORW shows that compared to traditional routing schemes, such as CTP~\cite{Gnawali2009}, opportunistic forwarding decisions can significantly improve data collection performance in asynchronously duty-cycled networks in terms of end-to-end packet latency and energy efficiency, while being more resilient to topology changes. The latter characteristic is essential for data collection mechanisms to scale to EWSNs. 

As seen in \chapref{chapter:sofa}, the idea of opportunistic routing is simple: instead of first making the routing decision and then waiting for the destination to wake up, nodes forward packets opportunistically to the first neighbor that wakes up. The more potential forwarders a node has, the shorter the time it needs to wait before it communicates, and hence the more efficient it can operate~\cite{Landsiedel2012}

Opportunistic routing (ORW, SOFA, \etc) typically assumes that all nodes have the same duty cycle, and thus, they all end up having the same forwarding costs in terms of energy and delay. But nodes play different roles. The closer a node is to the sink, the more packets it has to forward, and hence, the lower its forwarding costs should be. Duty cycles should be adjusted to the local needs of each node. This is indeed the idea of various adaptive duty-cycling mechanisms.
However, these have only been studied analytically~\cite{Kim2011,Kim2010}, thus lacking validation against the real-world dynamics of low-power wireless, or have been designed for static networks and traditional communication primitives such as unicast~\cite{Jurdak2007,Meier2010,Zimmerling2012}, which renders them unusable in an opportunistic setting and extreme network conditions.

Inspired by the promising results of opportunistic protocols, we propose a tight integration of duty cycling and opportunistic routing with a mechanism called \emph{Staffetta}. 
Staffetta is not another routing protocol, but a duty-cycle scheduling mechanism for opportunistic data collection that adapts the wake-up frequency of individual nodes such that nodes closer to the sink wake up more often than nodes at the edge of the network. The resulting \emph{activity gradient} drastically reduces routing costs
and automatically ``steers'' packets in the right direction, as the probability of finding an awake neighbor is highest in the direction of the sink.
Since Staffetta operates at the medium access control~(MAC) layer, it can be combined with different opportunistic data collection protocols that operate at the network layer, targeting real-world applications featuring dynamic multi-hop networks with data rates on the order of one packet every few seconds~\cite{Ceriotti2009,Dutta2010}.
To the best of our knowledge, Staffetta is the first mechanism that applies the idea of dynamic duty cycling to opportunistic data collection. 

We implement Staffetta in Contiki and run extensive experiments on the FlockLab and Indriya testbeds to quantify the improvements of Staffetta underneath traditional opportunistic protocols, which typically do not adjust the nodes' wake-up frequency.
We find that, despite its simplicity, Staffetta considerably improves data collection performance, most notably in terms of packet latency and network lifetime, while incurring minimal overhead.
From a broader perspective, we show that Staffetta's activity gradient can also be used as a routing metric by itself, achieving performance similar to other routing mechanisms empowered by Staffetta, but without the need for maintaining a complex routing structure.

\vspace{-.5ex}
\paragraph{Contributions.} This chapter makes three main contributions:
\begin{itemize}
\item \secref{sec:staffetta_mechanism} presents the design of Staffetta, a lightweight, generic mechanism that adapts the duty cycle of individual nodes to improve opportunistic data collection. By studying the relation between duty cycling and opportunistic communication, we show how Staffetta's activity gradient reduces packet latency, while extending network lifetime.

\item \secref{sec:staffetta_implementation} describes how one can efficiently implement and combine
Staffetta with existing opportunistic routing protocols. As a paradigmatic case study, we choose three opportunistic mechanisms inspired by state-of-the-art collection protocols and their routing metrics: Expected Duty Cycle~\cite{Landsiedel2012}, Queue Backlog~\cite{Moeller2010}, and SOFA's Random Walk~\cite{Cattani2014a}.

\item \secref{sec:staffetta_results} evaluates the above described routing metrics in combination with Staffetta on two testbeds. Our results show that Staffetta reduces end-to-end packet latency by an order of magnitude and halves energy consumption across the board.
A case in point is the substantial improvement Staffetta brings to ORW in one of the testbeds: two to six times less energy consumption, 80 to 450 times shorter latencies and faster adaptation to mobile sinks. These gains are obtained with a negligible overhead in terms of memory and processing.  
\end{itemize}

\vspace{1ex}
\noindent
In light of our contributions, we review related work in \secref{sec:staffetta_sota} and conclude in \secref{sec:staffetta_conclusions}.

\section{Problem Statement}
\label{sec:staffetta_problem}

\begin{table}
\centering

\begin{tabular}{lc}
\hline
Description & Symbol\\
\hline
Energy budget & $\text{DC}_{max}$\\
Number of forwarders & $n$\\
Wake-up frequency & $\omega$\\
Hop distance to sink & $h$\\
Forwarding delay & $\delta_F$\\
Rendezvous time & $\delta_R$\\
Transmission time & $\delta_{tx}$\\
\hline
\vspace{-1.5ex} \\
\multicolumn{2}{p{.68\columnwidth}}{\small When appropriate symbols will be indexed by node number (i) or hop distance from the sink (h).} \\
\end{tabular}
\caption{Mathematical notations}
\label{tab:staffetta_notations}
\end{table}

In duty-cycled networks, two main metrics determine the efficiency of data collection. First, the time until a packet is collected at the sink, known as \emph{packet latency}. 
Second, the \emph{routing overhead}, which accounts for the resources (messages, time, energy, \etc) spent by each node to forward messages.
%
The goal of many data collection protocols (and their routing metric) is to \emph{minimize} the packet latency $\Delta$ with the least possible amount of additional resources. More formally, consider a collection protocol and denote by $P = \{ u_0, u_1, \ldots, u_{h}\}$ the sequence of nodes forming the path of length $h$ (in hops) from node $u_0$ to the sink $u_{h}$. 
In an asynchronous duty-cycled network, the packet latency over path $P$ is given by
\begin{align} 
	\Delta_P = \sum_{i = 0}^{h-1} {\delta_F(i)},
\label{eq:staffetta_min_goal}
\end{align} 
where the \emph{forwarding delay} $\delta_F(i)$ is the time node $u_{i}$ needs
to send a packet to $u_{i+1}$. This delay can be subdivided into two different
components: the time node $u_i$ waits for its destination to wake up (\emph{rendezvous time}, $\delta_R(i)$) and the time required to forward the packet (\emph{transmission time}, $\delta_{tx}$):
\begin{align}	
 	\delta_F(i) = \delta_R(i) + \delta_{tx}.
	\label{eq:staffetta_fwd_delay}
\end{align}

When $\delta_R$ and $\delta_{tx}$ are constant across all nodes\footnote{This is the case for most collection protocols for WSNs that have fixed wake-up frequencies and communicate with unicast primitives.}, the efficiency of data collection is improved by minimizing the path length $h$.
However, this is not the case for \emph{opportunistic routing}, where $\delta_R(i)$ can drastically change (shorten) depending on the number of forwarders $n_i$ and their wake-up frequencies.
%
This is because in opportunistic routing, messages are routed to the \emph{first} available neighbor that provides enough routing progress. The more forwarders a node has, the sooner one of them will wake up.
%
%
Concretely, when all forwarder wake-up frequencies are identical and equal to
$\omega$, it is known~\cite{Cattani2014b,Ghadimi2014} that the rendezvous time $\delta_R(i)$ has an expected value of
\begin{align}
	 \E{\delta_R(i)} = \frac{1}{(n_i + 1)\,\omega}.
	\label{eq:staffetta_expected_rendezvous}
\end{align}
%

This rendezvous process has two key problems.
First, the following trade off arises: increasing the number of forwarders $n_i$ reduces the expected $\delta_R(i)$, but increases the path length (due to non-optimal routing paths). On the other hand, with few forwarders the path length is reduced, at the cost of longer forwarding delays. 
Therefore, finding the right number of forwarders is a complex balancing act~\cite{Dubois-ferriere2011},
which is key to efficient data collection in opportunistic routing\footnote{\mbox{In ORW, the best trade off was empirically found to be 5 forwarders~\cite{Landsiedel2012}}}. 
Second, the expected rendezvous (eq.~\ref{eq:staffetta_expected_rendezvous}) is load-agnostic. A node closer to the sink has the same rendezvous cost as a leaf node, but it needs to rendezvous way more often (due to its higher forwarding load). Having the same expected rendezvous time across the network is like designing a city using only streets (no avenues or highways): latency and energy consumption increase unnecessarily. To solve these two problems, we propose a low-overhead and fully distributed duty-cycle scheme that adapts the rendezvous time of each node according to its local needs. \emph{The basic idea behind our solution is very simple. Instead of setting a constant wake up frequency for all nodes, we set a constant energy budget. All nodes will die at the same time. This means that nodes need to make local decisions regarding their use of power: if their observed rendezvous time is long (short), they should wake up less (more) often.} As we will see, this adaptive scheme is general enough to run underneath various opportunistic routing methods, and improves the performance of these methods on all fronts: the delivery rate is maintained or even improved slightly, the lifetime is increased and the latency is reduced. 


\section{Staffetta}
\label{sec:staffetta_mechanism}
\begin{figure}
\begin{center}
	\subfloat[Fixed (static duty cycle)]{
	\includegraphics[width=0.6\linewidth]{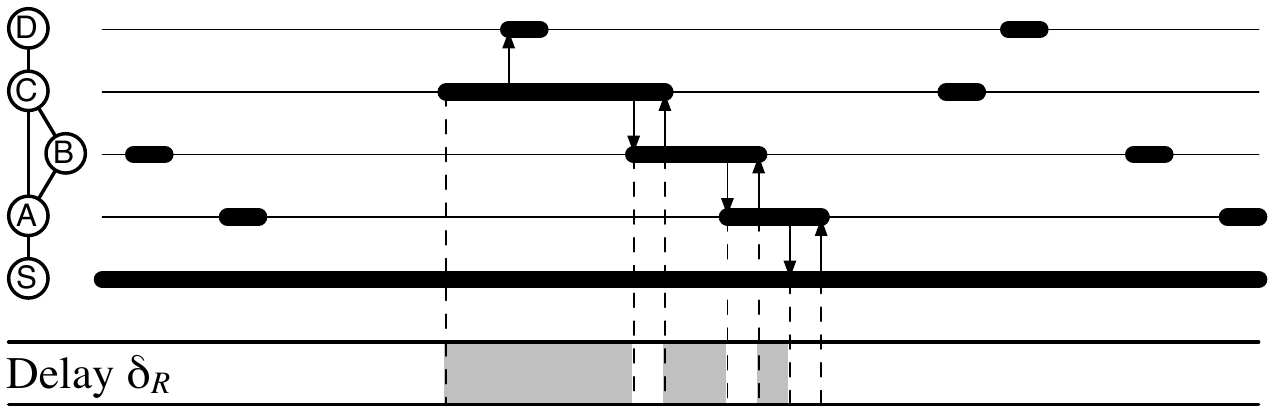}
	\label{fig:staffetta_opp}
	}
	\\
	\subfloat[Staffetta (dynamic duty cycle)]{
	\label{fig:staffetta_opp_staffetta}
	\includegraphics[width=0.6\linewidth]{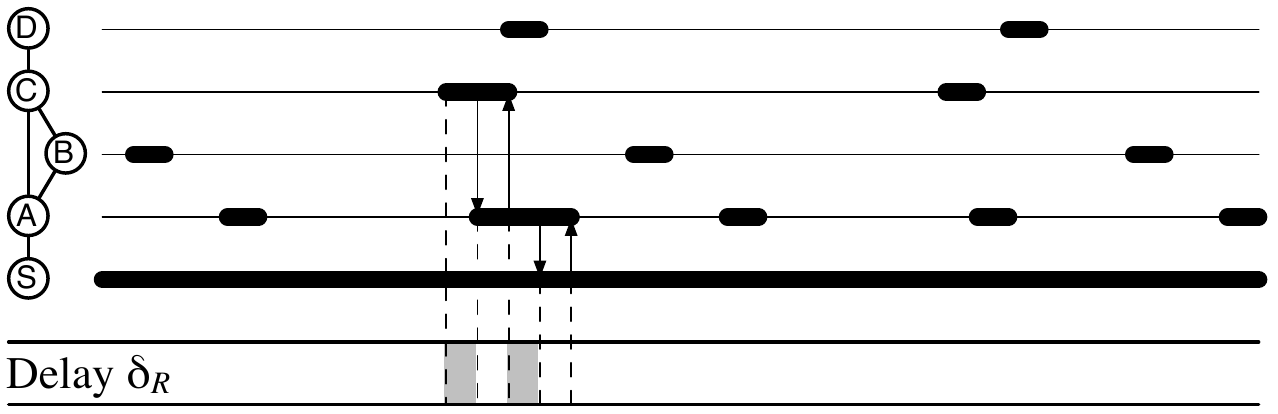}
	}
	\end{center}	
	\caption{Opportunistic routing of a message from node C to the sink S (see topology on the left). By letting the node closer to the sink (node A) wake up twice as often, the rendezvous times (in grey) and the number of hops are reduced.} 
	\label{fig:staffetta_effects}
\end{figure}

Staffetta builds upon the intuition that, by 
letting nodes closer to the sink wake up more frequently than others, 
one can drastically shorten the forwarding delay of messages, while reducing the average path length. 
As an example, Figure~\ref{fig:staffetta_effects} compares an opportunistic collection mechanism with fixed wake-up frequency (\ref{fig:staffetta_opp}) against one where the node closer to the sink \ie node A, wakes up twice as frequently (\ref{fig:staffetta_opp_staffetta}). Since in Figure~\ref{fig:staffetta_opp_staffetta} A increases its activity, it wakes up before B and gets selected as forwarder by C. This not only reduces the forwarding delay of C, but also decreases the number of hops needed to deliver C's message to the sink.

In opportunistic routing, the forwarding decisions of nodes are based on two aspects: i) the rendezvous time, and ii) the routing metric. 
Nodes forward their message to the \emph{first} node that wakes-up and provides \emph{enough} routing improvement (trading optimal routing decision for lower forwarding delays). 
Thus, by making nodes closer to the sink wake up more often, it is possible to improve opportunistic data collection in two ways: rendezvous with nodes closer to the sink becomes i) \emph{more probable} and ii) \emph{more efficient}. 

\subsection{Activity Gradient.}
\label{sec:staffetta_activity_gradient}
Staffetta adjusts the wake-up frequency of nodes in order to create an \emph{activity gradient}, where nodes closer to the sink wake up more often than the ones further away. Staffetta achieves this in a fully distributed fashion and with minimal overhead, as explained next. 

\vspace{2mm} \noindent i) \emph{Fix an energy budget.} Given a network operational lifetime, as required by the application, Staffetta imposes a maximum duty cycle $DC_{max}$, $0 < DC_{max} < 1$, that is equal for all nodes in the network. This value determines the maximum fraction of time each node can keep its radio on. Since the radio is often the most power-hungry component on a typical low-power sensor node, $DC_{max}$ effectively ensures that nodes stay within a fixed energy budget.

\vspace{2mm} \noindent ii) \emph{Adjust the wake-up frequency to meet the energy budget.} Given an observed forwarding delay $\hat{\delta}_F(i)$, each node computes its wake-up frequency as 
\begin{equation} 
\omega_i = \left\{ 
\begin{array}{ll}
  \infty & \text{sink (always on)}\\
  DC_{max} \, / \, \hat{\delta}_F(i) & \text{otherwise}.
\end{array} 
\right.
\label{eq:staffetta_wakeup_frequency}
\end{equation} 
Note that, different from the model in Equation~\eqref{eq:staffetta_fwd_delay}, $\hat{\delta}_F$ is based on the nodes' observations, thus it takes into account implementation specific delays such as packet retransmission and channel sensing. Whenever the observed $\hat{\delta}_F$ changes, nodes update their wake-up frequency.

\vspace{2mm} \noindent \emph{How does this mechanism create an activity gradient?}
In the presence of a node that is always on, \eg a sink, the activity gradient emerges as follows. All nodes start with a fixed wake up frequency, as in traditional opportunistic routing. However, since the sink does not duty-cycle its radio, its neighbors experience extremely short forwarding delays that are approximately equal to $\delta_{tx}$ (the rendezvous with the sink is almost instantaneous). 
Thus, any node residing 1-hop from the sink adapts its wake-up frequency to
\begin{equation}
	\omega_1 \approx {DC_{max}} \, / \, {\delta_{tx}}.
	\label{eq:staffetta_onehop_rendezvous}
\end{equation}
The higher activity of 1-hop nodes will, in turn, reduce the forwarding delay of 2-hop nodes, which will then increase their wake-up frequency. 
This \emph{cascading effect} quickly propagates (with a decaying factor) throughout the network, creating what we call 
the \emph{activity gradient} of Staffetta. 

\subsection{Analysis}

While the behavior of a single entity (e.g., adjusting the wake-up frequency) is easy to describe and understand, the emergent behavior of the system (\eg the resulting activity gradient) is more complex and difficult to predict.
To gain a further understanding about Staffetta's gradient formation, we now present a numerical evaluation and a simple model. \emph{This analysis is not meant to define the assumptions or guidelines used for the practical implementation of Stafetta, but to use a clean setup to capture its macro properties.} A thorough description of Staffetta's practical implementation is presented in the next section.

Figure~\ref{fig:staffetta_matlab} shows a numerical evaluation of opportunistic routing with and without Staf\-fetta. The evaluation is based on the simple network shown at the bottom of the figure. Except for the sink's neighbors, each node has 3 forwarders, an initial wake-up frequency of $\omega = 5$\,Hz, a data rate of 1\,Hz and $DC_{max}$=0.15. 
For the case without Staffetta (white bars), the wake up frequency is constant and so is the forwarding delay, as captured by eq.~(\ref{eq:staffetta_expected_rendezvous}). With Staffetta (orange bars), the short rendezvous of the sink's neighbors leads to a high wake up frequency, as shown in eq. (\ref{eq:staffetta_wakeup_frequency}). The neighbors that are 2-hops away observe a higher forwarding delay than the one observed by the sink's neighbors and adjust their wake up frequency accordingly. This process continues further down the chain and has two important consequences. First, the packet delivery delay is reduced significantly, because end-to-end latency is determined by the sum of the forwarding delays at all hops, as described in eq.~(\ref{eq:staffetta_fwd_delay}). Second, the energy consumption (duty cycle) is also reduced drastically, because the nodes with the heaviest loads spend less time and energy forwarding packets (the duty cycle can be roughly estimated as the forwarding delay times the forwarding load).

Another important trend is that the wake-up gradient decreases exponentially. We now derive a simple model to understand how the gradient's shape and steepness are controlled by the forwarding topology and the energy budget. 
For mathematical tractability, we assume that i) no collisions or message retransmissions occur, and ii) all nodes have the same number of forwarders $n$. 
Given that the forwarding delay perceived by nodes at $h$ hops from the sink is $\E{\delta_R(h)} + \delta_{tx}$, we can describe their activity gradient as
\begin{equation} 
	\omega_h \approx \frac{DC_{max}}{\E{\delta_R(h)}+\delta_{tx}} = \frac{DC_{max}}{1/((n+1)\omega_{h\text{-}1})+\delta_{tx}}.
\end{equation} 
Since usually in duty-cycled networks $\delta_R \gg \delta_{tx}$, we can further simplify the above expression by setting $\delta_{tx} = 0$ for all nodes with $h>1$. Thus, we obtain
\begin{equation} 
	\omega_{h} \approx DC_{max}(n+1)\omega_{h\text{-}1} = (DC_{max}(n+1))^{h\text{-}1} \omega_1.
	\label{eq:staffetta_gradient_model}
	\end{equation}

Even though the resulting model is a simple approximation it captures the characteristics of the wake-up gradient.
According to this model, the activity gradient attains the maximum frequency at the sink's neighbors ($\omega_1$) and decreases with geometric rate \mbox{$\Eb(n+1)$}. This geometric rate maps the trend observed in Figure~\ref{fig:staffetta_matlab}.
Overall, eq.~\ref{eq:staffetta_gradient_model} conveys two important points: 
i) smaller energy budgets $\Eb$ result in steeper activity gradients, and
ii) increasing the number of forwarders $n$ flattens the gradient.

Thus, in case of dense networks, it is possible to increase the number of forwarders and lower the energy budget $\Eb$ without affecting the resulting activity gradient. Lowering the energy budget $\Eb$ will extend the overall network lifetime, since in Staffetta the energy budget drives the maximum energy consumption of all nodes in the network (\cf duty cycle in Figure~\ref{fig:staffetta_matlab}).

\begin{figure}
	\centering
	\includegraphics[width=0.8\linewidth]{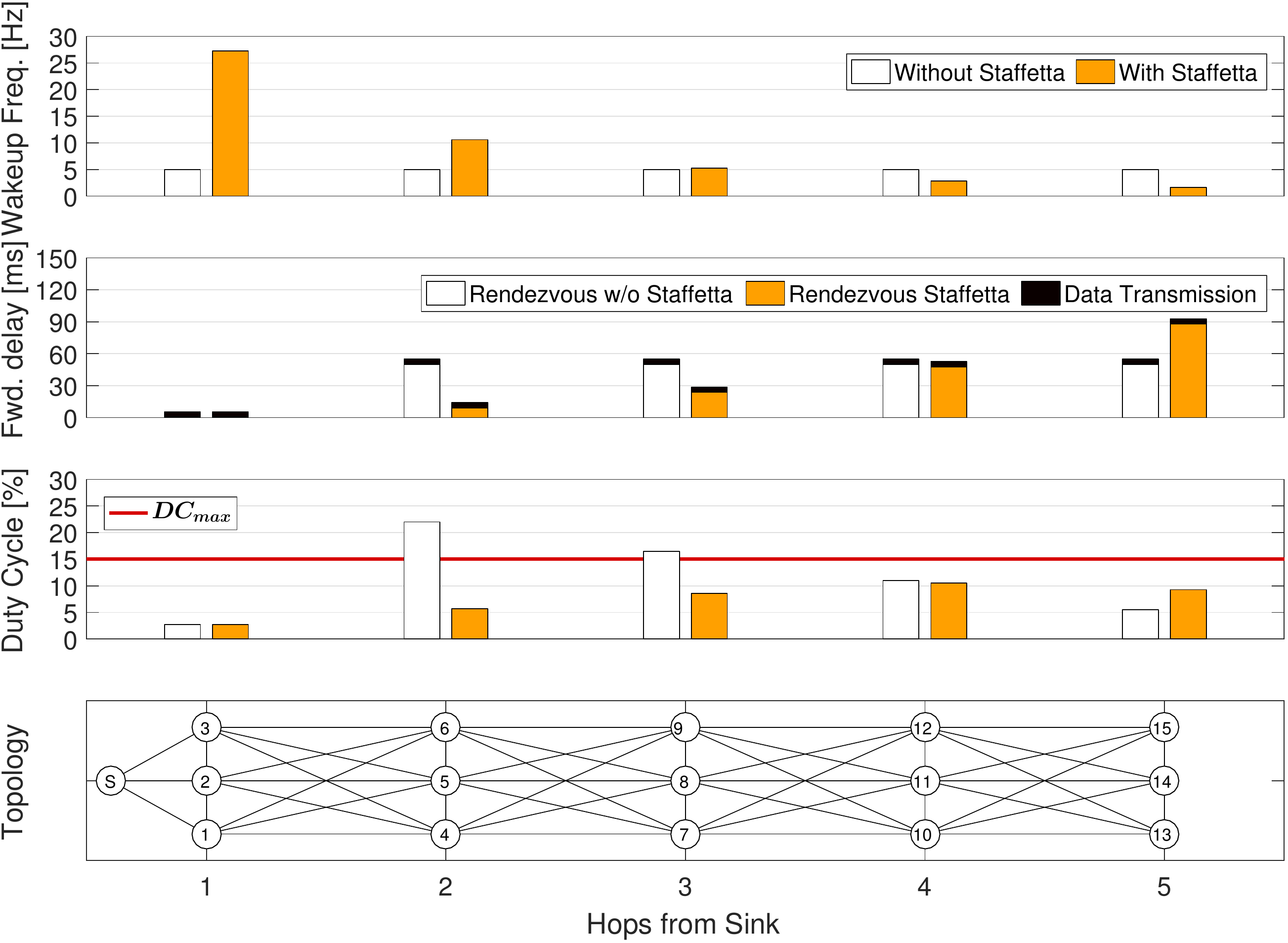}
	\caption{Staffetta's activity gradient in a simple topology. The closer nodes are to the sink (left side), the higher the wake-up frequency, and thus, the smaller the forwarding delay. Because the wake-up frequency is inversely proportional to the forwarding delay, the resulting energy consumption is equal for all the nodes in the network ($DC_{max}$).} 
	\label{fig:staffetta_matlab}
\end{figure}

\subsection{Understanding Staffetta's Performance Gains.}
Next, we analyze in more detail the benefits of Staffetta's activity gradient on packet latency, routing overhead, and network lifetime, the three key metrics of data collection applications~\cite{Gnawali2009}.

\paragraph{Packet Latency.}
The end-to-end packet latency depends on the \emph{forwarding delays} and the \emph{path length} between source and sink. In the previous subsection we noted that by drastically reducing the forwarding delay of nodes closer to the sink, the overall end-to-end latency is also reduced, \cf Equation~\eqref{eq:staffetta_fwd_delay}. Now we will discuss how Staffetta further reduces end-to-end latency by implicitly choosing short paths and exploiting temporal links.

While traditional collection protocols minimize latency by routing along a shortest-path routing tree, opportunistic protocols reduce the forwarding delay by selecting the first available node among a set of \emph{candidates} that improve the routing metric over a certain threshold.
Therefore, the wake-up frequency plays an important role in reducing packet latency of opportunistic mechanisms. First, it affects the forwarding delay of nodes; the higher the frequency, the lower the delay (\cf Equation~\eqref{eq:staffetta_expected_rendezvous}). 
Second, it biases the forwarding choices of nodes. In particular, given a node~$i$ with $n_i$~forwarders, each with wake-up frequency $\omega_1 \dots \omega_{n_i}$, the probability of selecting node $j$ as a forwarder is
\begin{align}
	\Prob{ij} = \frac{\omega_j}{\omega_1 + \dots + \omega_{n_i}}.
 	\label{eq:staffetta_p_fwd}
\end{align}
When forwarders wake up with a fixed frequency, they have the same probability of being selected. 
Thus, as mentioned in Section~\ref{sec:staffetta_problem}, the size of the set of potential forwarders set must be chosen carefully. 
Too many forwarders and the performance will perform similar to a random walk (long paths, but short forwarding delays); 
too few forwarders and the mechanism will resemble deterministic routing (short paths, but long forwarding delays). 

With Staffetta, on the other hand, the number of forwarders becomes a secondary concern: forwarding choices are inherently biased towards the nodes that are closer to the sink \ie the ones that wake up more often (\cf equations~\eqref{eq:staffetta_gradient_model} and \eqref{eq:staffetta_p_fwd}). 
Therefore, \emph{Staffetta improves opportunistic protocols by biasing the forwarding choices towards the candidate nodes that provide the highest routing progress, effectively reducing the path length of messages}.

Even if a ``bad'' node (\ie a forwarder that offers little progress) is included in the forwarding set, the selection probability will be low as the wake-up frequency decreases with the distance to the sink. 
The effect is that Staffetta ``routes'' along paths whose length is comparable to mechanisms based on shortest-path routing trees, while maintaining the benefits of opportunistic protocols, namely, shorter forwarding delays, better load balancing, and exploitation of temporal links, that is, short-lived, long-distance connections~\cite{Srinivasan2010}.

The exploitation of temporal-links is particularly interesting, because in Staffetta temporal links that provide better progress towards the sink, will have a higher probability of being selected. 
Consider for example node 8 in Figure~\ref{fig:staffetta_matlab}, with three forwarders (nodes 4, 5 and 6) each waking up at frequency $\omega_4=\omega_5=\omega_6=11$\,Hz. According to Equation~\eqref{eq:staffetta_p_fwd}, the forwarding probabilities are $\Prob{84}=\Prob{85}=\Prob{86}=0.\bar{3}$. Now assume that, for a short period of time, node 8 is able to communicate with node 2, waking up at frequency $\omega_2=27$\,Hz. The forwarding probabilities are now heavily biased towards the temporally-available new node, with $\Prob{82}=0.45$, and $\Prob{84}=\Prob{85}=\Prob{86}= 0.18\bar{3}$. 
Note that, with a constant wake-up frequency (\ie no Staffetta) the forwarding probabilities will be $\Prob{82}=\Prob{84}=\Prob{85}=\Prob{86}=0.25$, almost halving the probability of exploiting the temporal link between nodes 8 and 2. Therefore,
\emph{Staffetta is able to drastically increase the probability of exploiting temporal links.} 

\paragraph{Routing Overhead.} 
\label{sec:staffetta_routing-overhead}
In static networks, routing overhead is usually a minor problem. Since nodes must build the routing structure only once, they can spend a significant amount of resources (energy, bandwidth) and amortize the costs over time. Unfortunately, in practice it is difficult to have real static conditions. Even if nodes do not physically move, the topology continuously changes due to link-quality fluctuations and uncontrolled sources of interference~\cite{Ceriotti2009}.

When the topology changes more often, nodes must constantly update their routing structures. Amortizing the overhead over time becomes less and less possible, up to the point that the mechanism is continuously using part of the channel bandwidth to maintain the routing structure. This itself limits the data rate of collection mechanisms, increases the chances of packet collisions, and incurs high energy overhead.

Staffetta overcomes these limitations as follows. 
First, Staffetta introduces only a minimal overhead to the existing protocol stack. 
Staffetta's mechanism requires just local observations (of the forwarding delay). The overhead thus consists of a few bytes of memory for storing the latest observations and trivial computations to determine the wake-up frequency, whereas it utilizes \emph{no} additional bandwidth at all.
Therefore, \emph{empowering collection protocols with Staffetta comes with negligible overhead}.

Second, Staffetta allows data collection protocols to reduce their own overhead.
Since Staffetta's activity gradient biases the opportunistic selection of a forwarder already towards the sink, it is possible to exploit Staffetta in a cross-layer fashion. 
In particular, the continuous neighbor discovery, needed by collection protocols to maintain their routing structure, can be avoided and substituted with the biased opportunistic forwarders provided by Staffetta.
For example, to select the best forwarders in ORW~\cite{Landsiedel2012}, nodes must keep an up-to-date list of their neighbors' routing metric. 
While the current metric can often be piggybacked (e.g., on beacons and acknowledgments), link dynamics could force nodes to actively request their neighbors for updated metric information, increasing the routing overhead.
Using Staffetta, the best forwarders are often the ones that wake up first, so nodes can avoid maintaining their neighbor's routing metric and simply select the first node provided by the opportunistic mechanism, which is simpler and more efficient.

\paragraph{Network lifetime.}
Network lifetime, defined as the time until the first node in the
network depletes its battery, is a very important metric for real-word
deployments. This metric depends on the battery capacity of nodes and the average power consumption of the most energy-demanding
node, the one that will deplete its energy first. To this end, it
is more rewarding to reduce the maximum energy consumption among all nodes rather than the
average energy consumption in the network. This is precisely what Staffetta achieves by imposing a fixed energy budget.
However, note that although all nodes run with the same budget, they spend
it differently. Nodes next to the sink can forward messages without any rendezvous delay, and spend the saved energy on waking up more frequently; nodes at the edge of the network do not need to forward any data and can spend their complete budget on rendezvous with parent nodes that wake up at a low frequency.

\section{Implementation}
\label{sec:staffetta_implementation}

To evaluate the performance gains and generality of Staffetta on real devices, we implemented it within a custom MAC protocol based on SOFA atop the Contiki operating system as shown in Figure~\ref{fig:staffetta_stack}.
Taking inspiration from state-of-the-art opportunistic data collection protocols, we consider three different routing metrics: Expected Duty Cycle~(EDC)~\cite{Landsiedel2012}, Queue Backlog~(QB)~\cite{Moeller2010}, and SOFA's Random Walk~(RW)~\cite{Cattani2014a}.
Using one of these metrics, we parametrize a generic data collection mechanism, called \emph{baseline}, to create three different opportunistic collection protocols.
On the MAC layer, we implemented Staffetta's adaptive duty-cycle scheduling mechanism.
For comparison, we also use a non-adaptive mechanism with a fixed wake-up frequency.

The protocols, from which we take inspiration, are designed for different operating systems and platforms, with many implementation details that may hide the real benefits of their core mechanism, which is in our opinion the routing metric.
Using a common implementation of the opportunistic data collection mechanism (baseline) allows us to perform a fair, controlled comparison, where differences in performance are solely due to the routing metric and the Staffetta mechanism. 
\emph{Our goal is not to state that Staffetta is better than the state of the art, but rather that it makes the state of the art better.} That said, our implementations provide competitive performance, as it will be shown in the next section.

\begin{figure}
\begin{center}
	\includegraphics[width=0.6\linewidth]{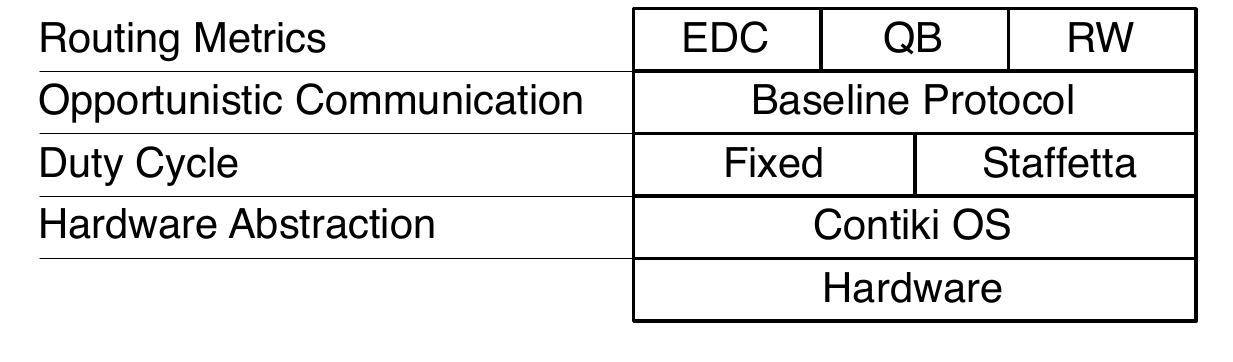}
	\caption{Protocol stack of our system. By Combining three routing metrics with two possible duty-cycling mechanisms we obtain six data collection protocols.}
		\label{fig:staffetta_stack}
	\end{center}
\end{figure}

In the following, we detail our baseline data collection mechanism, followed by a description of the three routing metrics and a few implementation details of Staffetta.

\subsection{Baseline} 
\label{sec:staffetta_baseline}

Inspired by opportunistic protocols such as SOFA and ORW, our baseline mechanism is based on low-power listening~\cite{Polastre2004} and opportunistic anycast, and works as follows (see Figure~\ref{fig:staffetta_mechanism}): Given a wake-up frequency as decided by the duty cycling mechanism, nodes wake up periodically to listen for incoming messages (\emph{listening}). If no message is received (channel is clear), nodes start probing the channel with beacons (B) until a viable forwarder wakes up (\emph{forwarding}). 
Each beacon contains the source's routing metric so that, whenever a node receives a beacon, it can compare the received metric with its own and send an acknowledgement (A) only in case of routing progress towards the sink.
In case of multiple colliding acks, which is a known problem of opportunistic mechanisms~\cite{Landsiedel2012}, nodes may be able to decode only the stronger ack due to power capture~\cite{Whitehouse}. If capture does not occur, 
all receivers would forward the data packet, causing duplicates. To ameliorate this problem, ORW relies on overhearing and a probabilistic back-off ($P=0.5$) that receivers follow when retransmitting an ack. 

Even with these mechanisms, ORW suffers from duplicate packets, in particular in scenarios with high data rates and high densities. In the worst case the duplicates get duplicated themselves en route to the sink, leading to an exponential growth in the number of duplicates and posing serious scaling limitations. Considering that Staffetta works by (smartly) increasing the activity of nodes in the network, the `duplicated packets' problem is expected to be exacerbated. To increase the resilience of our baseline protocol to packet duplicates, we implement a 3-way handshake similar to the one used in the SOFA mechanism. A third `selection' packet (S) is added during the rendezvous process. If the sender cannot decode a single ack, it does not send S. Potential receivers will then re-transmit their acks but with a back off probability of 0.5. Upon decoding an ack, the sender transmits S. As long as a selection packet is received, the duplication problem is completely avoided.  
If a selection packet is lost, the nodes that sent an ack can proceed in two ways: i) they discard the received beacon containing the data payload, or ii) they proceed forwarding the packet. 
In the first case, a data packet is lost; in the second case, a duplicate is created. We observed that the number of duplicates due to a lost S was not that high, thus, to maintain a high delivery ratio, our implementation follows the second approach. 
Note that the 3-way handshake (beacon-ack-select) is also useful to implicitly filter out low-quality links without the need for a link-quality estimator.

\begin{figure}[t]
\begin{center}
	\includegraphics[width=0.6\linewidth]{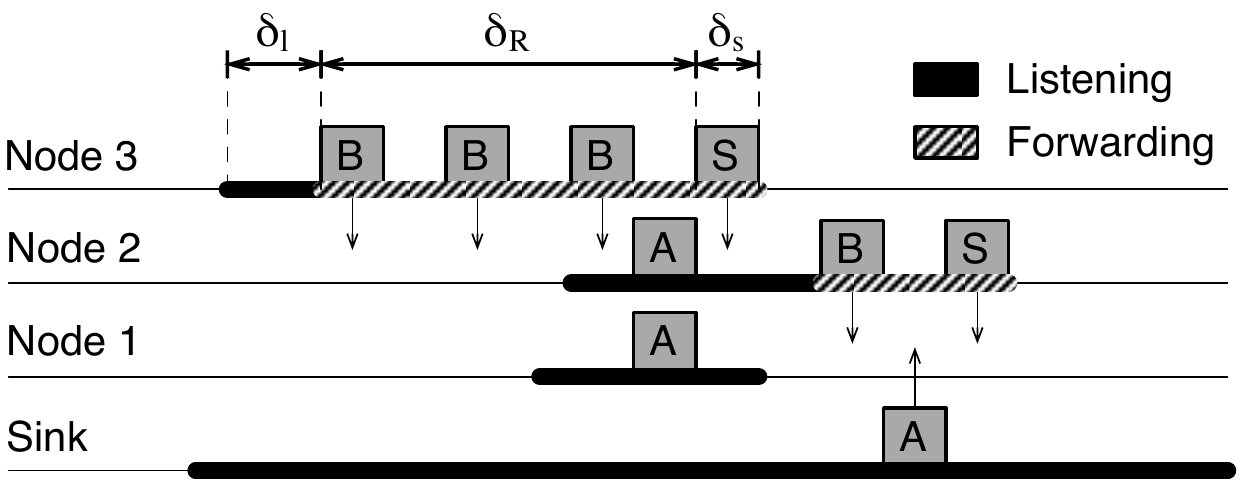}
	\caption{Baseline opportunistic mechanism with timing delays for listening ($\delta_l$), rendezvousing ($\delta_R$), and transfer of the select packet ($\delta_s$).}
	\label{fig:staffetta_mechanism}
	\end{center}
\end{figure}

\subsection{Routing Metrics}
\label{sec:staffetta_routing_metrics}
Routing metrics are at the core of data collection protocols and, together with the MAC protocol used, largely determine their performance. 
For this reason, we implemented three routing metrics from the state-of-the-art protocols on top of our baseline mechanism (see Figure~\ref{fig:staffetta_stack}). 

\paragraph{Expected Duty Cycle (EDC).} First proposed in ORW~\cite{Landsiedel2012}, EDC adapts ETX~\cite{DeCouto2005} to opportunistic, duty-cycled networks. In essence, EDC describes the expected duration until a packet reaches its intended destination (\ie the packet latency) by summing the expected forwarding delays at each hop. 
In ORW, the forwarding delay (\ie single-hop EDC) of a node \emph{i} is computed as the inverse of the sum of the link quality of the forwarders plus a weight $\delta_{tx}$ that reflects the transmission costs 
$$\emph{EDC}_{i} = (1/\sum{q_{ij}}) + \delta_{tx}.$$
$\emph{EDC}_{i}$ approximates the expected forwarding delay of node~$i$~\cite{Ghadimi2014} by (partially) using probes. In our implementation, we avoid the overhead of neighbor discovery and link quality estimation and compute this metric directly using the average of the last observed delays 
$$\emph{EDC}_{i} = \hat{\delta}_{F} = \hat{\delta}_R + \delta_{tx},$$
where $\hat{\delta}_R$ is computed by averaging the last 20 observed rendezvous times. 

\paragraph{Queue Backlog (QB).} QB is the metric used by the Backpressure Collection Protocol \cite{Moeller2010} (BCP). 
In BCP, nodes select the neighbor with the shortest queue backlog as the forwarder. Considering that the sink absorbs all packets, this simple mechanism leads to a gradient that decreases the size of nodes' queues towards the sink and can be used to make per-packet forwarding decisions. 
As demonstrated in the original paper, this agile metric shows superior delivery performance especially in dynamic networks experiences external interference or featuring a mobile sink. 
Different from the implementation in BCP, where the metric is mixed with ETX, our implementation solely uses the queue backlog for the opportunistic forwarding decisions. Low-quality links, which in the original implementation are discarded by ETX, are filtered out by the baseline's 3-way handshake.

\paragraph{SOFA's Random Walk (RW).} In SOFA, nodes choose their forwarders opportunistically, independently of their routing progress. Since no routing metric is used, its random walk mechanism is lightweight and performs great in highly dynamic networks, where the overhead of maintaining routes is too high. On the other hand, in static networks, random walk is too dispersive and suffers from long path lengths. 
Because in SOFA all nodes wake up with the same frequency, they have the same probability of being opportunistically selected (\cf Equation~\eqref{eq:staffetta_p_fwd}). With Staffetta underneath, this probability will not be uniform anymore, but biased towards the better links. 

\vspace{2mm}Note that, for the first two routing metrics (EDC and QB), nodes forward their information to the first neighbor that wakes up \emph{and} provides routing progress. For SOFA's Random Walk, instead, nodes forward their message to the first neighbor that wakes up, independently of its distance to the sink.

\subsection{Staffetta}
\label{sec:staffetta_mechanism2}
Staffetta's mechanism consists of adapting the wake-up frequency of nodes based on the observed forwarding delay $\hat{\delta}_F$. For this reason, implementing Staffetta is straightforward. 

\paragraph{Measuring the forwarding delay.} We empirically measure the fixed communication delays $\delta_{tx}$. 
For the baseline mechanism, depicted in Figure~\ref{fig:staffetta_mechanism}, $\delta_{tx}$ includes the time needed to listen to the channel to check if it is clear for transmission ($\delta_l$), and the time needed to transmit the selection packet ($\delta_{s}$). 

To quantify the actual rendezvous time ($\hat{\delta}_R$), each node measures the period between the first beacon sent and the first acknowledgment received.
Due to the stochastic nature of the rendezvous delay, our implementation averages the last 20 measurement using a simple moving average (SMA) to produce a stable, yet reasonably agile activity gradient that is robust to small disturbances such as fast link fluctuations.

\paragraph{Computing the wake-up frequency.}
\label{sec:staffetta_implementation_approaches} 
Once $\delta_{tx}$ and $\hat{\delta}_R$ are determined, the wake up frequency is computed according to Equation~\eqref{eq:staffetta_wakeup_frequency}. In particular, after each successful message exchange, our implementation of Staffetta updates the average rendezvous time and the wake-up frequency $\omega$, and schedules the next wake-up in $1/\omega$ seconds. 

However, setting the wake-up frequency based on local information could lead to some complications.  
Different from traditional duty-cycling techniques, in which the activity rate
depends on the application's traffic requirements, Staffetta bases the wake-up
frequency solely on the forwarding delay, which depends on the network topology. 
In effect, the wake-up frequency is a function of the distance to the sink and the (average) number of forwarders at each hop. 
This could be problematic in case of weakly connected nodes, located at the edge of a large network. 
For these nodes, communication could be so expensive (due to long rendezvous times) that maintaining the required data rate will make the node consume more energy than budgeted. 
There are two approaches to handling this issue.

\vspace{2mm}\noindent\emph{i) Give priority to the application
requirements}. This approach is based on the idea that Staffetta may
only set a wake-up frequency that is above the packet injection rate
required by the application. Capping the minimum frequency has the
consequence that Equation~\eqref{eq:staffetta_wakeup_frequency} will be violated,
causing the energy budget to be exceeded as more data will be sent than
can be afforded (given the long rendezvous time with the forwarders). 
This effect is exacerbated by a poor selection of forwarders, as the gradient
is effectively flattened (negligible bias), leading to longer
paths and extra traffic, hence, raising energy consumption even more.

\vspace{2mm}\noindent\emph{ii) Give priority to the energy
budget}. Using this approach, nodes can adapt their wake-up
frequency to a value that is lower than the application requirements. The
obvious consequence is that packets will be queued at the source,
leading to longer latencies and eventually packet drops, likely violating
latency and delivery requirements. 
However, 
if the application is delay-tolerant, the problem can be mitigated by aggregating packets at the source as the energy costs are dictated by the rendezvous costs, not the actual data transfer. 
An even more important observation is that only the nodes at the edge of the network suffer from this; most nodes will run on a higher wake-up frequency serving application data at the right pace.

In our implementation we decided to test Staffetta under the most stringent conditions, that of violating our core principle: Equation~\eqref{eq:staffetta_wakeup_frequency}. Thus, we adopt the first approach and use a minimum frequency (0.1\,Hz by default), which can be set by the application. 

\section{Evaluation} 
\label{sec:staffetta_results}

In this section, we use measurements from two testbeds to evaluate the performance of Staffetta and the effectiveness of using the activity gradient to guide the forwarding decisions.

\subsection{Methodology}

\setlength{\tabcolsep}{12.0pt}
\begin{table}[!tb]
\begin{center}
\begin{tabular}{l|c|c}
\hline
Testbed & FlockLab & Indriya \\
\hline
Number of nodes & 28 & 139 \\
Tx power [dBm] & 0 & 0  \\
Network diameter [hops] & 5 & 5 \\
Node degree & 8.5 & 22 \\
Radio channel & 26 & 26 \\
Node ID of sink & 1 & 2 \\
\hline
\end{tabular}
\end{center}
\caption{Testbed characteristics and settings}
\label{tab:staffetta_testbed_characteristics}
\end{table}

\paragraph{Testbeds.}
We use the FlockLab~\cite{Lim2013} and Indriya~\cite{Doddavenkatappa2012} testbeds.
Out of the 28 TelosB on FlockLab, 25 nodes are deployed in several offices and hallways on one floor, while three nodes are deployed outside on the rooftop of an adjacent building.
The resulting network is quite sparse: each node has on average 8.5 neighbors within direct radio range, and the diameter is 5 hops.
The 139 TelosB on Indriya are spread across three floors in an office building.
With 22 neighbors on average, the network is much denser than FlockLab; the diameter is 5 hops.
In all experiments, nodes transmit at the highest power setting of 0$\,$dBm, using channel 26 to limit the impact of external interference from co-located Wi-Fi networks.
Table~\ref{tab:staffetta_testbed_characteristics} lists all testbed settings, including the node ID of the sink.
 
\paragraph{Compared schemes.}
We compare the following schemes:
\begin{itemize}
 \item \textbf{ORW:} This is the original TinyOS-based implementation of ORW~\cite{Landsiedel2012}, the current state-of-the-art opportunistic routing protocol for low-power wireless. ORW runs on top of the standard LPL layer in TinyOS with a wake-up frequency of 0.5$\,$Hz, which matches the configuration used by Landsiedel et al.~\cite{Landsiedel2012}.
 \item \textbf{EDC, QB, RW:} These are implementations of different opportunistic schemes based on the baseline mechanism described in Section~\ref{sec:staffetta_baseline}. The schemes are named after the specific routing metric they employ: Expected Duty Cyle~(EDC), Queue Backlog~(QB), and Random Walk~(RW).  As detailed in Section~\ref{sec:staffetta_routing_metrics}, EDC is the metric underlying ORW~\cite{Landsiedel2012}, and QB is similar to the metric used by BCP~\cite{Moeller2010}, which is the state-of-the-art routing protocol for mobile sink settings. All three schemes run on top of the default LPL layer and opportunistic anycast in Contiki with a wake-up frequency of 1$\,$Hz.
 \item \textbf{ST.EDC, ST.QB, ST.RW:} These schemes are the same as EDC, QB, and RW above except that instead of using a fixed wake-up frequency Staffetta's dynamic duty-cycle mechanism is employed to adjust the wake-up frequency at runtime, as described in Sections~\ref{sec:staffetta_mechanism} and~\ref{sec:staffetta_mechanism2}.
 \item \textbf{DIRECT:} This scheme, implemented atop the baseline mechanism, fully exploits Staffetta by using the activity gradient \emph{directly} as a ``routing metric:'' a node forwards a packet to the neighbor that wakes up first and has a wake-up frequency higher than itself. In other words, instead of using an explicit routing metric, such as EDC or QB, to decide whether a neighbor provides routing progress, DIRECT makes this decision using the gradient of wake-up frequencies inherent in Staffetta.
\end{itemize}

As discussed earlier, existing dynamic duty-cycle mechanisms are theoretical~\cite{Kim2011,Kim2010} or expressely designed for traditional non-opportunistic routing schemes~\cite{Meier2010,Zimmerling2012}, rendering an experimental comparison against Staffetta impossible.

\paragraph{Energy budget.} When Staffetta is enabled, we set the energy budget $DC_{\mathit{max}}$ to 7.5$\,$\% on FlockLab and to 6$\,$\% on Indryia.

\paragraph{Metrics.}
Our evaluation uses the following three key performance metrics of real-world applications~\cite{Ceriotti2009}:
\begin{itemize}
 \item \emph{Latency:} The time between when a packet is generated by the source node and when that packet is successfully received by the sink. To measure latency, we leverage the time synchronization among nodes on FlockLab and the serial logging of Indriya to timestamp both the generation and the successful reception of each packet.   
 \item \emph{Duty cycle:} The fraction of time the radio is turned on. We measure duty cycle in software using Contiki's accurate energy profiler.
 \item \emph{Delivery ratio:} The fraction of packets generated by the source nodes that is delivered to the sink. We determine delivery ratio based on the unique sequence numbers of packets successfully received at the sink.
\end{itemize}

\noindent
We also measure \emph{path length}, the number of times a message is relayed before reaching the sink, to explain the behavior and performance of a given routing scheme.
We compute these metrics based on three independent runs with the exact same settings.
To allow each protocol to bootstrap its operation, we start measuring after an initial delay of 1 minute.
We also give each protocol sufficient time (\ie 1--5 minutes) to empty their packet queues before the end of each run.
Unless stated otherwise, we report the results as median, 25th and 75th percentiles, and minimum and maximum values across all nodes in a testbed using box plots; statistical outliers are represented by crosses.

\subsection{Comparison against the State of the Art}
\begin{figure}[!tb]
	\centering
		\subfloat[Latency {[s]} - FlockLab]{ 		
		\label{fig:staffetta_latency_sota} 
		\includegraphics[width=0.7
		\textwidth]{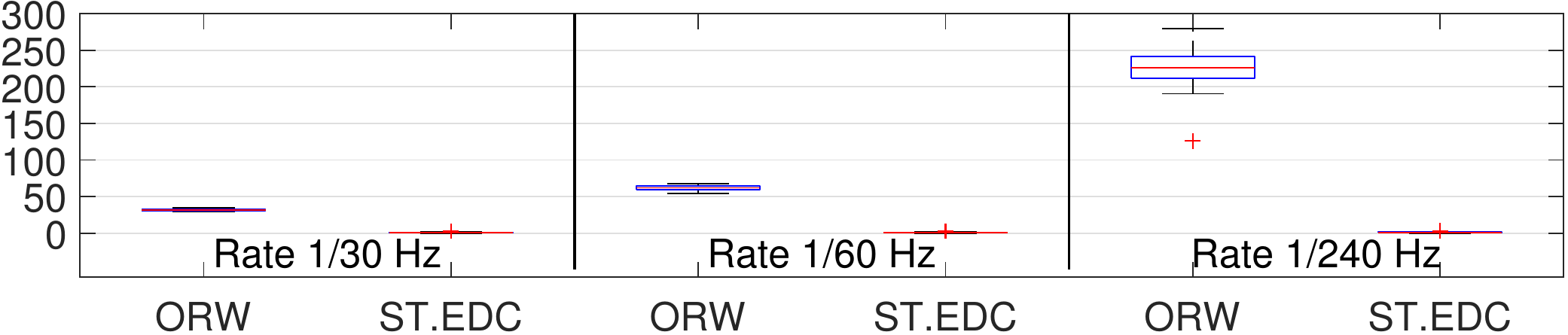}}\\
		\subfloat[Delivery Ratio - FlockLab]{ 		
		\label{fig:staffetta_deliveryratio_sota} 
		\includegraphics[width=0.7
		\textwidth]{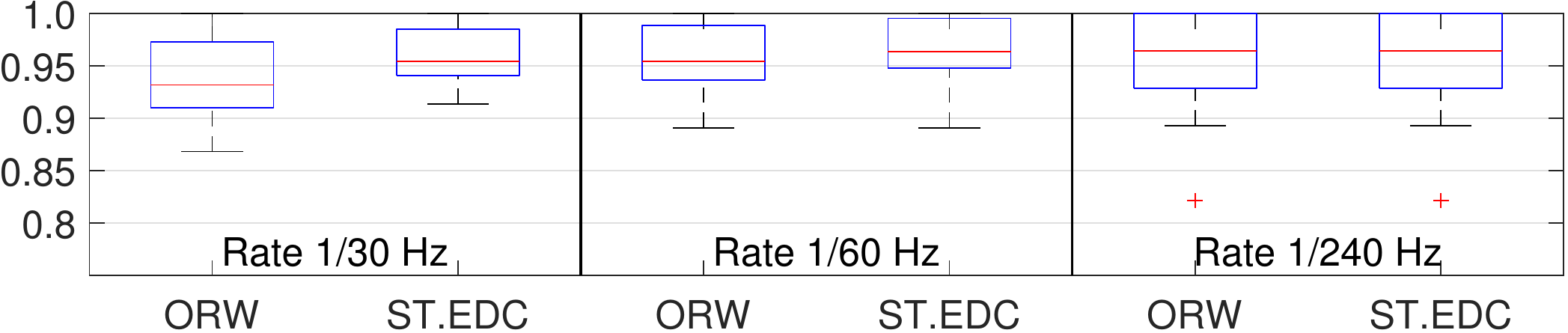}}\\
		\subfloat[Duty Cycle {[\%]} - FlockLab]{ 		
		\label{fig:staffetta_dutycycle_sota} 
		\includegraphics[width=0.7
		\textwidth]{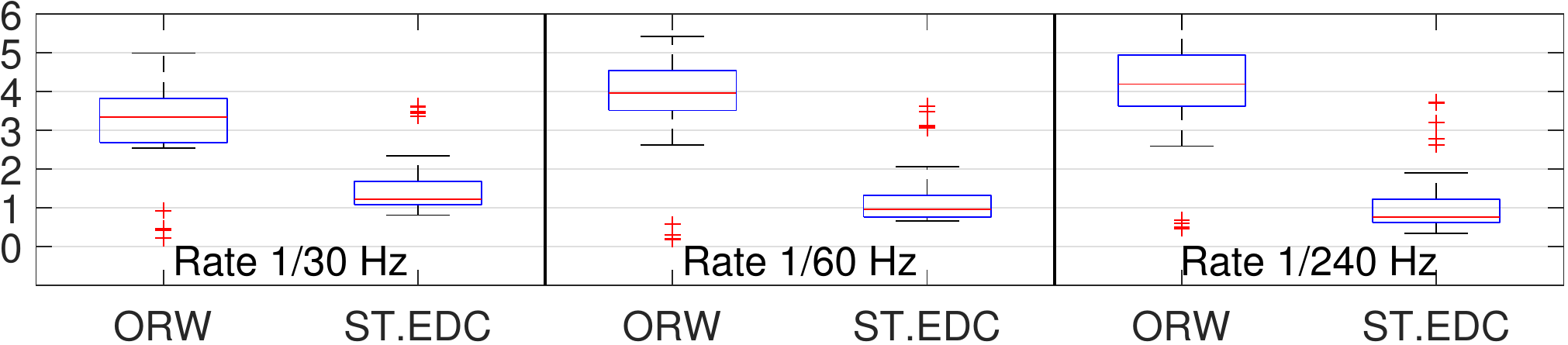}}
	\caption{Latency, delivery ratio, and duty cycle of ST.EDC and ORW for different data rates. Using Staffetta's dynamic duty-cycle adaption, ST.EDC outperforms ORW in all metrics, achieving several-fold improvements in latency and duty cycle.} 
	\label{fig:staffetta_flocklab_sota_comparison}
\end{figure}

We begin by comparing ST.EDC against ORW to quantify the performance gains of Staffetta over the state of the art.

\paragraph{Settings.}
We perform 1-hour experiments on FlockLab.
All nodes except the sink generate packets at the same rate.
We test three different data rates in different runs: 1 message every 30 seconds (1/30$\,$Hz), 1 message every minute (1/60$\,$Hz), and 1 message every 4 minutes (1/240$\,$Hz).
The latter equals the data rate used in the ORW paper~\cite{Landsiedel2012}.
Nodes are given 5 minutes to empty their message queues at the end of a run.

\paragraph{Results.}
Figure~\ref{fig:staffetta_flocklab_sota_comparison} shows latency, delivery ratio, and duty cycle of ST.EDC and ORW for different data rates.
We observe that ST.EDC outperforms ORW across all metrics regardless of the data rate.
Using Staffetta to adapt the nodes' wake-up frequencies, ST.EDC delivers packets on average 79--452$\times$ faster than ORW, while achieving a 2.75--9$\times$ lower duty cycle.
ST.EDC also provides a higher delivery ratio, especially as the traffic load increases. 
Moreover, ST.EDC reduces the variance in all metrics across nodes compared with ORW.

The results demonstrate the significant performance gains enabled by Staffetta's dynamic duty-cycle adaption.
The basic mechanism is conceptually simple and lightweight to implement, but also highly effective in practice. 
The next section investigates the reasons for these improvements.

\subsection{Benefits across Diverse Routing Metrics}
\begin{figure}[!tb]
	\begin{center}
		\subfloat[Latency {[s]} - FlockLab]{ 
		\label{fig:staffetta_latency_flocklab} 
		\includegraphics[width=0.45 
		\textwidth]{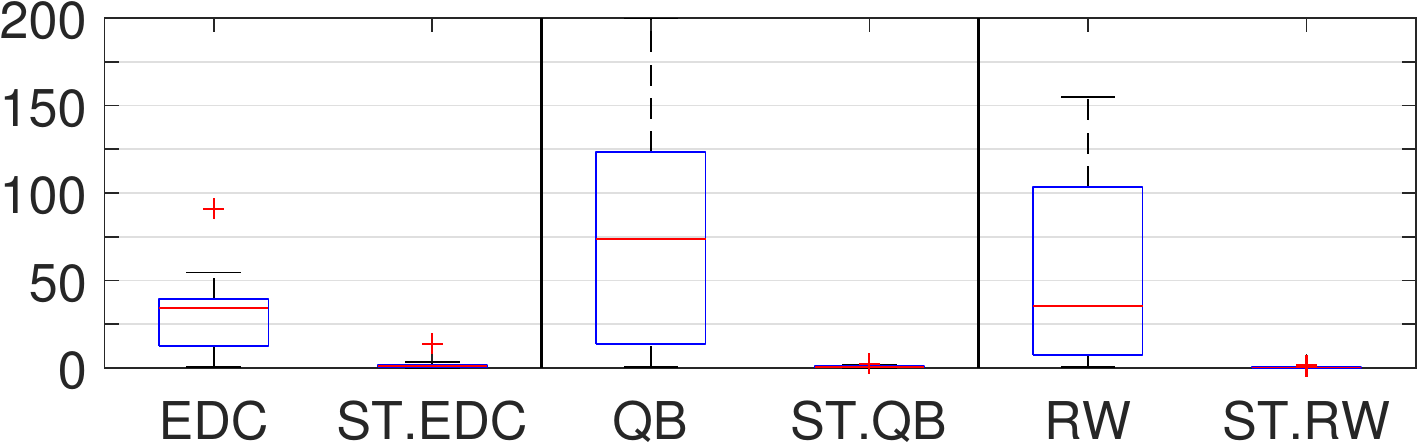} }
		\subfloat[Latency {[s]} - Indriya]{ 
		\label{fig:staffetta_latency_indriya} 
		\includegraphics[width=0.45 
		\textwidth]{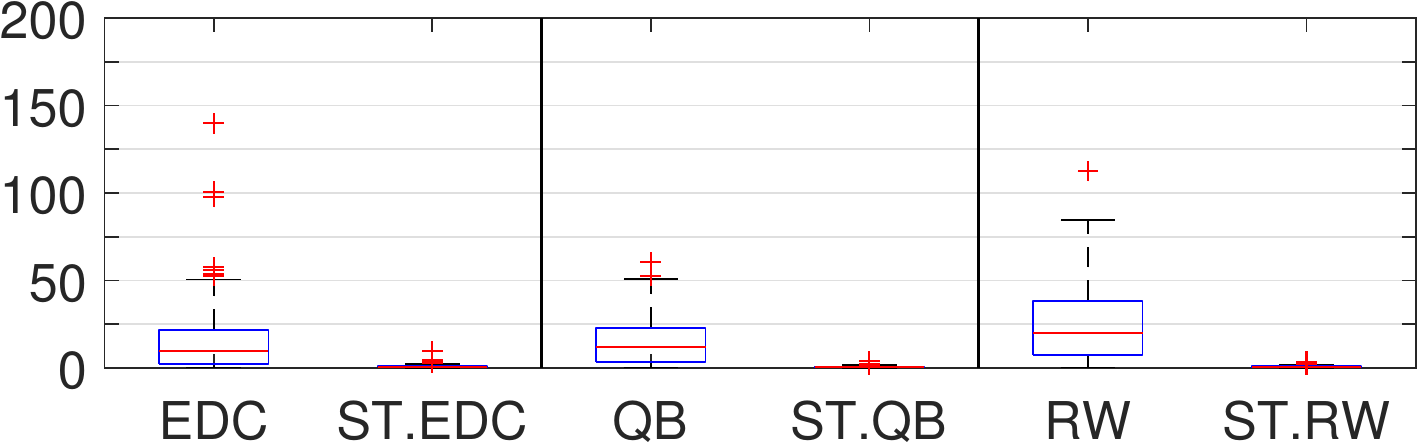} }\\
		\subfloat[Delivery Ratio - FlockLab]{ 		\label{fig:staffetta_deliveryratio_flocklab} 
		\includegraphics[width=0.45 
		\textwidth]{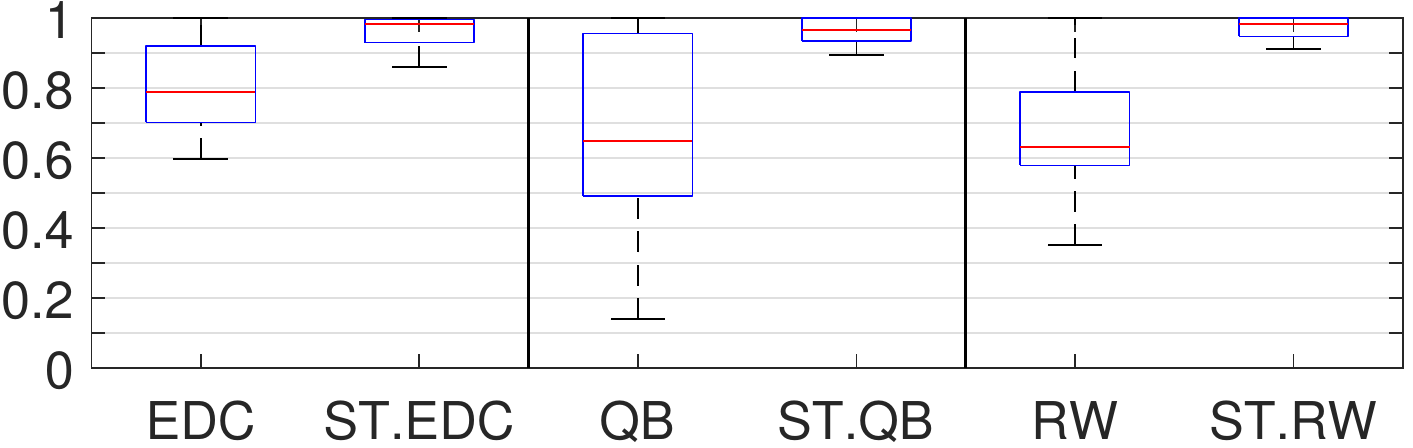} } 		
		\subfloat[Delivery Ratio - Indriya]{ 		\label{fig:staffetta_deliveryratio_indriya} 
		\includegraphics[width=0.45 
		\textwidth]{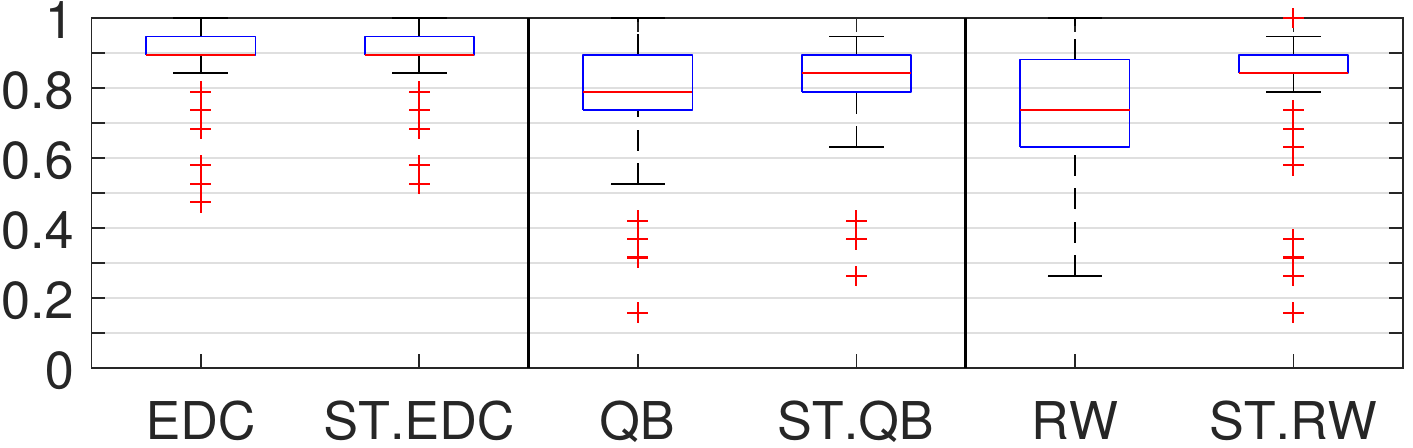} }\\
		\subfloat[Path Length - FlockLab]{ 		\label{fig:staffetta_pathlength_flocklab} 
		\includegraphics[width=0.45 
		\textwidth]{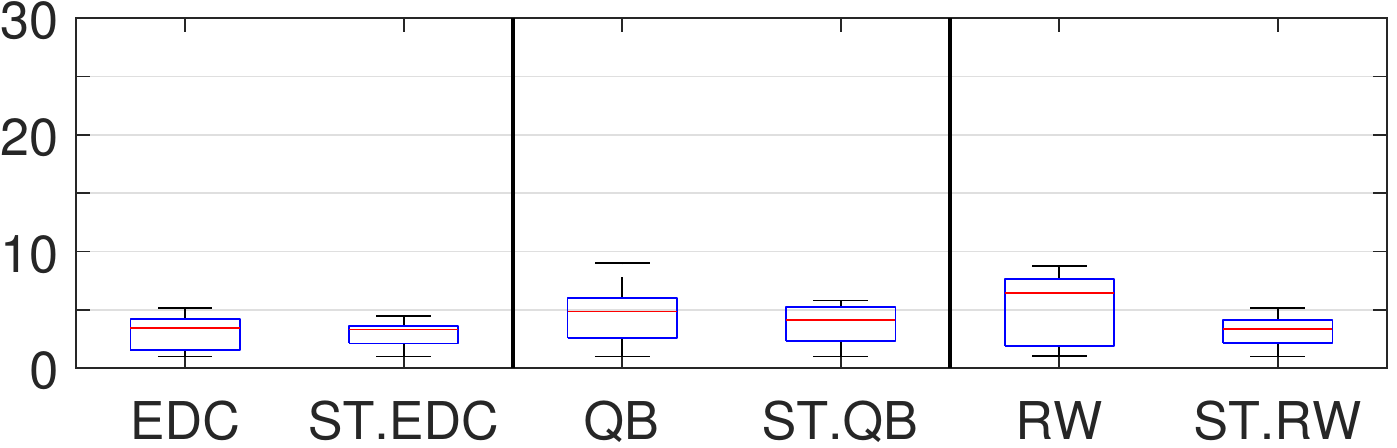}}
		\subfloat[Path Length - Indriya]{ 
		\label{fig:staffetta_pathlength_indriya} 
		\includegraphics[width=0.45 
		\textwidth]{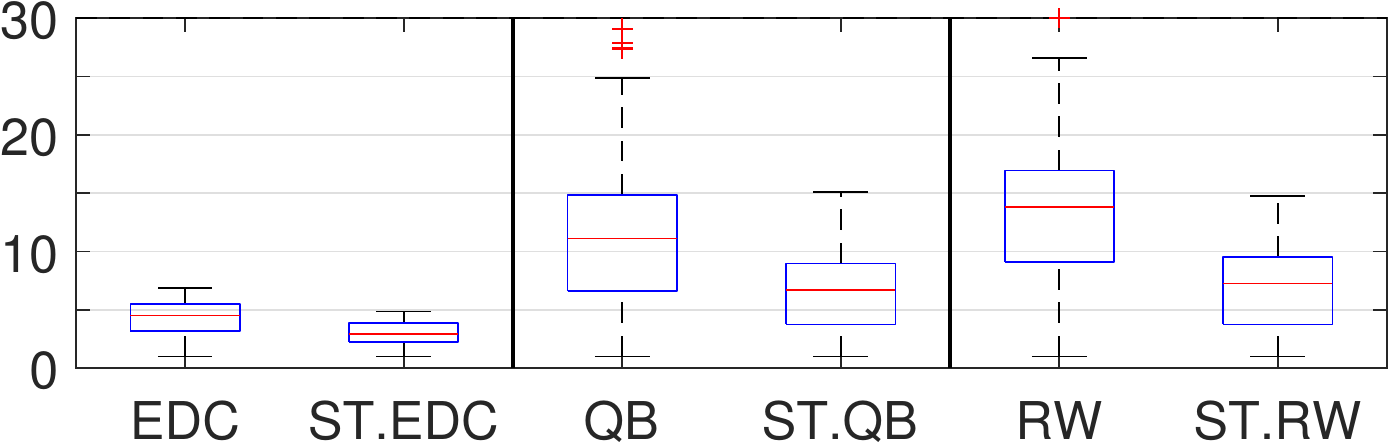}}\\
		\subfloat[Duty Cycle {[\%]} - FlockLab]{ 		\label{fig:staffetta_dutycycle_flocklab} 
		\includegraphics[width=0.45 
		\textwidth]{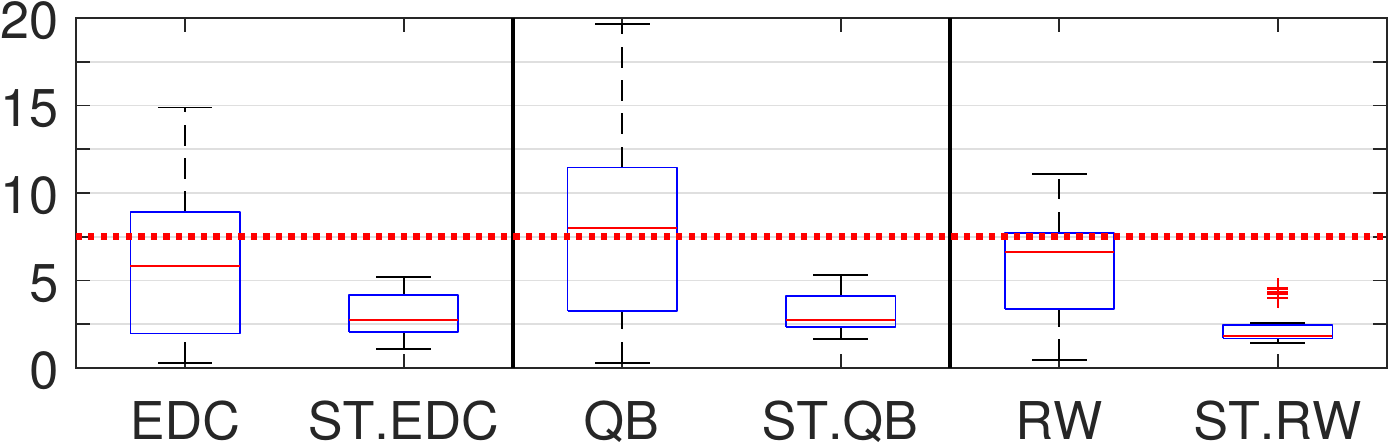} }	
		\subfloat[Duty Cycle {[\%]} - Indriya]{ 		\label{fig:staffetta_dutycycle_indriya} 
		\includegraphics[width=0.45
		\textwidth]{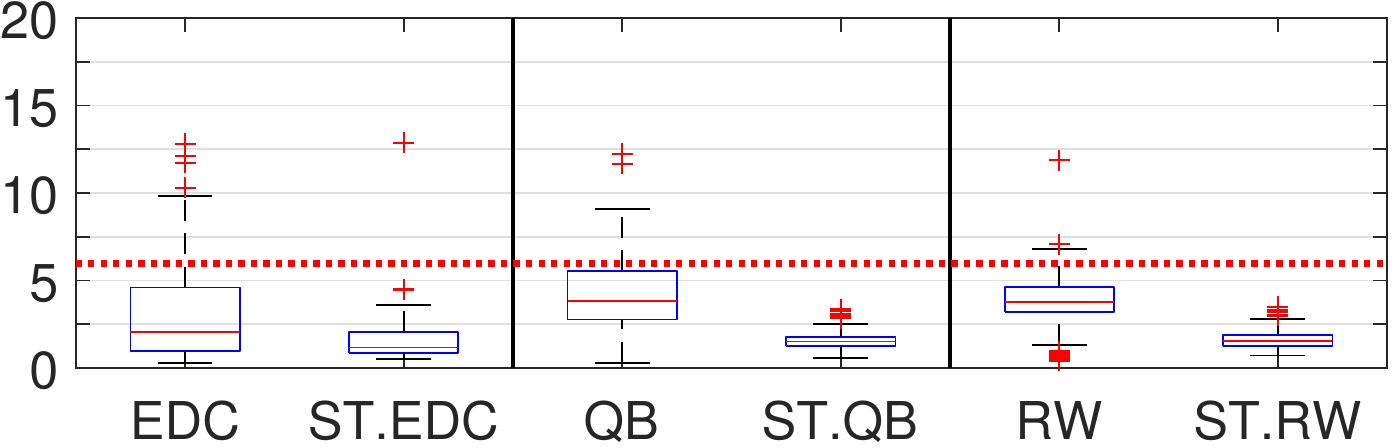}}
	\end{center}
	\caption{Performance metrics and path length with and without Staffetta for different routing metrics on the FlockLab and Indriya testbeds. Using Staffetta significanly boosts performance in all metrics compared with a fixed wake-up frequency.}
	\label{fig:staffetta_flocklab_indryia} 
\end{figure}

We now explore in detail the benefits of Staffetta in terms of performance for different well-known routing metrics.

\vspace{2em}\paragraph{Settings.}
To this end, we conduct a set of 10-minute experiments on FlockLab and Indryia with EDC and ST.EDC, QB and ST.QB, RW and ST.RW.
Except the sink, all nodes generate packets at the same fixed rate.
We test two data rates: 1 packet every 10 seconds (1/10$\,$Hz) on FlockLab and 1 packet every 30 seconds (1/30$\,$Hz) on Indriya.
Nodes have 1 minute to flush their packet queues at the end of each experiment.

\paragraph{Results.}
Figure~\ref{fig:staffetta_flocklab_indryia} depicts for each scheme the measured performance and path length.
We see that using Staffetta instead of a fixed wake-up frequency results in superior performance across the board, while reducing the variance among nodes.

\setlength{\tabcolsep}{5.8pt}
\begin{table}[!tb]
\begin{center}
\begin{tabular}{c|c|c|c}
 \hline
 Testbed & Latency & Delivery Radio & Duty Cycle \\
 \hline
 FlockLab  & 37.5--75.0$\times$ & 1.3--1.5$\times$ & 2.4--3.1$\times$ \\
 Indriya   & 12.2--20.1$\times$ & 1.0--1.2$\times$ & 1.6--2.2$\times$ \\
 \hline
\end{tabular}
\end{center}
\caption{Performance gains of Staffetta (cf.~Figure~\ref{fig:staffetta_flocklab_indryia})}
\label{tab:staffetta_relative_improvements}
\end{table}

Table~\ref{tab:staffetta_relative_improvements} lists the significant improvements in median performance enabled by Staffetta.
Looking at both testbeds and all routing metrics, Staffetta reduces latency by 12.2--75.0$\times$ and increases delivery ratio by 1.0--1.5$\times$, while reducing the energy costs, measured in terms of duty cycle, by 1.6--3.1$\times$.

The improvements are generally higher on Flocklab than on Indriya.
The reason is that opportunistic routing schemes naturally benefit from a higher node density: the larger number of potential forwarders on Indriya leads to, for example, a better load balancing and reduced forwarding delays.
Thus, Staffetta is particularely beneficial in networks that are fairly sparse in general or contain low-density node clusters through which the bulk of the traffic must be funneled.

\begin{figure}[!tb]
	\begin{center}
		\includegraphics[width=0.7\linewidth]{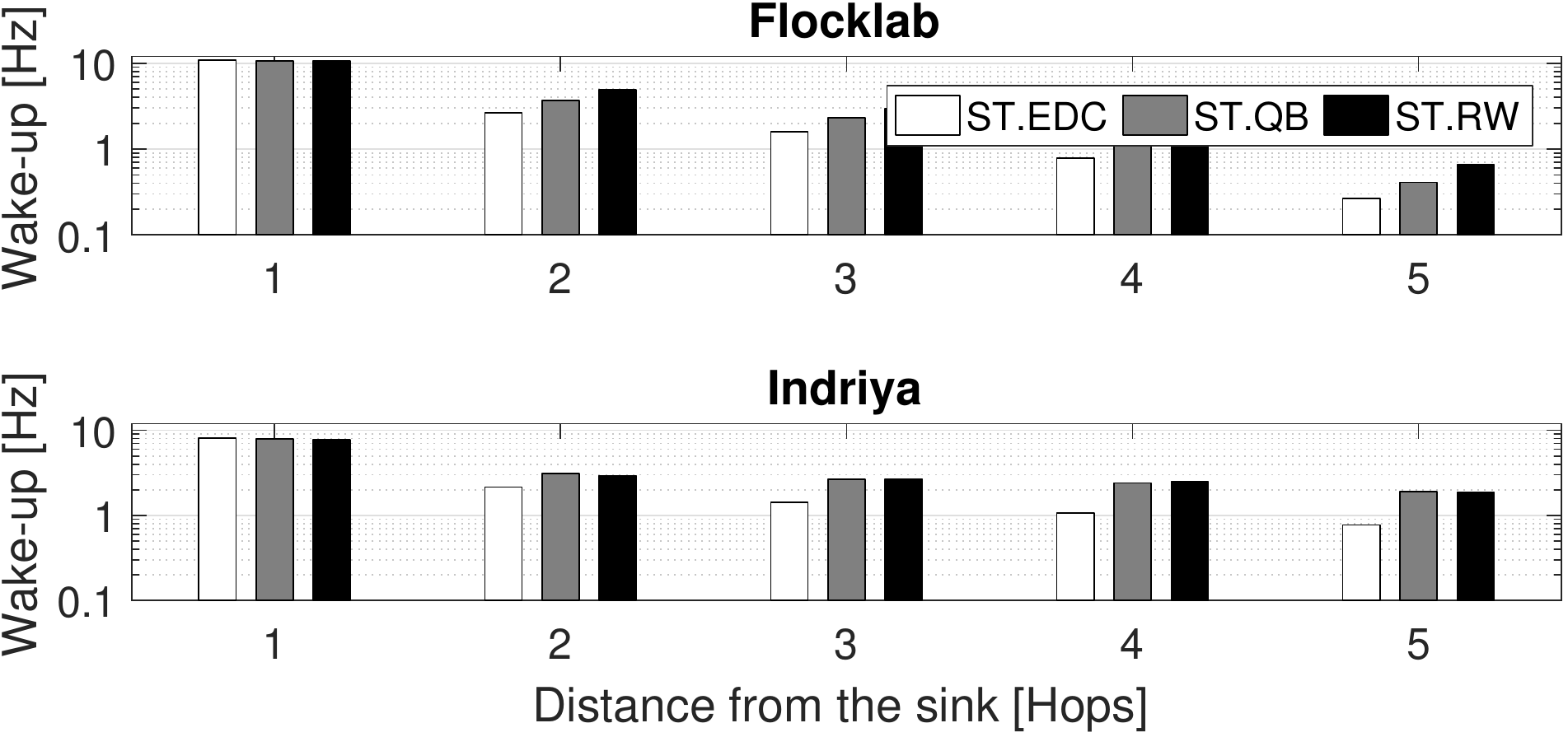}
	\end{center}
	\caption{Activity gradient on FlockLab and Indriya.}
	\label{fig:staffetta_gradient_testbed} 
\end{figure}

\emph{But how does Staffetta achieve these performance gains?}
The key lies in the activity gradient.
By letting each node dynamically adapt its wake-up frequency to the observed forwarding delay, nodes closer to the sink are more active than those at the fringe of the network.
This is visible in Figure~\ref{fig:staffetta_gradient_testbed}, which plots the average wake-up frequency of nodes against their hop distance from the sink on both testbeds.
We clearly see the geometrical decay for increasing hop distance.
This way, the bottleneck typically found in duty-cycled networks around the sink is largely eliminated, allowing for faster and more reliable packet delivery.
At the same time, the nodes' wake-up frequencies are proportional to their respective traffic loads, which results in a more effective use of energy.
We now take a closer look at each performance metric.

\paragraph{Understanding lower latencies with Staffetta.}
Latency is mainly determined by three factors: forwarding delay, path length, and message backlog (\ie the accumulation of packets in a node's queue).
As discussed below, Staffetta reduces all of them, which explains the overall reduction in latency.

Looking at Figures~\ref{fig:staffetta_pathlength_flocklab} and \ref{fig:staffetta_pathlength_indriya}, we can observe the reduction in path length for the different routing metrics.
RW and QB tend to select diverse paths that are not necessarily directed towards the sink, whereas EDC chooses forwarders that definitely provide high routing progress.
Thus, Staffetta reduces the path length significantly for RW and QB, while EDC leaves little room for further shortening the paths.
 
\begin{figure}[!tb]
	\centering
		\subfloat[FlockLab]{ 		\label{fig:staffetta_wakeup_flocklab} 
		\includegraphics[width=0.45 
		\linewidth]{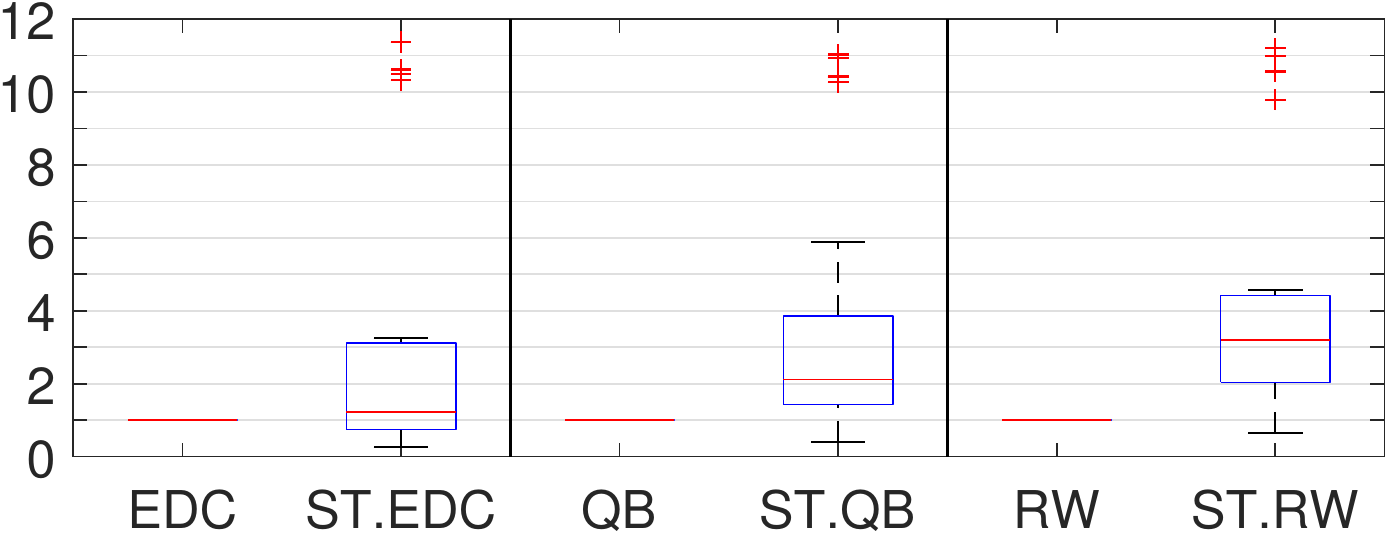} } 
		\subfloat[Indriya]{ 		\label{fig:staffetta_wakeup_indriya} 
		\includegraphics[width=0.45 
		\linewidth]{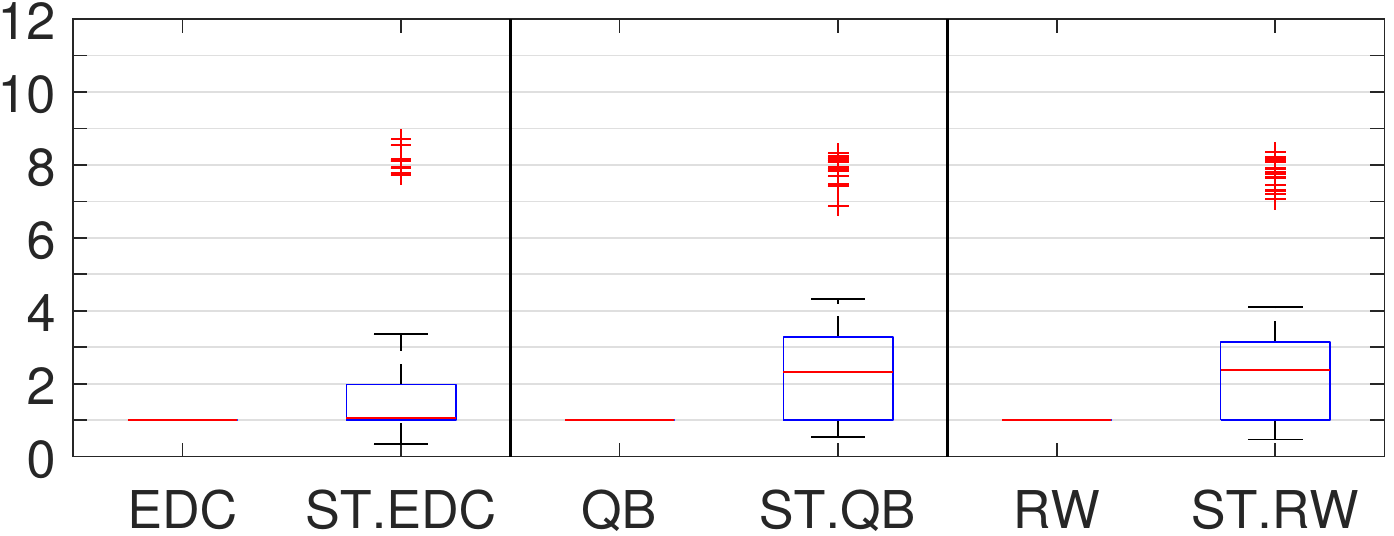} } 
	\caption{Wake-up frequency (in Hz) on the two testbeds.} \label{fig:staffetta_wakeup_testbed} 
\end{figure}

As stipulated by Equation~\eqref{eq:staffetta_expected_rendezvous}, the higher the wake-up frequency of forwarders, the lower the forwarding delay.
Figure~\ref{fig:staffetta_wakeup_testbed} shows the wake-up frequency of nodes on both testbeds for all routing metrics.
We see, for example, that the median wake-up frequency with Staffetta is consistently above the fixed wake-up frequency of 1$\,$Hz.
This indicates that Staffetta reduces the forwarding delay for more than half of the nodes on both testbeds.
A few nodes experience a higher forwarding delay due to a lower wake-up frequency, but this has a negligible impact on latency since these nodes are located at the fringe of the network and hence carry only very little traffic.

Finally, we find that message backlog is a major reason for the high latency when a fixed wake-up frequency is used.
This is because most packets are funneled through a small set of nodes actually delivering them to the sink.
These nodes are effectively a bottleneck whenever the fixed wake-up frequency (here 1$\,$Hz) is too low to sustain the aggregate load load around the sink (2.7$\,$pkts/s on FlockLab and 4.5$\,$pkts/s on Indriya).
As a result, packets are backlogged at these bottleneck nodes, severely increasing latency.
By contrast, with Staffetta nodes close to the sink are highly active, which increases the capacity and removes the bottleneck completely.

\paragraph{Understanding higher delivery ratios with Staffetta.}
Delivery ratio is affected by packets lost en route and packets stuck indefinitely in some node's queue.
We tackle the former by the beacon-ack-select handshake of the baseline protocol, which ensures that a link is only used when communication (temporarily) succeeds in both directions. 
This 3-way handshake not only reduces packet loss, it also serves to keep packet duplicates under control, which is a common problem in opportunistic protocols.
However, without Staffetta, duplicates still fill the nodes' queues, thereby delaying other packets.
This is also confirmed by comparing latency in Figures~\ref{fig:staffetta_latency_flocklab} and~\ref{fig:staffetta_latency_indriya} with delivery ratio in Figures~\ref{fig:staffetta_deliveryratio_flocklab} and~\ref{fig:staffetta_deliveryratio_indriya}: the longer the latency, the lower the delivery ratio.
Staffetta, instead, reduces the message backlog, which improves delivery ratio except for a few badly connected nodes.

\paragraph{Understanding lower duty cycles with Staffetta.}
In a nutshell, Staffetta increases the activity of nodes where needed, while it reduces the energy consumption wherever possible. 
Figures~\ref{fig:staffetta_dutycycle_flocklab} and~\ref{fig:staffetta_dutycycle_indriya} indicate that Staffetta reduces the duty cycle significantly, but also reduces the variance.
This shows that Staffetta sets an appropriate wake-up frequency for every node.
Ideally, all nodes would have the same duty cycle equal to the budget~$DC_{\mathit{max}}$.
Instead, Staffetta undershoots the budget.
This is because when a node wakes up and fails to receive a packet, it goes to sleep immediately.
However, the forwarding delay is \emph{not} updated, making Staffetta overestimate this parameter and hence select a lower wake-up frequency.
This policy is beneficial as the maximum duty cycle in the network stays below the budget, which is important in scenarios where \emph{all} nodes need to be operational and lifetime is determined by the first node depleting its battery.
Staffetta significantly boosts the network lifetime in these scenarios.

\subsection{Adaptiveness to Mobile Sink Dynamics}
\label{sec:staffetta_mobile_sink}

Support for mobile sinks is a key requirement in participatory sensing, Internet of Things (IoT) and EWSNs applications.
We evaluate Staffetta's performance and ability to cope with the highly dynamic network topology in such scenarios.

\begin{figure}[!tb]
\begin{center}
	\includegraphics[width=0.65\linewidth]{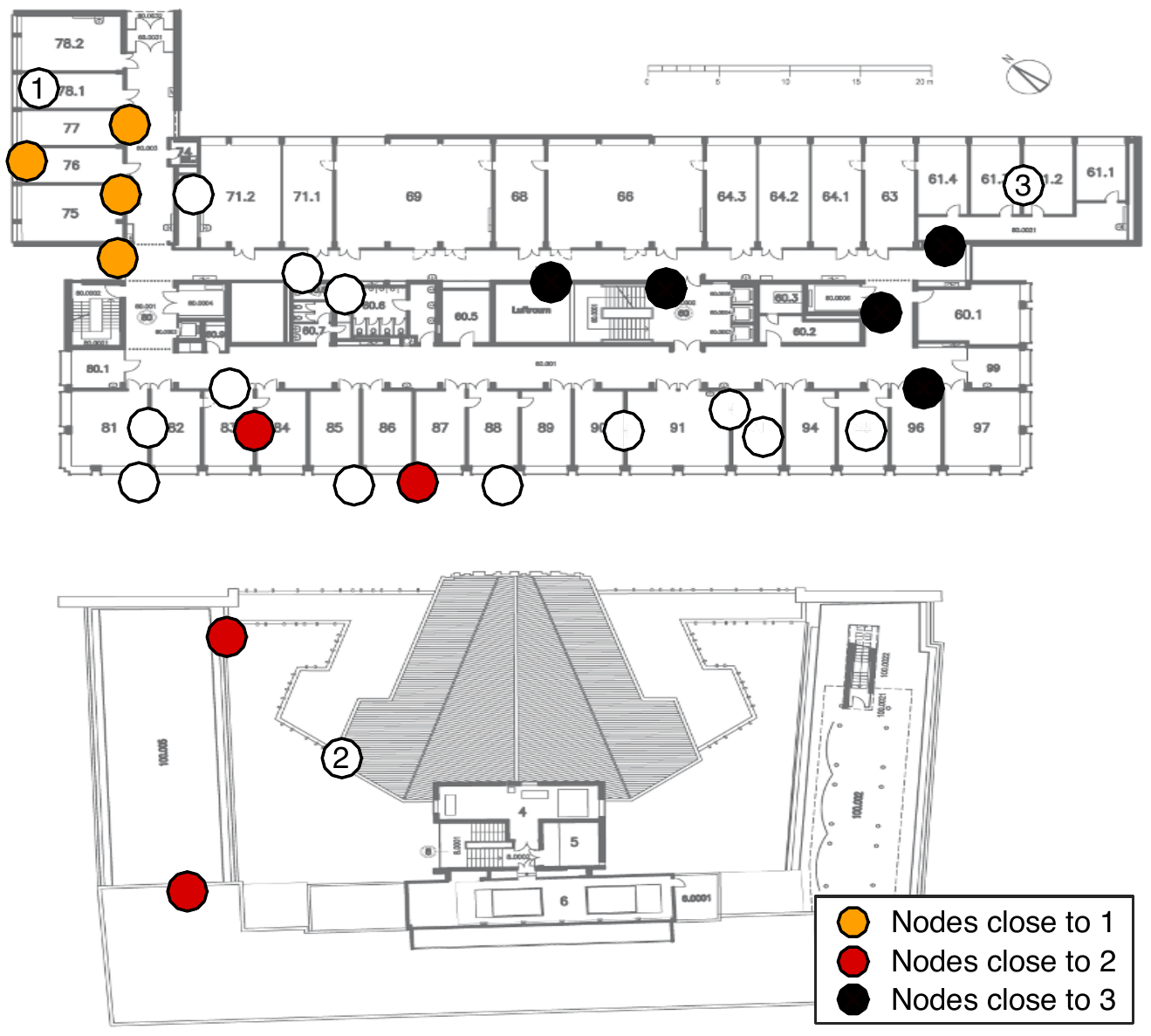}
	\caption{Nodes 1, 2, and 3 acting as a sink on FlockLab in the experiments of Section~\ref{sec:staffetta_mobile_sink}, and the corresponding clusters of nodes in their vicinity used to plot Figure~\ref{fig:staffetta_flocklab_movingsink}.}
	\label{fig:staffetta_flocklab_map}
	\end{center}
\end{figure}

\paragraph{Settings.}
To enable repeatable mobility patterns, we emulate a mobile sink that wanders about the FlockLab testbed area.
Specifically, we select a sequence of three nodes, labeled 1, 2, and 3 in Figure~\ref{fig:staffetta_flocklab_map}, that passes through several offices, hallways, and even several buildings.
One node at a time acts as the sink for 200 seconds.
Every change in the sink designation triggers a drastic change in the forwarding decisions.
We repeat the experiment multiple times, using different routing metrics both with and without Staffetta in different runs.

\paragraph{Results.}
Figure~\ref{fig:staffetta_flocklab_movingsink} demonstrates that Staffetta adapts the activity gradient promptly and correctly as the sink designation changes.
The figure charts the wake-up frequencies of nodes over time using ST.RW, grouped into three disjoint clusters: nodes in the vicinity of sink 1 (top), in the vicinity of sink 2 (middle), and in the vicinity of sink 3 (bottom).
The results for the other Staffetta-enabled schemes are very similar.
We see that Staffetta creates a distinct activity gradient with respect to the current sink designation.
Depending on a node's hop distance to the preceding sink, it takes 5--30 seconds to adapt its wake-up frequency after a change. 
Staffetta quickly forms a new gradient, driving messages toward the new sink.

\begin{figure}[!tb]
	\begin{center}
	\includegraphics[width=0.9\linewidth]{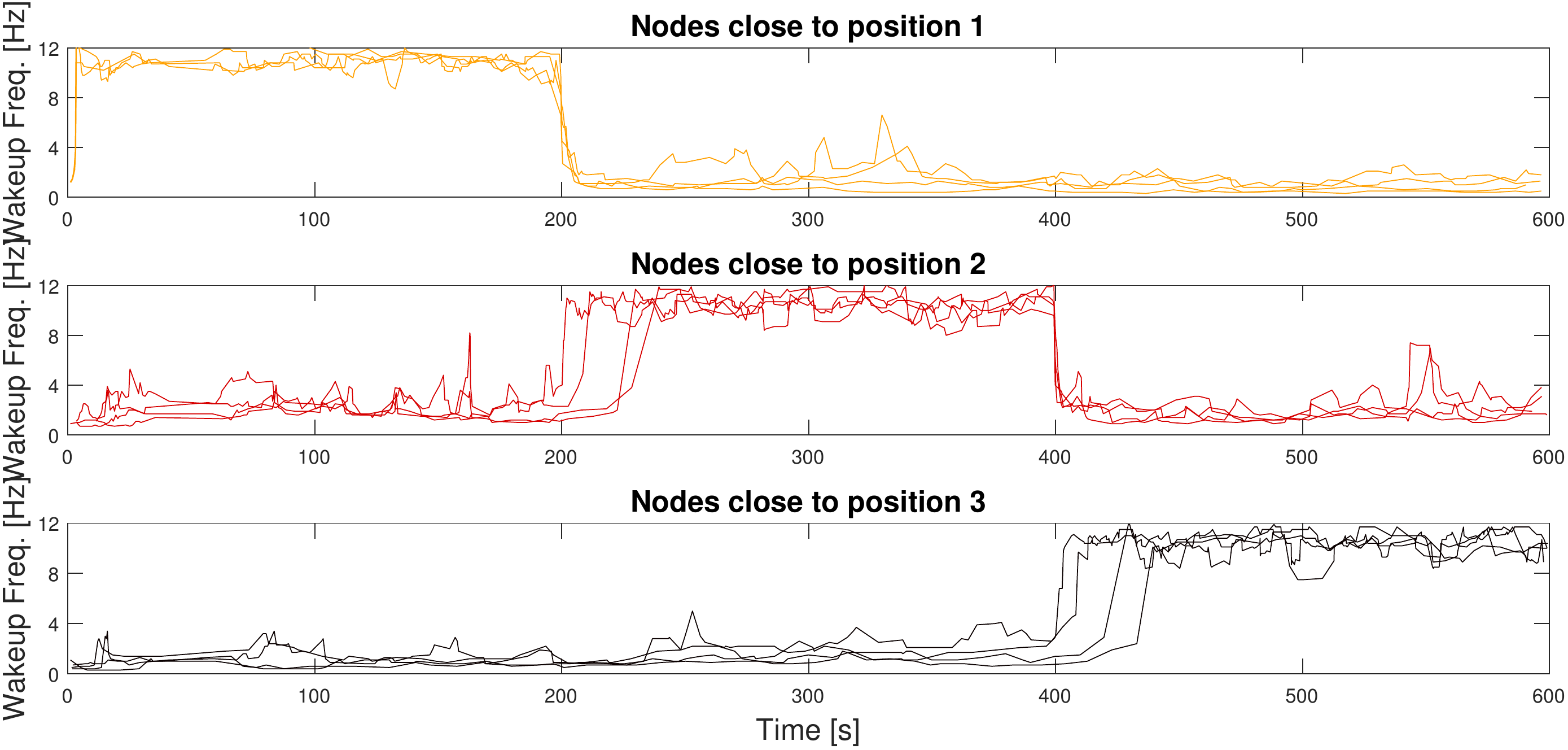}
	\end{center}
	\caption{Wake-up frequency of nodes running ST.RW with an emulated mobile sink on FlockLab (cf.~Figure~\ref{fig:staffetta_flocklab_map}). Staffetta quickly adapts the activity gradient, which helps deliver messages efficiently and reliably as the sink moves.} 
	\label{fig:staffetta_flocklab_movingsink}
	\vspace{-5mm}
\end{figure}

The net result is an improved resilience against network dynamics.
This is confirmed by looking at Figures~\ref{fig:staffetta_flocklab_movingsink_detail} and \ref{fig:staffetta_flocklab_movingsink_staffettaless_detail}, which display latency and throughput (packets received per second) at the current sink with~(ST.RW) and without~(RW) Staffetta as the sink designation changes from 1 to 2.
Despite a significant reduction in latency with Staffetta, we note that Staffetta provides a consistent and high throughput, whereas without Staffetta the throughput fluctuates widely, especially after the change in sink destination at time 200 seconds.
The spikes are due to nodes with a considerable backlog that suddenly become neighbors of the (new) sink and hence get the chance to empty their queues.  
Instead, Staffetta does not suffer from this problem, delivering packets reliably and consistently two orders of magnitude faster as the sink moves.

\subsection{Using Staffetta's Activity Gradient for Efficient Routing}

To forward messages efficiently and reliably, nodes typically maintain a list of their neighbors' routing costs with respect to some metric (\eg ETX~\cite{DeCouto2005}, EDC~\cite{Landsiedel2012}, or QB~\cite{Moeller2010}). 
Keeping this state up to date requires to execute tasks such as neighbor discovery and information sharing via periodic beaconing that cost energy, bandwidth, and time. 
We show in the following that Staffetta avoids this overhead by using its inherent activity gradient to effectively guide the forwarding decisions without maintaining explicit neighbor tables. 

\paragraph{Settings.}
We run another set of experiments with DIRECT, ST.EDC, ST.QB, and ST.RW on FlockLab.
However, unlike previous experiments, we use a bursty traffic pattern typical of applications that respond to external events.
To this end, we initially let each node accummulate 20 packets in its local queue before transmitting.
Afterwards, we measure until all nodes have transmitted their 20 packets, plus 1 minute to allow route-thru packets to leave the network.
This results in a throughput of 2.4--4.6 packets per second at the sink.

\vspace{1em}\paragraph{Results.}
Figure~\ref{fig:staffetta_direct_flocklab} shows for each scheme the measured performance and path length.
We see that DIRECT achieves a performance that is comparable or even superior to the other three schemes.
This occurs without the additional overhead with regards to implementation complexity and resource usage incurred in maintaining an explicit routing metric.
Using DIRECT, a node simply forwards to the neighbor that wakes up first and has a wake-up frequency higher than itself.
In this way, packets essentially ``surf on Staffetta's activity gradient,'' which is both efficient and highly effective. 

\begin{figure}[!tb]
	\begin{center}
	\includegraphics[width=0.9\linewidth]{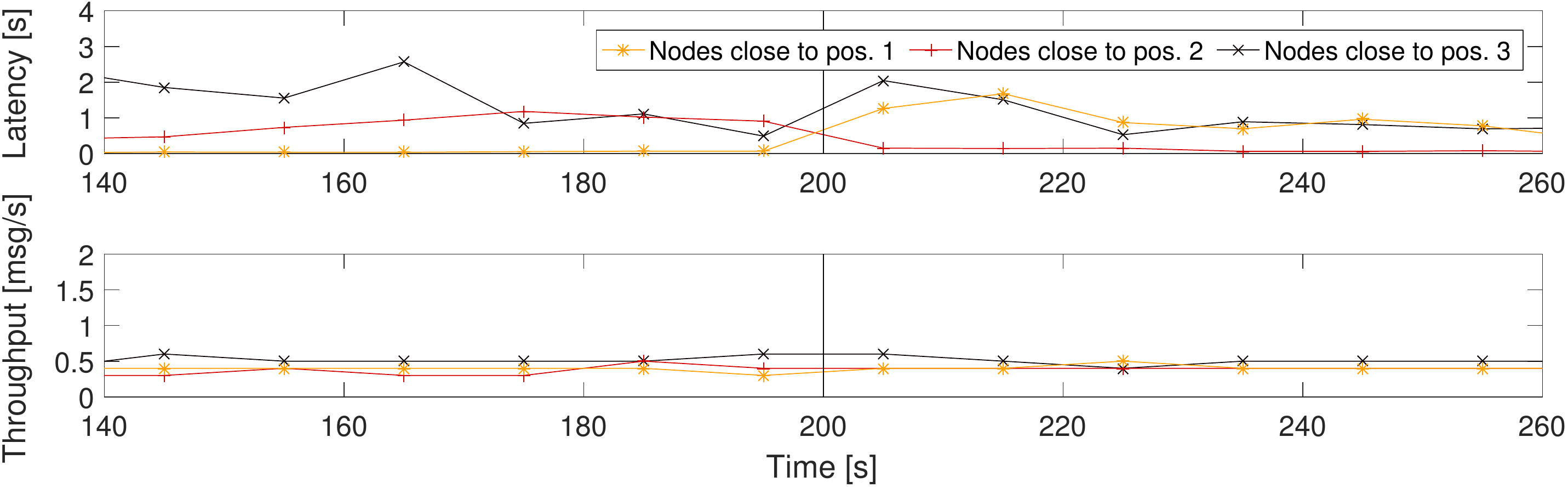}
	\end{center}
	\caption{Latency and throughput when Staffetta is enabled as the sink designation changes from 1 to 2 (cf.~Figure~\ref{fig:staffetta_flocklab_movingsink}). Staffetta reduces latency by two orders of magnitude, while providing a consistently, high throughput.} 
	\label{fig:staffetta_flocklab_movingsink_detail} 
	\vspace{-0mm}
\end{figure}

\begin{figure}[!tb]
	\begin{center}
	\includegraphics[width=0.9\linewidth]{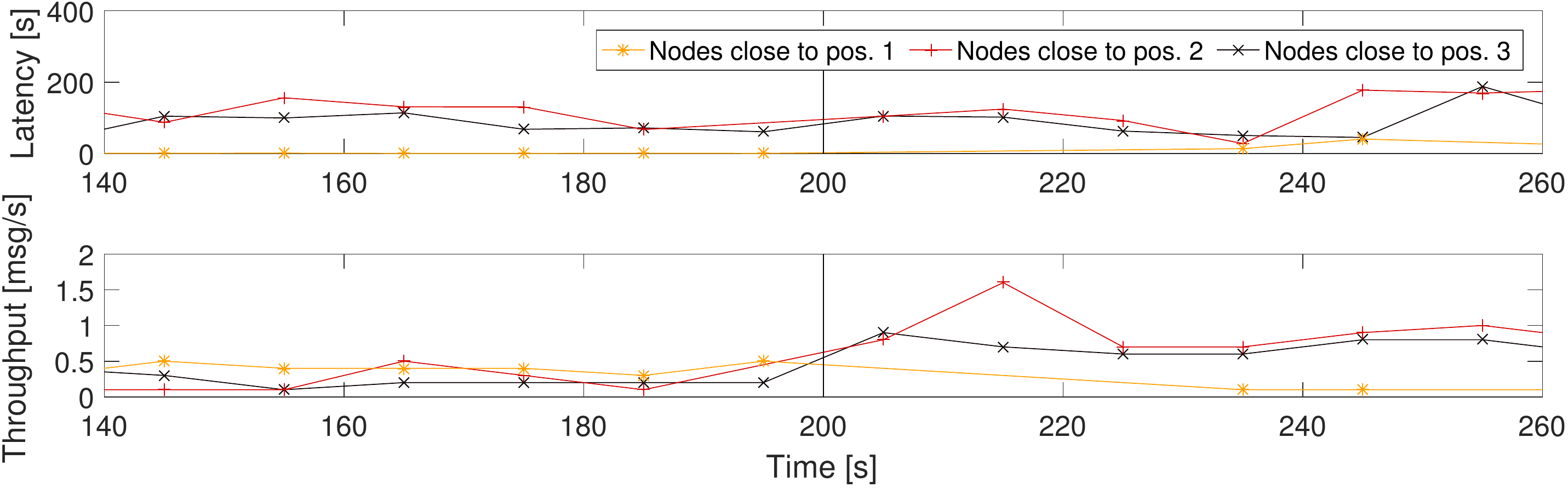}
	\end{center}
	\caption{Latency and throughput without Staffetta as the sink designation changes from 1 to 2 (cf.~Figure~\ref{fig:staffetta_flocklab_movingsink}).} 
	\label{fig:staffetta_flocklab_movingsink_staffettaless_detail} 
	\vspace{-3mm}
\end{figure}

\begin{figure}[!tb]
	\begin{center}
		\subfloat[Latency {[s]}]{ 
		\label{fig:staffetta_latencydirect_flocklab} 
		\includegraphics[width=0.45 
		\linewidth]{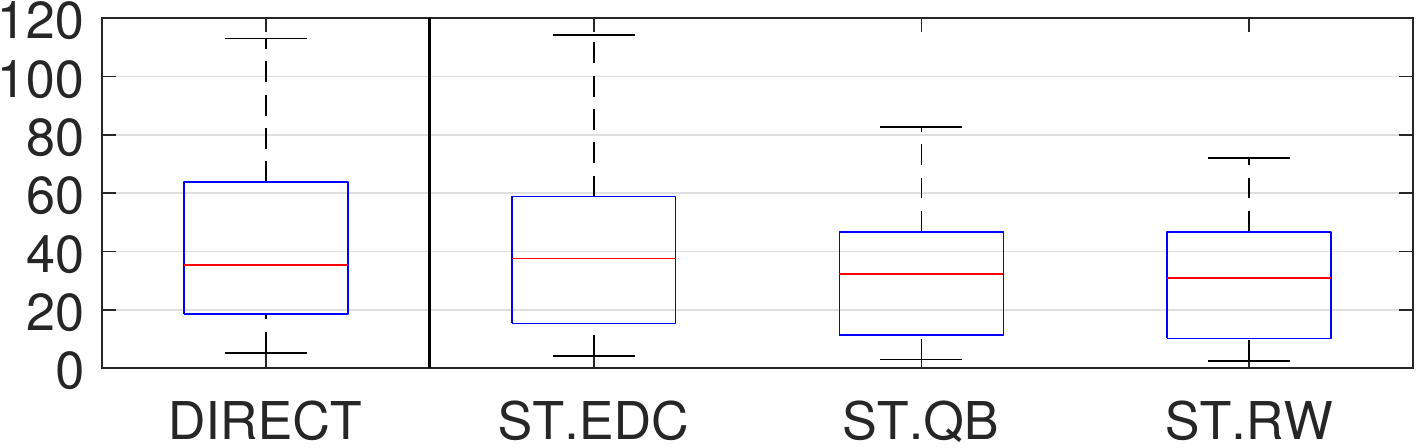} }
		\subfloat[Delivery Ratio]{ 		\label{fig:staffetta_deliveryratiodirect_flocklab} 
		\includegraphics[width=0.45 
		\linewidth]{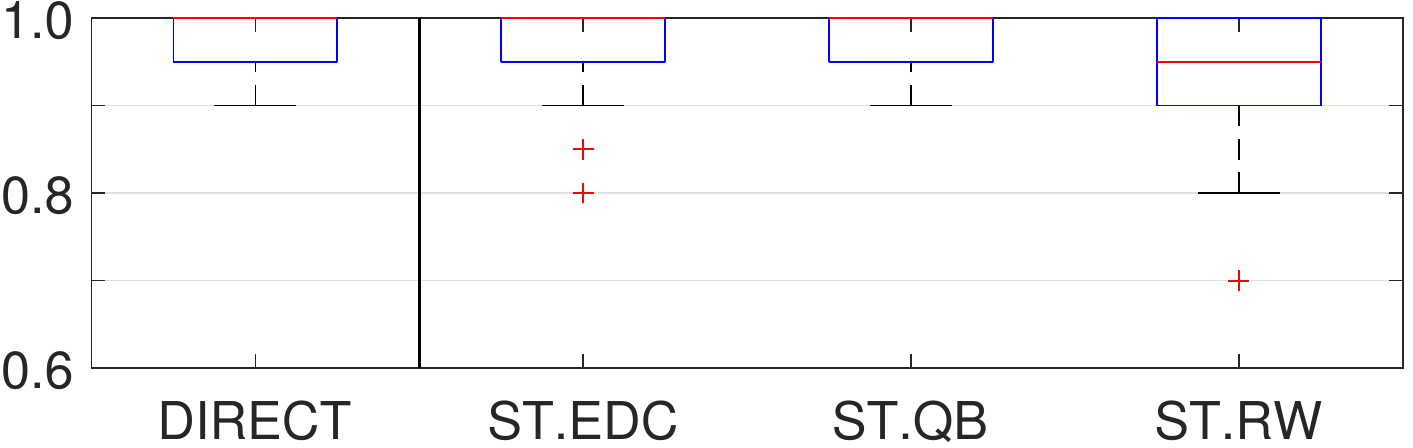} }\\	
		\subfloat[Path Length {[hop]}]{ 		\label{fig:staffetta_pathlengthdirect_flocklab} 
		\includegraphics[width=0.45
		\linewidth]{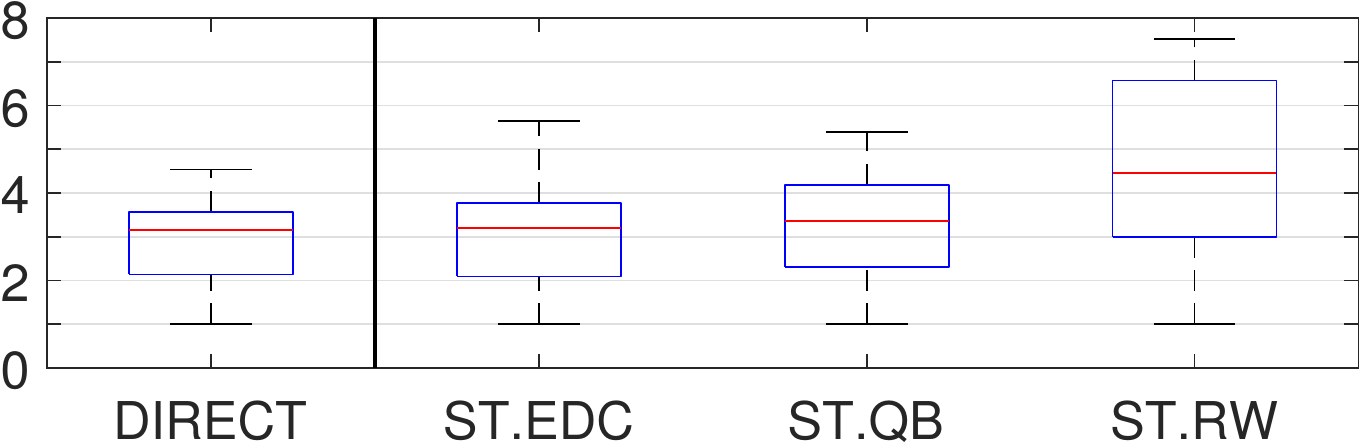}}
		\subfloat[Duty Cycle {[\%]}]{ 		
		\label{fig:staffetta_wakeupdirect_flocklab} 
		\includegraphics[width=0.45 
		\linewidth]{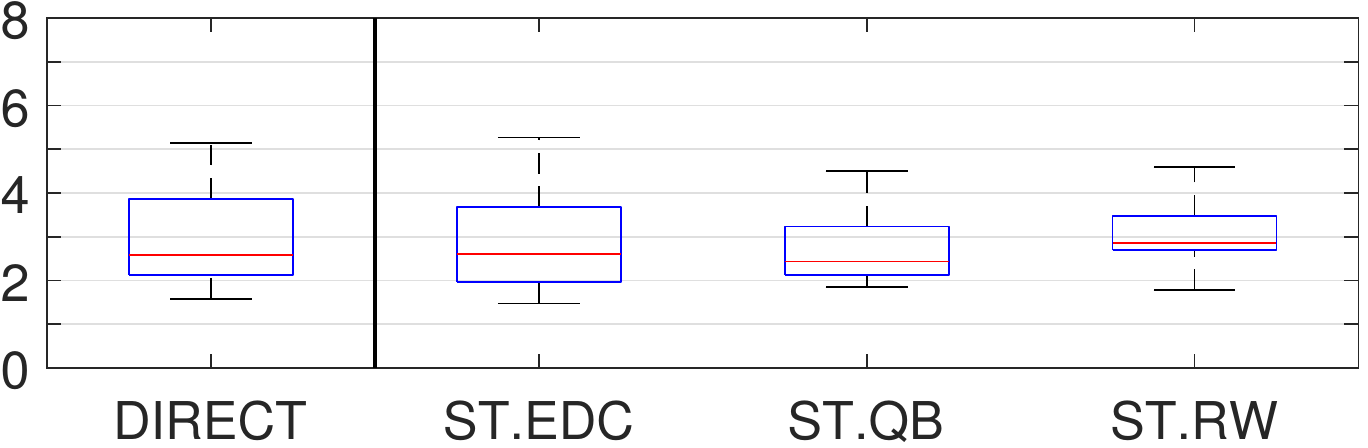} }
 	\end{center}
	\caption{Performance metrics and path length with Staffetta for DIRECT and different explicit routing metrics on the FlockLab. DIRECT uses Staffetta's activity gradient to effectively guide its forwarding decisions without the additional overhead in terms of implementation complexity and resource usage incurrent when maintaining explicit routing costs.} \label{fig:staffetta_direct_flocklab} 
\end{figure}

\section{Related work}
\label{sec:staffetta_sota}

The amount of work related to energy-efficient communication protocols that
duty cycle the radio is vast.  In this section we focus on the two categories
that are most related to our work on Staffetta: protocols for data collection
over duty-cycled networks, and mechanisms that dynamically schedule the duty
cycle to improve performance.

\paragraph{Data collection protocols.}
The first generation of data collection protocols for sensor networks saw
duty-cycling as an issue that had to be overcome. The Collection Tree Protocol
(CTP)~\cite{Gnawali2009}, for example, was built on top of the fact that
broadcast is a primitive that should be used rarely, because it consumes a
lot of energy under duty-cycling techniques. Therefore, CTP uses unicast for
all data transmissions and broadcast only for route discovery~\cite{Levis2004}.
The Broadcast Free Collection protocol (BFC)~\cite{puccinelli2012} took this
strategy one step further, and avoids the use of broadcast altogether; routes
are discovered by eavesdropping unicast transmissions of neighboring nodes.
Although this approach saves significant amounts of energy, the price of
overhearing in duty-cycled networks is rather high leaving room for further
improvements.

Glossy~\cite{Ferrari2011} and its successors~\cite{Ferrari2012}
completely remove the need for route discovery by using an efficient
flooding mechanism based on constructive interference. Every message is
delivered to every node in the network, making data collection trivial.
In this case, duty cycling happens between synchronized active periods,
in which every nodes utilizes its radio. The downsides of these techniques
are the need of a central node to coordinate the network and the dependency on very tight time synchronization.

Finally, opportunistic protocols like ORPL, ORW, and
ORiNoCo~\cite{Duquennoy2013,Landsiedel2012,Untersch2012a} enhance the efficiency
of routing in duty-cycled networks by exploiting \emph{opportunistic anycast}
to reduce forwarding delays, balance load more evenly, and take advantage of
temporal links. As demonstrated in the previous section, Staffetta improves
the performance of these kind of protocols by biasing the forwarder selection
to nodes closer to the sink. 

\paragraph{Dynamic duty-cycling mechanisms.}
No matter how efficient data collection protocols strive to be, their
performance will always be bounded by the intermittent activity of the radio
underneath (usually with a fixed frequency). Therefore, several works adapt
the duty cycle to values that still meet the requirements of (traditional)
data collection.

Theoretical works~\cite{Kim2011,Kim2010}, for example, have analyzed optimal
duty-cycling parameters (like wake-up frequency) to minimize the packet
latency of opportunistic collection protocols in energy-constrained networks.
However, results are generally derived based on assumptions that are made
more for mathematical tractability than to resemble typical WSNs settings
(Poisson arrivals vs.\ strict periodic reporting) thereby ignoring the
intricacies and overheads of practical implementations.

From a more realistic viewpoint, the pTunes framework~\cite{Zimmerling2012}
models communication efficiency as an optimization problem with parameters
such as latency, energy consumption and reliability.  Through an efficient
network dissemination technique (based on Glossy), pTunes periodically collects
(centrally) network information and, based on the application requirements,
computes the optimal duty-cycling parameters and disseminates them to the
nodes. Staffetta complements this centralized approach by providing a good
but non-optimal solution that is fully distributed, has minimal overhead
and addresses the unique characteristics of opportunistic protocols.

In a similar way, ZeroCal~\cite{Meier2010} balances the energy consumption of
nodes by adapting their wake-up frequency. Compared to ZeroCal, we believe
that Staffetta has three main differences. First, ZeroCal requires radio
and data traffic parameters to solve a distributed optimization problem,
while Staffetta requires only one parameter (the energy budget) and a simple
operation to adapt the wake-up frequency locally.  Second, ZeroCal incurs a
higher overhead since it requires to monitor the state of nodes' children,
while Staffetta does not require any such explicit input. Finally, ZeroCal
focuses on the unicast primitive, while Staffetta addresses opportunistic
protocols and their anycast primitive.

\section{Conclusions}
\label{sec:staffetta_conclusions}

In this chapter we addressed the fundamental issue of efficient data
collection in dynamic networks, where protocol designers must balance
flexibility with communication performance and energy efficiency.
While opportunistic collection protocols successfully trade off 
these two antithetical goals, they do not exploit the full potential of
the network, which is essential to scale to the demanding conditions of EWSNs (see \chapref{chapter:introduction}).
We have demonstrated that, instead of \emph{coping} with the
challenges of low-power communication, collection protocols can \emph{tame}
the duty-cycle mechanism and use it to their advantage (opportunistic principle), raising performance and
reducing (average) energy consumption at the same time.

Our low-level Staffetta mechanism sets the wake-up frequency according to the forwarding latency observed by the nodes; the less time is spent on forwarding, the more time can be spent on servicing incoming traffic (i.e.\ waking up more frequently). The net effect is that Staffetta sets up a gradient with nodes close to the sink waking up more frequently than nodes at the edge of the network. This gradient automatically steers opportunistic anycast traffic towards the sink as the probability of finding an awake neighbor is highest in the direction of the sink.

We implemented Staffetta in Contiki, and evaluated it with three opportunistic collection protocols on two different testbeds. The extensive set of experiments showed that Staffetta significantly reduces both packet latency and energy consumption, while maintaining high -- or even improving -- packet delivery ratios. 
As Staffetta does not need to maintain complicated routing metrics spanning the entire network (state-less principle), it can also handle network dynamics like link-quality fluctuations and node mobility really well. We found that Staffetta adapts its gradient in just a matter of seconds.

\chapter{Crowd Monitoring in The Wild}
\label{chapter:nemo}


\epigraph{``Infinite diversity... in infinite combinations.''}{Vulcan Master and Young Tuvok}

\dropcap{T}{he} achievements in the field of Wireless Sensor Networks have been steadily increasing in the last years, leaving less and less time to researchers for properly developing, testing and publishing their ideas before they get obsolete. 
For this reason, many protocols for WSNs (extreme or not) have only been evaluated under limited and controlled conditions. From high-level simulations to mildly-mobile testbeds. 
Even though simulations and testbeds are highly valuable tools for the initial deployment of wireless systems, many researchers stop their evaluation at these (controlled) conditions, often missing the variety of challenges that only real-word scenarios can exhibit. This is particularly true for crowd monitoring applications and EWSNs.
For this reason, the following chapter completes this thesis by testing some of the previously presented mechanisms under real world, extreme conditions. 

During the Christmas break of 2015, 
we seized the opportunity to use a large-scale science museum in Amsterdam (\textit{NEMO}) to evaluate a bunch of crowd monitoring systems, one of which was based on SOFA and Estreme. These systems had to track the density of the crowd around Points of Interest (PoI) to infer their popularity. 
Due to the size and complexity of the museum, the curators had difficulties  quantifying the dynamics and interests of thousands of daily visitors on the many activities and exhibits offered by the museum.


\paragraph{The NEMO museum.\footnote{This research was part of Science Live, the innovative research programme of Science Center NEMO that enables scientists to carry out real, publishable, peer-reviewed research using NEMO visitors as volunteers.}} NEMO is the 5$^{th}$ most visited museum in the Netherlands, with half a million visitors per year. What makes this museum interesting as a case-study is not only its numbers, but rather three characteristics that are seldom found in indoor environments:

\subpar{A big open space}. Unlike other museums, or buildings in general, that consist of clearly differentiated areas (rooms), the layout of this museum could be seen as an open 3D space deployed over six stories with few boundaries (walls), \cf Figure~\ref{fig:nemo_nemo_3d}. With more than 5,000\,m$^2$ of exhibition space, the lack of boundaries makes it difficult to estimate the number of people participating, or interested, in each particular exhibit. 

\subpar{Many random dynamics}. The challenges posed by the open space layout are aggravated by the large number of visitors and their ``random'' movement patterns. The museum is targeted mainly for kids and it does not have suggested routes. Visitors are free to roam around, and we had many: more than three thousand visitors per day. The average visit lasted approximately three hours. At peak hours the museum had more than two thousand visitors. 

\subpar{Heterogeneous events}. The museum has different types of exhibitions. Some attract tens of visitors, other hundreds. Most of them are open experimentation areas where visitors spend variable amounts of time. Other events are scheduled at particular times of the day and have a rather constant duration. These scheduled events can take place in either open or closed areas. 

\paragraph{The monitoring system.} The monitoring system at our disposal was based on radio-frequency beacons and had three main components: bracelets given to visitors, a network of anchor points placed at various points in the museum and a network of sniffers to collect the data. We now describe briefly each one of these components.


\subpar{Bracelets.} Visitors were asked to wear a bracelet equipped with a 2.4\,GHz transceiver and a microcontroller. The bracelets sent periodic beacons and were powered by a super capacitor that could be fully charged in less than 2 minutes. The lifetime of the bracelets was approximately 8 hours with a duty cycle of 2\%.  The bracelets were charged and programmed just before check-in, when visitors were informed of the experiment and decided whether to participate or not (about 25\,\% did).

\subpar{Anchor Points.} The curators asked us to monitor 30 Points of Interest (PoIs) in the museum. At each point of interest, we placed an anchor node. These anchor nodes had the same hardware as the bracelets, but were powered by a larger battery (since they were static). This extra source of energy enabled anchor nodes to send beacons at a higher frequency, which increased the probability of being discovered by nearby bracelets and speed up the density-estimation process.

\subpar{Sniffers.} Upon a successful encounter between a bracelet and an anchor, a packet describing this exchange was sent on a dedicated channel to the system's backbone: the sniffers. Each sniffer consisted of bracelet hardware, to receive the packets, and a single-board computer that used either WiFi or Ethernet to commit the received information to a central database. The sniffers were placed uniformly in the museum in order to cover all areas.

\fakeparagraph In terms of privacy, the only requirement was for the data to be anonymous.
Bracelets have IDs, but as we purposely refrained from registering visitor details, these IDs could not be used to identify specific people.

Considering the needs and requirements of the museum's curators and the monitoring system at hand, we proposed the use of energy-efficient neighbor discovery mechanisms. Having each anchor point monitor periodically the surrounding bracelets' ID provides sufficient information to track the popularity of an exhibit.

Being aware, however, of the growing concerns users have with ID tracking and the challenges of performing neighbor discovery in EWSNs (\cf \chapref{chapter:introduction}), the museum setup gave us a unique opportunity to also assess the performance of ID-less methods \eg Estreme in a highly unstructured scenario. 
Our initial hypothesis was simple: the community has developed lightweight methods to estimate node density without the need of IDs, given that the popularity of an exhibit is determined by the number of people in its surroundings, density estimation should be an accurate proxy for popularity. Our evaluation, however, shows that density estimators are ill-equipped to monitor popularity in scenarios with open spaces, high dynamics and heterogeneous types of events. Our key findings are three-fold:

\vspace{-.2mm}
\subpar{1) Density estimators are not accurate in many cases.}
Neighborhood cardinality estimators (based on order statistics) lead to high
error rates. About 80\,\% of anchor points failed to detect the number of surrounding people within 20\,\% of the true value. This low accuracy differs from testbed results and is due to a mixture of having open, unstructured spaces and high crowd dynamics.

\vspace{-.2mm}
\subpar{2) Density classifiers do not work in most cases.} Density classifiers, which infer the size of a crowd based on RSS signatures, do not work in our setup. We found that the underlying assumptions of these methods (more people imply lower mean RSS and higher RSS variance) are not universal. Many points of interest in the museum provide contradicting RSS trends. This occurs because prior studies assume uniform densities, while the museum has a highly heterogeneous distribution.  

\begin{figure}
	\centering 
	\includegraphics[width=0.6\linewidth]
	{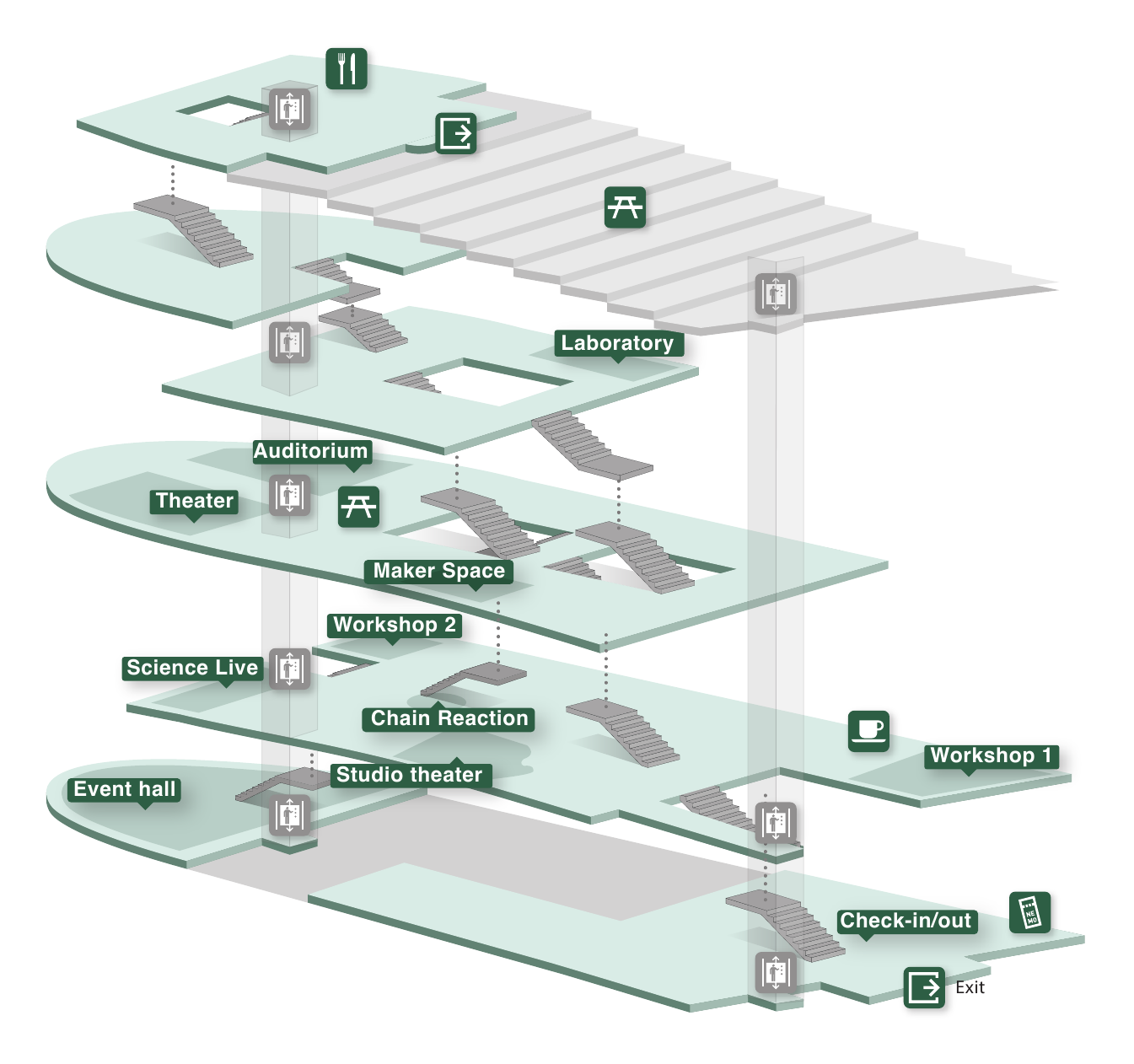} \caption{Floor plan of the science museum.}
	\label{fig:nemo_nemo_3d} 
\end{figure}

\vspace{-.2mm}
\subpar{3) Even if density estimators would be accurate, they cannot monitor popularity in scenarios with open spaces having a mixture of mobile and static people.} Most museums have clear structures, thus, it is fair to assume that the people in front of a exhibit, say a painting, are looking at it. But our museum is an open space, and the people around an exhibit are a mixture of people passing by and people actually using or queuing for the exhibit. ID-less methods cannot differentiate the interested crowd (popularity) from the by-passers. More work must be done to understand and integrate the effect of crowd dynamics on density estimators.

\section{Related Work}

The analysis of human mobility is of significant importance to gain fundamental insights about people's activities, needs, and interests. 
To position our work within this body of knowledge, we first divide existing efforts into two main categories: wide- and local-area scale, then we describe work pertaining the monitoring of museums.

\subpar{Wide-area scale}. At this scale, the focus is put on data collection within city, regional or global boundaries. To track this large body of data it is necessary to utilize various sources of information and techniques, such that the rich semantics of the underlying human behavior are captured accurately. Technologies used to track the location of individuals at this scale span from GPS, WiFi, Cell Tower, to Bluetooth, and the data can be collected from different sources, such as mobile phones~\cite{isaacman2012human}, buses~\cite{bhattacharya2013gaussian}, subways~\cite{lathia2011smart}, cars~\cite{Giannotti2011}, and taxi vehicles~\cite{zhang2013coride}. To avoid biases in the collected data, due to a strong focus on a single or a small set of sources, a multi-source approach is necessary~\cite{Desheng2014}.

\subpar{Local-area scale}. These approaches are characterized by a stronger focus on a particular domain or application, and within the smaller boundaries of one or few buildings, or sometimes just a room. For example, studies focus on the analysis of visitors interactions, flows, and the popularity of points of interest within buildings~\cite{Martella2016,you2011}, fairs~\cite{Weppner2011}, festivals~\cite{Weppner2013}, and conferences~\cite{martella14}. As these scenarios are more constrained, the infrastructure is usually owned and controlled by the user of the data (\eg the museum, the fair, the conference), and it is often designed for the task at hand. Compared to Wide-area scale scenarios, Local-area efforts aim at uncovering the unique properties of the scenario at hand. Our work falls in this category and focuses on museums.

\subpar{Museum monitoring.} For the case of monitoring and analyzing the behavior of visitors in a museum, the visualization of metrics such as popularity, attraction, holding power and flows has been explored to support the work of museum staff~\cite{lanir2014visualizing,Strohmaier2015}. Many studies have also utilized sensors to measure the behavioral patterns of museum visitors. Earlier works focus on localizing visitors at coarse-grained levels (\eg room level) through technologies like Bluetooth~\cite{yoshimura2012new} to support multimedia guides~\cite{bruns2007enabling,wilson2004}. A recent work utilizes entrance and exit surveys together with physiological sensors and RFID-based positioning to classify the behavior of visitors into three categories: ``the contemplative'', ``the enthusiast'', and ``the social experience''~\cite{kirchberg2015museum}. Similarly, an RFID-based positioning sensor together with a compass have been used to study pairs of visitors, classify their behavior early in their visit and provide social-aware services to increase their engagement~\cite{dim2014automatic}. 
These works are conducted on traditional compartmentalized museums, focusing on classifying the visitor's experience. Our deployment provides \mbox{new insights based on a challenging} scenario with an open floor plan and very dense crowds.

An alternative, and potentially complementary, approach to mobile and wearable sensors is to use cameras. Computer-vision techniques are a common approach to tracking human mobility~\cite{ steffen2010methods, yaseen2013real, zhan2008crowd} and detecting anomalies in a crowd~\cite{krausz2012loveparade, mehran2009abnormal,Wijermans2016142}. However, cameras can suffer from poor lighting, and temporary or permanent obstructions~\cite{zhan2008crowd}. 
Moreover, fusing the views from multiple cameras together in a highly dynamic indoor scenario is still challenging~\cite{song2011wide}. Finally, collecting large-scale footage of visitors can potentially generate privacy issues.

Once visitors mobility has been tracked, it is suitable for a number of applications. In the case of monitoring and analyzing the behavior of the visitors of a museum, visualization of metric such as popularity, attraction, holding power and flows has been explored with the purpose of supporting the museum staff~\cite{lanir2014visualizing,Strohmaier2015}. In the case of city-wide events, crowd behavioral patterns such as crowd density and pedestrian flows can help the monitoring of the crowd by the institutions~\cite{martino2010ocean,wirz2013coenosense} as well as support the individual decision-making of the members of the crowd~\cite{blanke2014capturing, wirz2010user}.

\subsection{Evaluated Methods}
Given the monitoring system available at the museum, we looked for methods that are amenable to RF beacons. We consider three methods: neighbor discovery, density estimators and density classifiers. 

\paragraph{Neighbor Discovery.}
Neighbor discovery methods monitor periodically the set of devices in their radio range. 
%
By keeping track of the encountered IDs over time, these mechanisms provide to crowd monitoring applications rich and accurate information about the so-called \emph{crowd texture}, usually modeled through a series of \emph{proximity graphs}~\cite{martella14}. Using this information, it is indeed possible to detect different crowd dynamics, like pedestrian lanes, congestions, and social groups, as well as to uniquely associate mobile devices with static anchor points and reconstruct the complete sequence of locations visited by a person~\cite{Martella2016}.

\subpar{Efficient discovery.} Neighbor discovery provides high quality information but is resource-demanding and can have non-trivial implementations depending on the requirements of the application~\cite{Dutta2008,Kandhalu2010,martella14b,Purohit2011} and the network conditions (\cf \chaprefs{chapter:introduction}{chapter:estreme}).
Our deployment, however, allows for a simple implementation. Instead of making the PoIs discover their neighborhood (potentially hundreds of bracelets), we made the neighbors (bracelets) discover the PoIs. Taking advantage of their extra energy, anchor points have longer duty cycles and are more frequently active. Upon waking up, a bracelet swiftly discovers its PoI. The information is then immediately sent to the sniffers, allowing an off-line reconstruction of the anchor point's neighborhood. 

\subpar{Data filtering.} 
In the event of a bracelet discovering multiple PoIs during a period, we disambiguate the information by uniquely assigning the bracelet to the PoI with the strongest signal strength. RSS signals are known to be noisy, but this simple heuristic allows us to eliminate many false positives, as demonstrated later. 

\paragraph{Neighborhood Cardinality Estimators.}
Compared to neighbor discovery mechanisms, neighborhood cardinality estimators trade information richness for speed and efficiency. 
To perform their estimations, these mechanisms often model the lower communication layers, searching for features that correlate to device density. 
The underlying principle is based on order statistics and is simple to follow: the more neighbors beaconing, the shorter the perceived inter-beacon interval and the higher the estimated density. 
For example, NetDetect exploits the underlying distribution of packet transmissions~\cite{Iyer2011}, while Qian \etal exploit the packets' arrival patterns in RFID systems for a fast estimation of tag cardinality~\cite{Qian2011}. 
For our deployment, we opted for Estreme (\cf \chapref{chapter:estreme}), which estimates the neighborhood cardinality by measuring the average time between periodic, asynchronous beacons. Compared to the other techniques, Estreme was preferable because of its simplicity, high reported accuracy and minimal parameter tuning.

\paragraph{Density Classifiers.}
Density classifiers try to infer the density of a crowd by exploiting correlations with signal strength statistics~\cite{Nakatsuka2008,Weppner2011,Weppner2013,Yuan2011,Yuan2013}. These techniques require placing a set of wireless devices within the crowd, analyzing the changes of several radio features as the crowd density changes, and training a classifier. 
For example, a common premise is to assume that as the density of people
increases, the mean RSS decreases (due to bodies blocking radio communication) and the RSS variance increases (due to the added multipath effects).
In our work we do not implement these classifiers, but focus on assessing if the RSS trends reported in the literature hold in all scenarios.

\section{Density analysis} 

As stated in the introduction, in some scenarios, resource constraints or privacy issues may prevent the use of neighbor discovery methods. Under these circumstances, density could be used as a proxy to assess the popularity of an exhibit. Thus, the first question that needs to be answered is: \emph{how accurate are the density estimators reported in the literature in the heterogeneous setup of our museum?} 

\subsection{Estimation accuracy}

\begin{figure}
	\centering 
	\includegraphics[width=\textwidth]
	{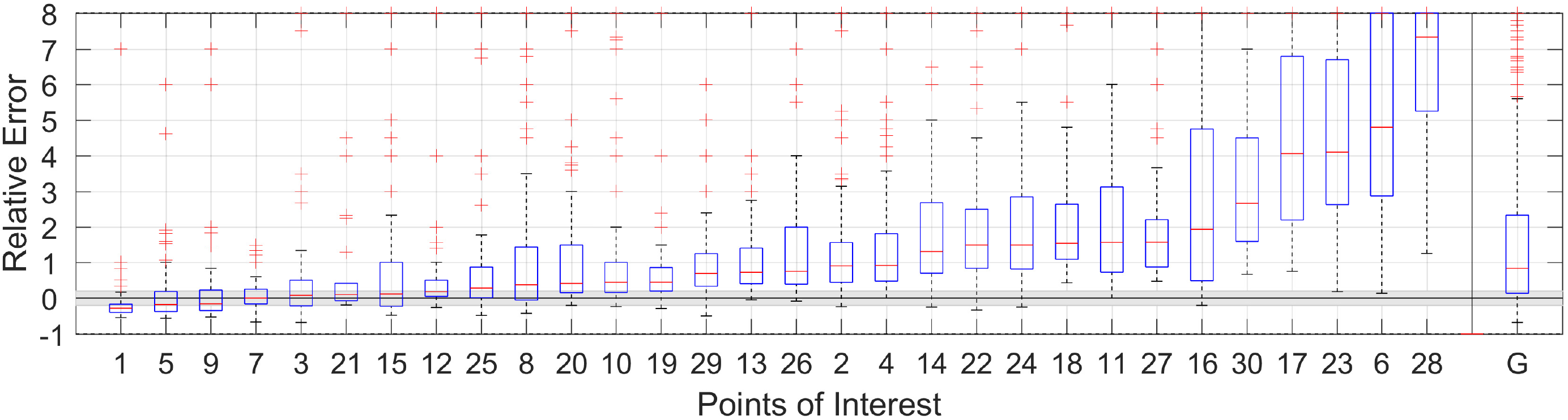} \caption{Density estimation error for each PoI (1$\dots$30) and for the global system (G).}
	\label{fig:nemo_estimation_error} 
\end{figure}

In our evaluation, we use the information obtained with the neighbor discovery method as ground truth\footnote{Recall that the participation rate of visitors was around 25\%, thus the actual number of visitors was four times what is shown as ground truth. Note that this sub-sampling rate does not affect the validity of our analysis.}. The error of the density estimator is computed as 
$$
\frac{D_{est} - D_{true}}{D_{true}},
$$
where $D_{est}$ is the density provided by the cardinality estimator and $D_{true}$ is the ground truth.
Figure~\ref{fig:nemo_estimation_error} shows the results per exhibit during one day of deployment, 09:00 - 17:00. The plot captures the min, max, median (red line) and the 25th and 75th percentile errors. The red plus signs depict outliers, which we define as points deviating by more than two times the standard deviation from the mean. PoIs are ordered by increasing median error. The last box represents the aggregate accuracy of the global system (G).

As can be seen, only a quarter of the PoIs show a median error within the range observed in testbed experiments of \chapref{chapter:estreme} ($<$ 20\,\%), grey horizontal bar in Figure~\ref{fig:nemo_estimation_error}. 
Except for three PoIs, which are underestimating their density, the remaining ones are all over-estimating their density, with median errors ranging from 20\% to 700\%.
The resulting global estimation error shows a median over-estimation of 83\%.

\subpar{Why is the estimator so inaccurate?}
The problem lies on the fact that previous work was evaluated in a standard building, with clearly demarcated areas (rooms) and with homogeneous densities. Conditions that in our deployment only occur for a small fraction of PoIs, as discussed next.

\begin{figure}
	\centering 
	\includegraphics[width=\linewidth]
	{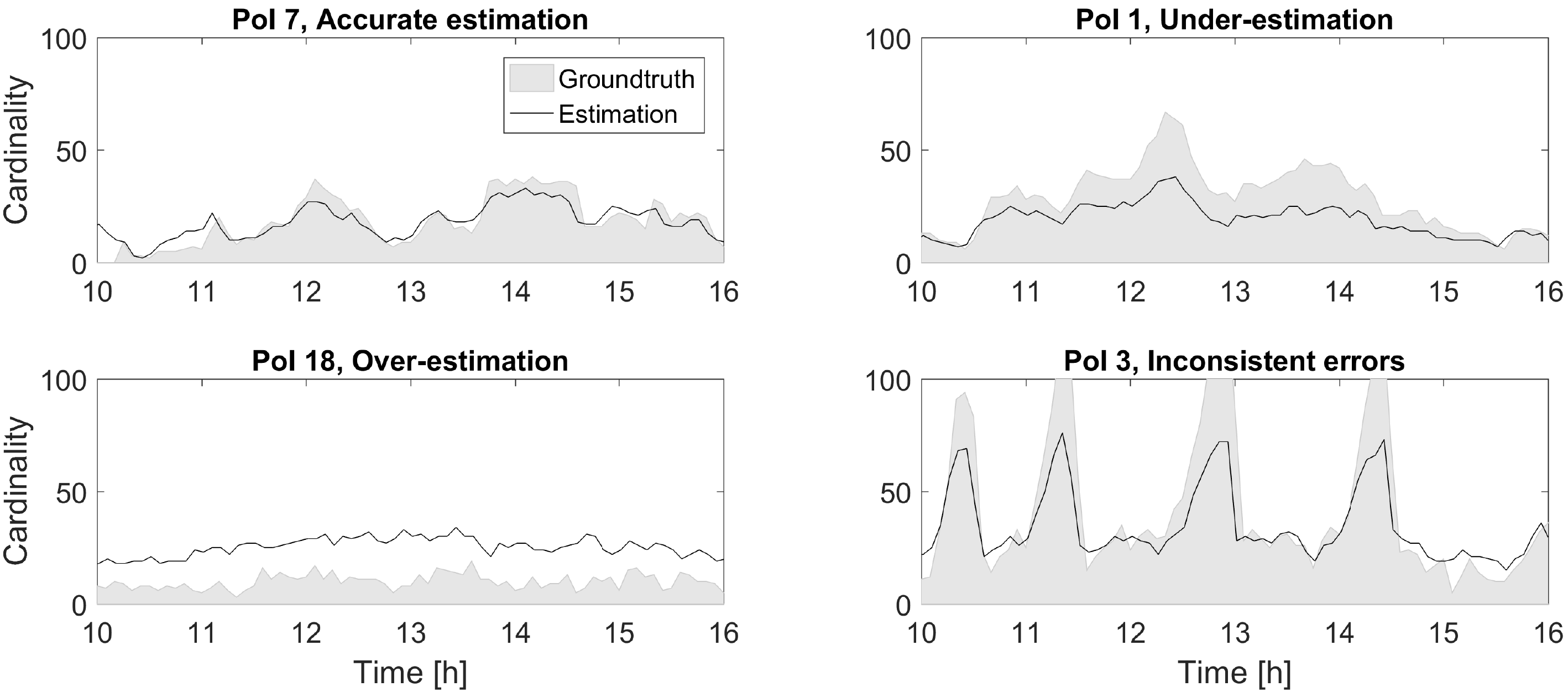} \caption{Density estimation errors over time.}
	\label{fig:nemo_density_comparison} 
\end{figure}

\subsection{Coverage versus Overlap}
\label{sec:nemo_coverageoverlap}
To better understand the causes of inaccuracies of the density estimator, we analyzed the individual estimations at each PoI. Figure~\ref{fig:nemo_density_comparison} depicts a selected representative subset of PoIs. We distinguish four common patterns.

\subpar{Accurate Estimation.} 
PoI 7, corresponding to the museum's ``theater'', represents an enclosed room where visitors join a workshop that runs continuously. In this scenario, the conditions are similar to the ones in the original work. The density estimator performs accurately throughout the day, with errors that are within the expected  boundaries ($<$ 20\%).

\subpar{Consistent Under-estimation.} 
PoI 1 corresponds to the museum's ``event hall'' and represents a large open-space area where different workshops are given. 
Compared to the previous PoI, this is a more isolated area that is connected to the rest of the museum via a set of long stairs. 
This large area is covered by one anchor point that was not capable of achieving stable communication with some of the bracelets in the area (\emph{false negatives}). 
Due to this poor coverage the densities were consistently under-estimated, especially when the crowd increased, such as at 12:30.
 
\subpar{Consistent Over-estimation.} 
As stated before, most of the PoIs suffer from over-estimation. PoI 18 is a representative point that corresponds to an exhibit that is centrally located in the museum.
Due to the open 3D layout (no barriers), the high number of people (bracelets) and dynamics (movement), we face two problems: overlapping coverage regions among PoIs and \textit{long links}\footnote{Wireless channels with high node densities and high dynamics are known to form temporal long links. Long links are a desirable phenomena for routing but not for density estimation and localization.}. These overlapping regions and long links lead to many false positives (duplicated IDs), and the high number of false positives leads in turn to gross over-estimations. Under normal circumstances --buildings with rooms and relatively low and homogeneous densities-- false positives are less common and usually filtered out by walls. 

A popular method used to discard false positives is to implement a virtual filter (virtual wall). In this way estimators would not consider signals below a given RSS. Unfortunately this filtering method would not work in our setup. We varied the RSS threshold of the neighborhood discovery mechanism and analyzed the resulting neighborhood cardinalities 
(\cf Figure~\ref{fig:nemo_coverageoverlap}).
For each moment in time, \emph{coverage} was computed by counting the number of unique IDs in the whole system, while \emph{overlap} was computed by summing the number of unique IDs at each PoI: 
$$
\text{coverage}(t) = \left|\bigcup\limits_{i=1}^{30}n_i^t\right|, \quad \text{overlap}(t) = \sum\limits_{i=1}^{30}\left|n_i^t\right|,
$$
where $n_i^t$ is the neighborhood discovered by PoI $i$ at time $t$. 
As we can see in Figure~\ref{fig:nemo_coverageoverlap}, the more aggressive the RSS thresholding is, the smaller the overlap problem becomes. Unfortunately, at the same time the coverage problem aggravates (which would lead to under-estimations), making it impossible to find an optimal thresholding value that balances coverage with overlap. 

Another method used to overcome false positives is to tailor the coverage of each anchor point to the specific area of interest of the exhibit, as in~\cite{Martella2016}, which uses directional antennas. But this option would be cumbersome to adjust since the exhibits in the museum do not have clear predetermined areas.

\subpar{Inconsistent Estimation Errors.} 
While for the previous two types of PoI (consistent errors) it may be possible to solve the coverage/overlap problem with a careful and time consuming deployment setup, for exhibits with large coverage areas and highly heterogeneous densities even a perfect positioning and filtering system would not reduce estimation inaccuracies.

This is the case for PoI 3, which corresponds to the ``chain reaction'' show. 
This is the most famous exhibit in the museum and spans from the first to the third floor of the building. This show is repeated 5 times a day, lasts 15 minutes and attracts a large fraction of the visitors around the internal balcony of the museum. 
As we can see in Figure~\ref{fig:nemo_density_comparison}, the estimation is quite accurate during the time where there is no show (low homogeneous density composed by people passing by), but during the show the density is heavily under-estimated.

This estimation error is not due to a coverage problem, but is an
artefact of how estimators cope with sudden changes in density. The
estimation process is sped up by averaging local estimates with
neighboring ones, effectively increasing the number of samples, in turn
boosting convergence.  An adverse effect is that this approach smoothes
the spatial density peaks, leading to under-estimations in denser areas
(PoI 3) and an over-estimation in sparser ones (see Section~\ref{sec:estreme_results}).

\begin{figure}
	\centering 
	\includegraphics[width=\linewidth]
	{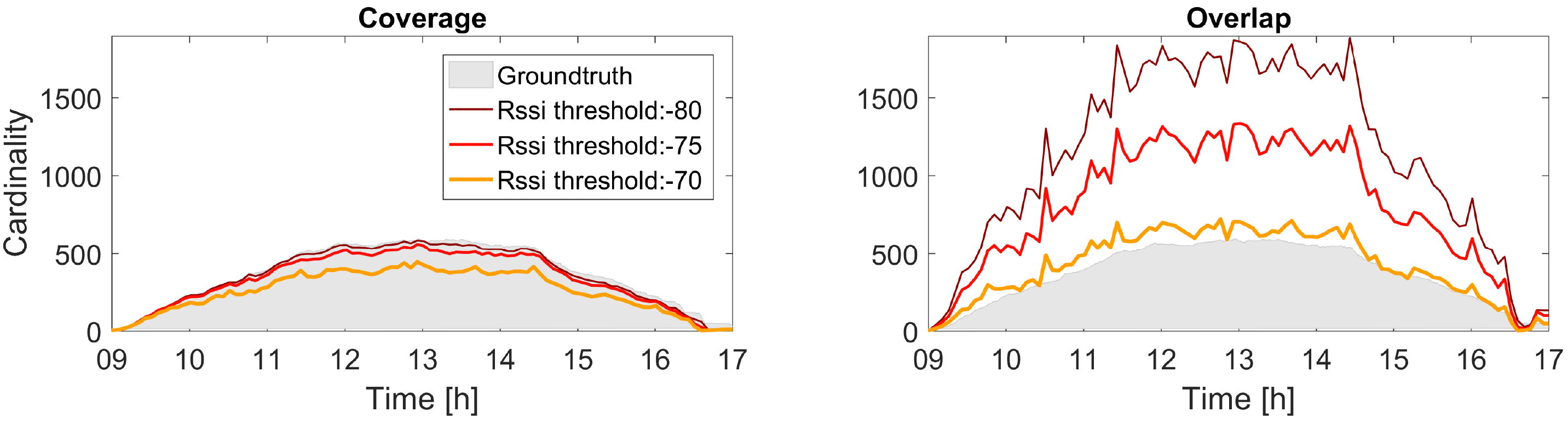} \caption{Coverage and overlap for different RSS thresholds. }
	\label{fig:nemo_coverageoverlap} 
\end{figure}

\subsection{Effects of Crowd Dynamics}
We now analyze the effect of crowds on the propagation of radio signals. This analysis highlights the limitations of density classifiers.
Our deployment has 11 static bracelets. We used these bracelets, together with anchor nodes (also static), to monitor the changes in radio signals throughout the day.

Figure~\ref{fig:nemo_signal_characteristics} depicts the effect that density has on radio coverage and signal strength. This data was obtained as follows. Every 300\,s, each static point (bracelet and anchor) calculated (i) the average density in its surroundings, (ii) the average RSS from other static points and (iii) the average `radio range'. The average radio range was obtained as follows: upon receiving a packet from a static point, the receiver calculates its distance to the sender (all static points have x,y coordinates), then, the receiver averages the distances observed during the 300\,s period. The boxplots represent the statistics during a day for the accumulated 300-second-long periods.

\begin{figure}
	\centering 
	\includegraphics[width=\linewidth]
	{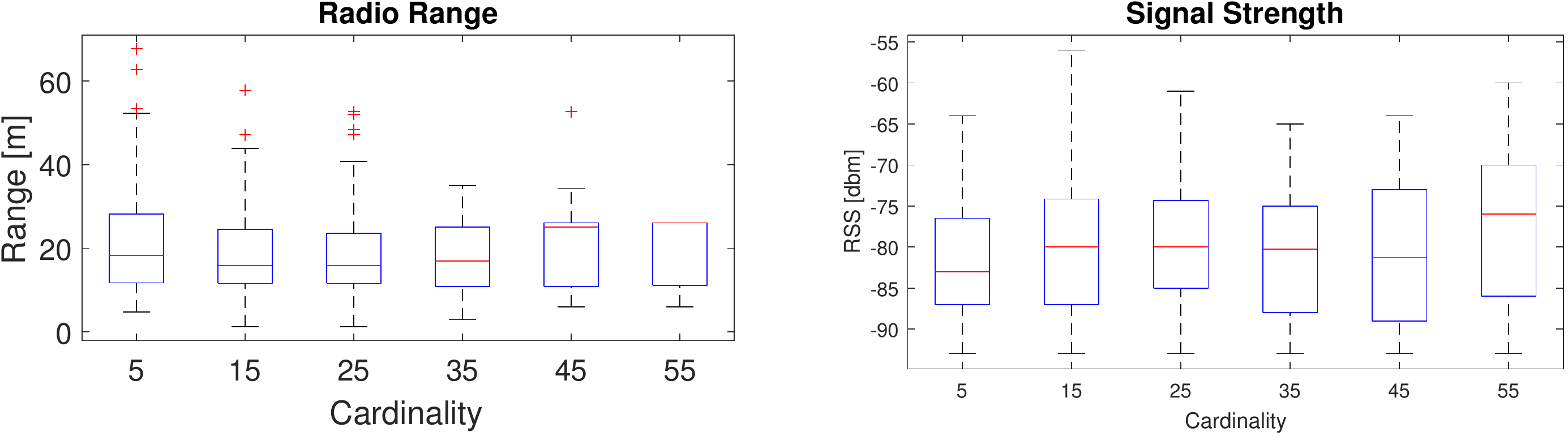} \caption{Effect of the surrounding density on static nodes.\vspace{-.8ex}}
	\label{fig:nemo_signal_characteristics} 
\end{figure}

\paragraph{Effects on Range.}
The results in Figure~\ref{fig:nemo_signal_characteristics} (left) indicate that an increase in crowd density reduces the maximum range of devices, but does not affect significantly the median values. 
This is not too surprising, since people are known to cause signal attenuation due to body effects, which filters out weaker radio signals sent by distant nodes. 
This result indicates that cardinality estimators, such as Estreme, are ill-equipped to calculate densities (number of devices over a given area), since the radio range can be highly variable.

\paragraph{Effects on Signal.}
According to many crowd density classifiers, when the crowd density increases, signal strength characteristics should get worse. 
Similar to what is hypothesized for radio range, the assumption is that an increasing number of people should reduce the average RSS (due to body effects) but increase its variance (due to multi-path effects).

Surprisingly, this relation does not hold for the RSS values observed in our deployment. Figure~\ref{fig:nemo_signal_characteristics} (right) shows no clear correlation between crowd density and key features in the RSS (median and variance).
To better understand why our observations diverge from previous works, we analyzed the relation between density and signal strength for each PoI. Figure~\ref{fig:nemo_rssi_density} depicts a representative subset. We can observe that each PoI in this subset showcases a different relation.

\subpar{Negative correlation.} 
PoI 9 represents an open-space exhibit close to other PoIs. This PoI has a rather constant and homogeneous density throughout the day, cf. Figure~\ref{fig:nemo_estimation_error}. 
In this scenario, the relation between density and RSS follows the assumption made in prior studies, with a decreasing median and a slightly increasing variance (except for density 55, for which we do not have enough data points to accurately compute the variance).

\subpar{Positive correlation.} 
Surprisingly, even though PoI 20 represents a scenario that has crowd dynamics similar to those of PoI 9, its RSS values show the opposite (!) relation with density. We do not know the exact reasons for this difference, but we hypothesize that it is caused by the different levels of interference perceived by the two PoIs.
By looking at Figure~\ref{fig:nemo_estimation_error}, we can see that PoI 20 is affected by more false positives than PoI 9. This implies that PoI 20 is exposed to more overlapping areas and long links, which increase interference. According to \cite{Behboodi2015}, this increased interference could lead to RSS values that are up to 8\,dBm higher than normal, an offset that is almost as big as the range observed in Figure~\ref{fig:nemo_rssi_density}.

\begin{figure}
	\centering 
	\includegraphics[width=\linewidth]
	{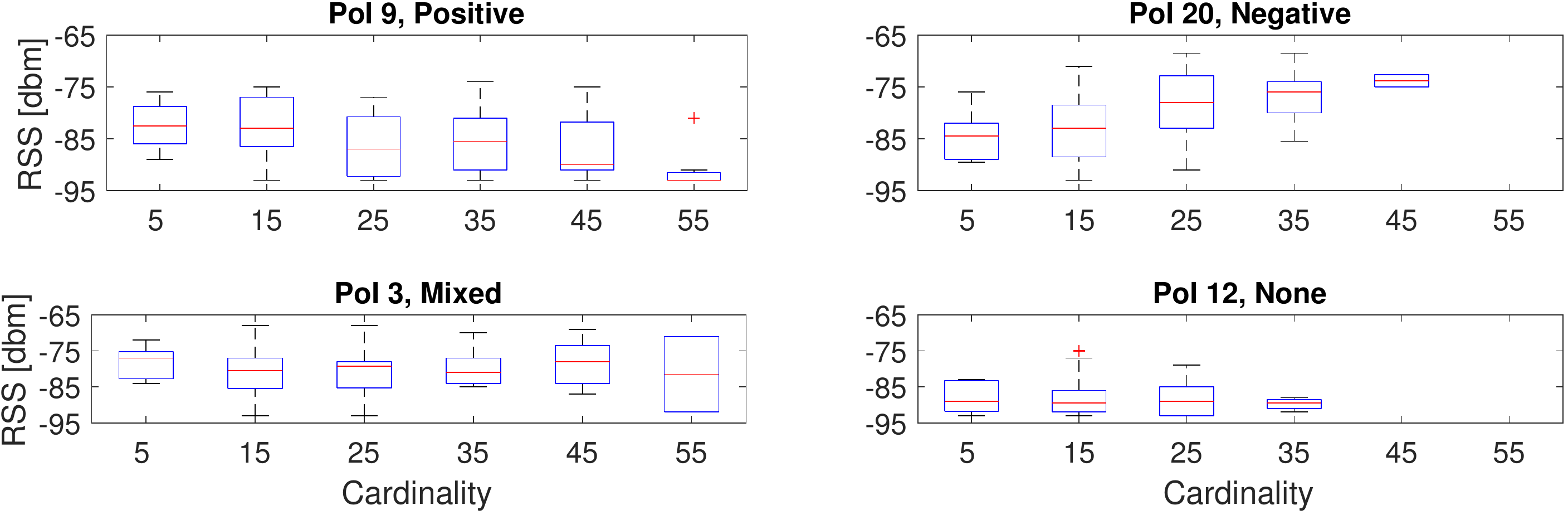} \caption{Effect of the surrounding density on RSS characteristics.}
	\label{fig:nemo_rssi_density} 
\end{figure}

\begin{figure}
	\centering 
	\includegraphics[width=\linewidth]
	{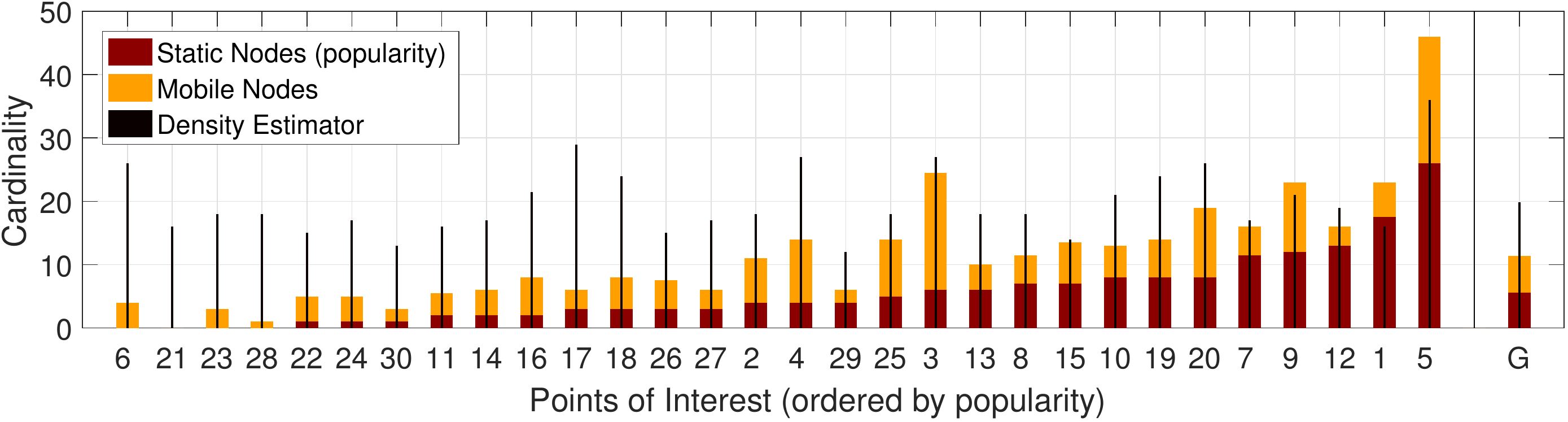}
	\caption{Breakdown of static and mobile nodes (neighborhood discovery)
	and the density estimates.}
	\label{fig:nemo_popularity_ordering} 
\end{figure}

\subpar{Inconsistent correlation.} 
PoI 3 corresponds to the open-space ``chain reaction'' show and presents inconsistent behaviors, with median RSS values that both increase and decrease with density. 
This bimodal behavior is probably caused by the peculiar characteristics of this PoI, which periodically morphs from a crowded point of transit to a large point of attraction (and vice versa).

\subpar{No correlation.} Finally, PoI 12 corresponds to a laboratory with limited number of participants, all seated and well spaced from each other. In this scenario the crowd is so sparse and small that it is not able to affect the perceived signal strength.

\fakeparagraph{}%
As can be seen, the relation between signal and density drastically changes depending on the scenario. 
We argue that for such classifiers to work in heterogeneous scenarios, such as our museum, PoIs would need to undergo a cumbersome training process to account for the peculiar radio features of each area. 

\vspace{1em}\section{Popularity Analysis} 
As was shown in the previous section, density estimation and
classification techniques struggle with the heterogeneous set of
conditions found in the museum, producing inaccurate results in many
cases. What is worse, in this section we will see that density in itself is a poor proxy for the popularity
of an exhibit. The reason is that density does not discriminate visitors
engaged with the exhibit's activity from ``passers by'' that happen
to be in the vicinity for a short while. This is especially a concern in the
museum as its lack of fixed routes (e.g., corridors) in combination with the
open-space layout makes it hard to physically separate visitors and passers by,
which rules out the use of RSS thresholding and directional antennas to
limit the area of interest surrounding an exhibit. (The concept of an area of
interest is also dynamic as at busy times people queue up in unpredictable
ways.)

\begin{figure}
	\centering 
	\includegraphics[width=\linewidth]
	{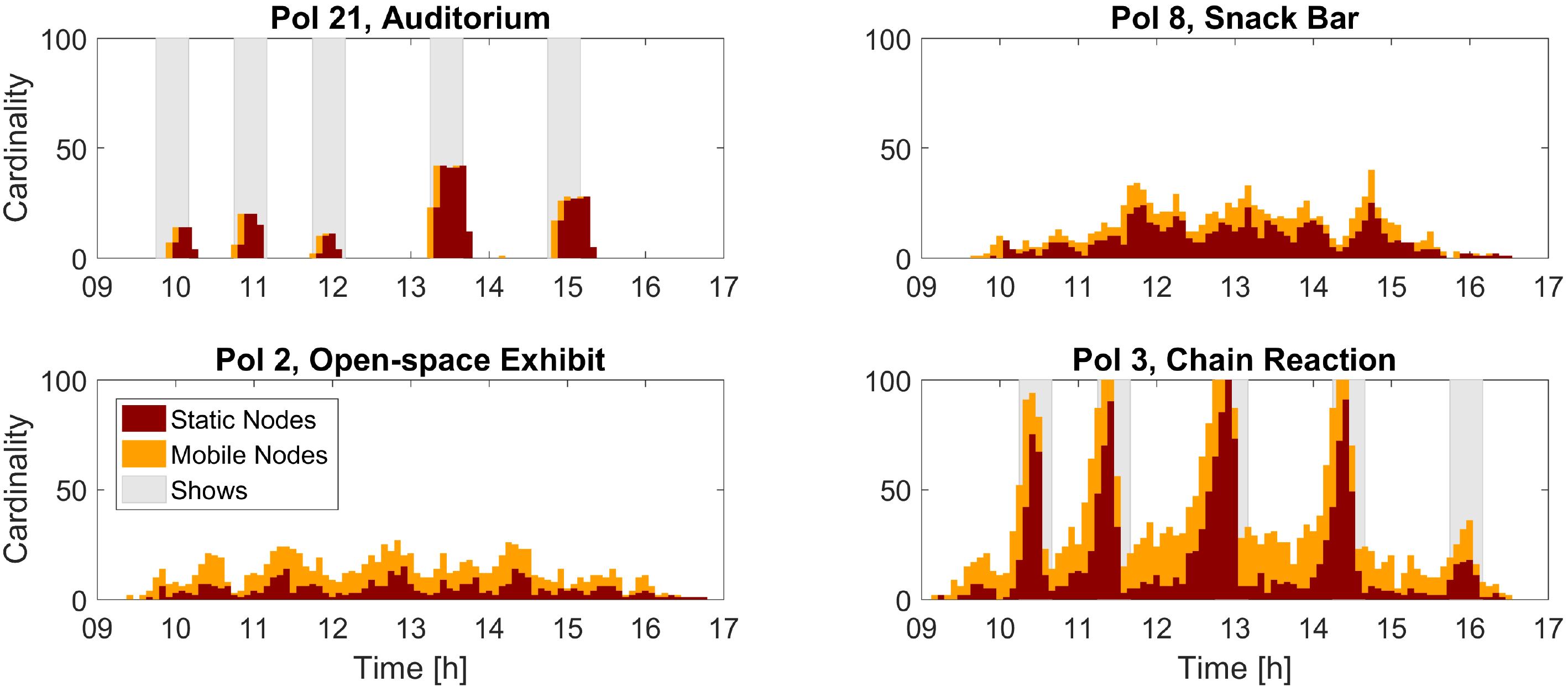} \caption{Density and popularity over time.}
	\label{fig:nemo_popularity_density} 
\end{figure}

\vspace{1em}The definition of popularity that we will be using throughout the
remainder of this chapter is defined as the number of visitors that stays in the
vicinity of an anchor point for some time window $W$. More formally,
we distinguish Static (S) and Mobile (M) nodes as follows:
$$
\text{S}(i,t) = \bigcap\limits_{w=0}^{W}n_i^{t-w},\quad \text{M}(i,t)= n_i^t - S(i,t),
$$
where $n_i^t$ is the neighborhood discovered by PoI $i$ at time $t$.  
The popularity of exhibit $i$ at time $t$ is then simply given by $|S(i,t)|$.
We experimented with lengths of window $W$ in the range of 10 - 300
seconds. As longer windows provide a smoother display of changes over
time, we will present the results for the largest window of 300\,s in the
sequel of the chapter, but it must be noted that with shorter windows similar
results are obtained.

The importance of properly discriminating static and mobile nodes
can be seen in Figure~\ref{fig:nemo_popularity_ordering}, which plots the
median number of static (red bars) and mobile (yellow bars) nodes observed with our
neighborhood discovery data during a day.  The higher the red bar, the more popular the PoI is. Note that the fraction of mobile nodes
differs considerably between PoIs, and is uncorrelated with popularity. Overall, about half the nodes are mobile, and should be
disregarded. For example, looking at the aggregated cardinality,
PoI 3 would be the second most popular exhibit, while in fact it is only the 11$^{th}$
most popular one (PoI 1 is the second most popular). For reference,
Figure~\ref{fig:nemo_popularity_ordering} also includes the cardinalities
produced by the density estimator (black lines), showing that using
them would result in an even-more incorrect ordering of the popularity
of the PoIs.

\begin{figure}
	\centering 
	\includegraphics[width=\linewidth]
	{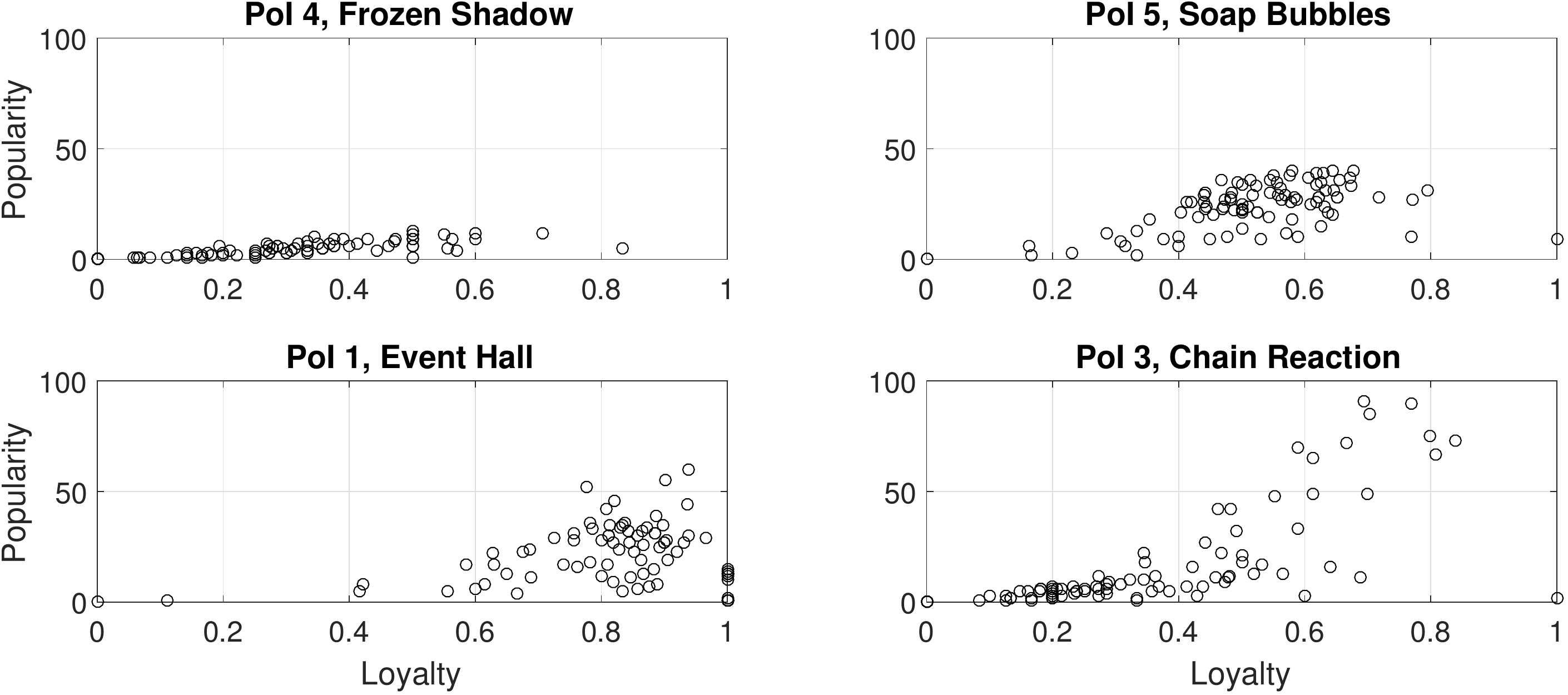}
	\caption{Attractiveness of point of interests.}
	\label{fig:nemo_attractiveness} 
\end{figure}

While Figure~\ref{fig:nemo_popularity_ordering} presents the median cardinalities
across a whole day, it is also instructive to look at how the fraction of
mobile nodes changes over time as interesting patterns can be observed.
Figure~\ref{fig:nemo_popularity_density} plots the static/mobile breakdown of four
exemplary scenarios.

\subpar{Points of attraction.} 
PoI 21 corresponds to an auditorium enclosed by walls in which periodic
shows attract tens of visitor for a short, fixed period of time (grey
bars in Figure~\ref{fig:nemo_popularity_density}). In these conditions the
flow dynamics are minimal as people enter the room shortly before a show
starts; note the minimal yellow shading at the left of each show.
Consequently, density estimations maps almost perfectly to popularity for
the auditorium.
PoI 8 corresponds instead to a small bar outside the previously mentioned
auditorium. Here density captures the visitors having a short snack (intended)
as well as the ones passing by (unintended) clobbering the real popularity.

\subpar{Point of passage.} 
PoI 2, on the other hand, represents an exhibit located in an open space, close
to a point of passage \ie an area that is similar to a corridor but does not
have walls. In this case most of its perceived density it is due to people
passing by. To understand the popularity of PoIs like this, it is therefore
essential to distinguish between static and dynamic neighboring bracelets,
especially since the fraction of mobile nodes varies across time.

\subpar{Temporal hybrids.} 
Finally, PoI 3 represents an exhibit (the ``chain reaction'') that is
normally a popular point of passage (amidst several staircases,
see Figure~\ref{fig:nemo_nemo_3d}), but periodically becomes a point of
attraction for short periods of time (show times are highlighted with
a grey background). The error, \ie the amount of moving nodes taken into account, 
in this case drastically fluctuates, with rather accurate
estimations during the periodic shows (hardly any yellow on top) and bad
ones during the remaining periods (lots of yellow).

\subsection{Attractiveness}
Besides differentiating static and mobile bracelets, which allowed for
establishing the true popularity of exhibits, we computed several other
metrics to help the museum curators obtain a deeper understanding of the
dynamics behind each PoI's popularity. 

The most interesting of these metrics is the \emph{loyalty} of visitors at
each PoI, defined as the fraction of static nodes (or ratio of popularity
over density).  The rationale for the loyalty metric is that the more
engaging a PoI is, the more people passing by will stop, and become
static (engaged).  A loyalty of 0 denotes no interest as apparently all
visitors in the vicinity of the PoI are ``passers by''. 
On the other hand, a loyalty of 1 denotes a PoI that engages all visitors in range.

Figure~\ref{fig:nemo_attractiveness} compares the loyalty against the
popularity for a few selected PoIs, showing the remarkable diversity in
scenarios found at the science museum. Each dot captures a five-minute interval.

\subpar{Unpopular attractions.} 
PoI 4 represents the ``frozen shadow'' room, a PoI with low loyalty and
popularity. This is not surprising, since the engagement in this PoI is usually
short as the attraction consists of a room with phosphorescent walls that
captures the shadows via a strobing light (flashing once every 30\,s). Even the most enthusiastic kids do not freeze their shadow more than ten times, hence, limiting the popularity to near zero as that maximum time equals the window size (300\,s).

\subpar{Popular attractions.} 
Compared to PoI 4, the ``soap bubbles'' attraction (PoI 5) consists of an open area were kids are able to make huge soap bubbles using a wide set of tools. 
As can be seen from Figure~\ref{fig:nemo_attractiveness} this PoI shows both a higher loyalty and attractiveness \ie it is able to engage passing by visitors.

\subpar{Workshop areas.} 
A similar popularity, but a higher loyalty is achieved by the ``event hall'' (PoI 1), a slightly isolated open-area where several workshop are taught. The main difference between this PoI and the previous one, is that these workshops are very attractive and take
much longer to complete, hence, capturing a particularly `long' loyalty from visitors.

\vspace{2em}\subpar{Periodic attractions.}
Finally, PoI 3 represents the periodic ``chain rection'' show. In this case, one can observe (again) the bimodal behavior of this PoI: most of the time it serves as a point of passage (low loyalty and popularity) and during shows as a point of attraction (high loyalty and popularity). Note that compared to the event hall, we have fewer dots on the high loyalty region of the plot. This is because the chain reaction lasts for 15 minutes (3 dots) and it is scheduled only 5 times a day, while the workshops run continuously.

\section{Conclusions}
In this chapter we reported our experience in monitoring the popularity
of exhibits at a modern, open-space science museum. During some of the
museum's most crowded days, we equipped visitors with bracelets
emitting RF beacons, allowing us to study existing crowd-monitoring
mechanisms such as neighbor discovery, cardinality estimators, and
density classifiers in a diverse set of challenging real-world settings.

An in-depth analysis of the collected data (proximity detections and
density estimations) provided a better understanding of the limitations
of the existing mechanisms. In particular, the use of ID-based neighbor
discovery allowed us to univocally associate each bracelet to the point of
interest, and distinguish static from mobile nodes. The latter proved key
to understanding the relation between density and popularity. 
Cardinality estimators, such as Estreme, cannot provide such differentiation, resulting in estimation errors far higher than originally observed in the testbed experiments of \chapref{chapter:estreme}. 

An important downside of neighbor discovery is that it relies on identity
information, which may not always be available because of privacy issues~\cite{Freudiger2015}, network conditions and limited resources (\cf \chapref{chapter:introduction}). 
An important line of future research is thus to devise lightweight and
privacy-preserving algorithms that can differentiate static and mobile
nodes, allowing a precise estimation of popularity.

%
%
%
%
\chapter{Conclusions}
\label{chapter:conclusions}

Inspired by the need of monitoring large crowds with wireless wearable devices, this thesis tackled the challenges of communication in Extreme Wireless Sensor Networks, where thousands of mobile nodes gather together in highly confined areas and periodically need to exchange their sensed information. 
As we argued, in these extreme conditions, traditional low-power mechanisms simply do not scale and collapse. Therefore, we proposed a novel communication stack that is based on opportunistic communication.

\section{Regard challenges as opportunities}
The mechanisms presented in this thesis originated from a simple idea 
presented in \chapref{chapter:introduction}. 
Instead of adapting existing mechanisms to the challenges of EWSNs, it is better to design new mechanisms that a) benefit from harsh network conditions  (anti-fragile principle) and b) are able to spontaneously adapt  when the same conditions change (opportunistic principle). 
The resulting mechanisms, presented in chapters~\ref{chapter:sofa}, \ref{chapter:estreme} and \ref{chapter:staffetta} not only scale to EWSNs, but do so with a remarkable simplicity (state-less principle) and flexibility (robustness principle).

SOFA (\chapref{chapter:sofa}) is the cornerstone of the communication stack presented in this thesis and is an example of how opportunistic mechanisms can exploit extreme networks conditions to their own advantage. 
In particular, SOFA's communication primitive exchanges data with the first available neighbor, independently of its ID and topology location. By giving priority to efficiency, rather than the certainty of the routing path, nodes are able to efficiently communicate throughout a wide range of network conditions (density, size and load) with low energy consumption ($\approx$ 2\,\% duty cycle) and high delivery rates (above 90\,\%). 
In other words, SOFA achieves performance that is similar to synchronous protocols, but with the simplicity and flexibility of the asynchronous ones.
Moreover, because SOFA's mechanism is stateless and does not require any information from the neighbors, it shows a remarkable resilience to  mobility and link failures, which are common characteristics of EWSNs. 
SOFA's flexibility has been particularly important during the real-world experiments at the NEMO science museum, where the mechanism had to reliably communicate throughout a wide set of network conditions.



\section{Building on top of opportunistic primitives}
Even though SOFA allows to communicate in EWSNs, 
not all network services can be adapted to its novel communication primitive in a straightforward way. 
Opportunistic anycast requires in fact to rethink -- from the ground up -- many network services such as neighbors discovery and routing, which are normally based on deterministic primitives such as unicast and broadcast.
To this end, in \chaprefs{chapter:estreme}{chapter:staffetta} we proposed Estreme and Staffetta, two alternatives to traditional neighbor discovery and data collection, respectively.

%
Estreme (\chapref{chapter:estreme}) proposed to estimate, rather than count, the number of neighboring devices, thus removing the need of explicitly discovering each device. 
Following the guidelines of \chapref{chapter:introduction}, Estreme performs its estimation by exploiting the unique correlation present in SOFA between the rendezvous times and the neighborhood cardinality (opportunistic principle). The denser the network, the shorter these times are. Because Estreme is based only on passive observations (of the rendezvous time), it can perform its estimations concurrently (up to 100 concurrent estimators) and can exploit the spatial correlation of density to speed up its estimation convergence. This way, Estreme is able to handle abrupt changes in the network (robustness principle), with errors that are below 10\,\% and stable across a wide set of conditions (from 10 to 100 neighbors).


Staffetta (\chapref{chapter:staffetta}), on the other hand, is able to bias the opportunistic communication of SOFA in a desired direction, effectively providing data collection over opportunistic anycast. 
Staffetta achieves this bias due to the following phenomenon: 
because in SOFA messages are forwarded to the first available neighbor, it is possible to bias the forwarding of data by making some nodes to wake up more often than others. 
In the case of data collection, Staffetta forms an \textit{activity gradient} in which nodes are the more active, the closer they are to the sink, \ie the node collecting all the sensors' data. 
Similar to the other mechanisms proposed in this thesis, Staffetta's activity gradient forms spontaneously from local nodes' observations and, thus, can easily adapt to network changes with minimal overhead (state-less principle).   
The extensive set of experiments presented in \chapref{chapter:staffetta} showed that Staffetta significantly reduces both packet latency and energy consumption of data collection over opportunistic primitives while maintaining high packet delivery ratios (robustness principle). As Staffetta does not need to maintain complicated routing metrics spanning the entire network, it can also handle network dynamics like link-quality fluctuations and node mobility really well. We found that Staffetta adapts its gradient in just a matter of seconds.


\section{Extreme Wireless Sensor Networks in practice}
\chapref{chapter:nemo} concluded this thesis by testing the previously-proposed mechanisms in a wide range of real-world conditions. 
In particular, we tested both SOFA and Estreme as a tool to monitor the crowd characteristics in a six-stories science museum. The deployment data, collected over three of the most crowded days of the year with close to a thousand visitors, showed that both mechanisms are able scale to EWSNs even though some expected and unexpected challenges arose.
First, similar to what we showed in \chapref{chapter:estreme}, our density estimator Estreme suffered when the network density drastically changed in space. We plan to address this accuracy problem in future work.
Second, using the neighborhood cardinality to estimate the crowd density provided inaccurate results. This is due to the fact that controlling the range of radio devices in heterogeneous environments (with walls, rooms and open-spaces) is very challenging. More research must be done to efficiently solve the coverage/overlap problem described in \secref{sec:nemo_coverageoverlap}. 
Third, because our and others estimators do not detect if a neighboring device is moving, they are not able to distinguish between visitors who are interested, and stand, at an exhibit, and visitors who are not interested and are just passing by. This differentiation is essential, for example, to compute exhibit-related metrics such as the popularity of a point of interest.

\section{Future Work}
The four principles presented in \chapref{chapter:introduction} served as guidelines to develop the solutions presented in this thesis and resulted in mechanisms that are both simple and scalable to EWSNs. Nevertheless, we recognize that these principles could be improved to make EWSNs' mechanisms even more flexible and efficient.

\subpar{Towards versatile mechanisms.} Designing anti-fragile mechanisms \ie mechanisms that perform better in hard conditions, is the first step toward communication in EWSNs. 
Nevertheless, we argue that the next step is to make these mechanisms \textit{versatile} to any network condition, from mild to extreme.
Unfortunately, creating such mechanisms is difficult and presents three main challenges. First, versatile mechanisms for EWSNs still needs to excel in extreme network conditions. Second, under mild conditions, these mechanisms needs to operate with performance that are comparable to traditional mechanisms -- which were specifically designed for the mild case. Failing in solving this challenge could impede the adoption of EWSNs technologies. In other words: ``\textit{why should you use a technology that does not work in non-challenging conditions?}''. Third, versatile mechanisms should be able to adapt rapidly to the abrupt changes between mild and extreme conditions. As we showed in this thesis, opportunistic behaviors could be a solution to this latter challenge.

As an example, according to the anti-fragile principle, SOFA is very efficient when the neighborhood cardinality crosses $\approx$~20 neighbors. 
Under this density threshold, on the contrary, SOFA's rendezvous gets longer and longer, resulting in very high energy consumption. 
This inefficiency could have been problematic during the NEMO experiment (\cf \chapref{chapter:nemo}) where low density conditions could have been temporarily present. In this case, to make SOFA more versatile, we borrowed some ideas from Staffetta and imposed an energy budget on each node. 

\subpar{Integrating opportunistic mechanisms.}
Even though Estreme and Staffetta provide viable alternatives to their traditional protocol counterparts, their coexistence is not straightforward and must be further analyzed. Estreme, for example, relies on the fact that all nodes in the network wake up with the same frequency. Staffetta, instead, modifies this same frequency to bias the direction of the forwarded messages. 
A possible solution to this conflict, which should be further analyzed in future work, could consist of distinguishing between different wake-up events (the ones of Estreme from the ones of Staffetta). Alternatively, Estreme's model could be extended in order to support an uneven set of wake-up frequencies. Nevertheless, we argue that similar integration issues could arise also in future mechanisms for EWSNs that opportunistically exploit the base communication layer (SOFA) to efficiently fulfill their tasks. 

\subpar{Estimate the right type of information.}
\chapref{chapter:nemo} suggests that, for museum and crowd monitoring applications, estimating the neighborhood cardinality is not enough to obtain meaningful insights. 
In this case, for example, knowing the nodes' identity proved to be essential for the accurate estimation of the crowd density and the popularity of Points of Interest. Unfortunately, as argued in \chaprefs{chapter:sofa}{chapter:estreme}, discovering the neighbors' identities in EWSNs is very challenging and leaves the problem of crowd density estimation in EWSNs still open to the research community.
Nevertheless, the results presented in \chapref{chapter:nemo} should inspire future work to explore which type of information is required by each application and, based on the results, develop the corresponding estimators for our EWSNs' stack.



\chapter*{Summary}
\addcontentsline{toc}{chapter}{Summary}
\setheader{Summary}

Sensor networks can nowadays deliver 99.9\% of their data with duty cycles below 1\%. This remarkable performance is, however, dependent on some important underlying assumptions: low traffic rates, medium size densities and static nodes. In this thesis, we investigate the performance of these same resource-constrained devices, but under scenarios that present extreme conditions: high traffic rates, high densities and mobility: the so-called Extreme Wireless Sensor Networks (EWSNs). 

From a networking perspective, communicating in these extreme scenarios is very challenging. The combined effect of high network densities and dynamics makes the network's characteristics fluctuate drastically both in space and time. Traditional mechanisms struggle to cope with these sudden changes, resulting in a continuous exchange of information that saturates the bandwidth and increases the energy consumption. 
Once this saturation threshold is reached, mechanisms take decisions based on wrong, outdated information and soon
stop working. 
Even flexible mechanisms have difficulties adapting their settings to the fickly conditions of EWSNs and result in poor performance.

To efficiently communicate in EWSNs, mechanisms must therefore comply to a set of requirements \ie design principles, which are explained next. First, they need to be resilient to local and remote failures and operate as independent as possible from the status of other nodes (\textit{state-less principle}). Second, because in EWSNs bandwidth is a scarce resource, it should be mainly used for the transmission of the actual \textit{data}. Mechanisms should  not be artificially orchestrated and should exploit each other in a cross-layer fashion to reduce as much as possible their communication overhead (\textit{opportunistic principle}). Third, mechanisms should support extreme network conditions from their inception. Adapting traditional mechanisms, which are designed for milder conditions, would otherwise result in complex and fragile mechanisms (\textit{anti-fragile principle}). Fourth, in the case the resources saturate, mechanisms should operate in a conservative fashion, so that performance degrades gracefully without drastic disruptions (\textit{robustness principle}).

Inspired by these four principles, this thesis detaches from traditional communication primitives -- which are deterministic and based on rigid structures --  and proposes a novel communication stack based on \textit{opportunistic anycast}. According to this primitive, nodes communicate with the first available neighbor, independently of its location and identity. The more neighbors, the more efficient the communication. 

At the foundation of this communication stack lays \textit{SOFA}, a medium access control (MAC) protocol that exploits opportunistic anycast to handle extreme densities in an efficient manner. Its implementation details are presented in \chapref{chapter:sofa}. On top of the SOFA layer, this thesis builds two essential network services: neighborhood cardinality (density) estimation and data collection. The former service is provided by \textit{Estreme}, a mechanism presented in \chapref{chapter:estreme}, which exploits the \textit{rendezvous time} of SOFA to estimate the number of neighbors with almost zero overhead. The latter service is provided by \textit{Staffetta} in \chapref{chapter:staffetta}, a mechanism that adapts the wake-up frequency of nodes to bias the opportunistic neighbor-selection of SOFA towards the desired direction \eg towards the sink node collecting all data.


Finally, this thesis presents an extensive evaluation of a complete opportunistic stack on simulations, testbeds and a challenging real-world deployment in the form of the NEMO science museum in Amsterdam. Results show that opportunistic behavior can lead to mechanisms that are both lightweight and robust and, thus, are able to scale to EWSNs.

\chapter*{Samenvatting}
\addcontentsline{toc}{chapter}{Samenvatting}
\setheader{Samenvatting}

{\selectlanguage{dutch}

Sensor netwerken kunnen tegenwoordig 99.9\% van de gemeten data afleveren
terwijl de duty cycle onder de 1\% blijft. Deze opmerkelijke prestaties
kunnen echter alleen gerealiseerd worden onder gunstige voorwaarden: lage data
snelheden, gematigde dichtheden (aantallen buren) en statische nodes. In dit
proefschrift onderzoeken we de prestaties van dezelfde apparaten, maar dan
in uitdagende scenario's met hoge data snelheden, hoge dichtheden en mobiele
nodes, die bekend staan onder de naam Extreme Wireless Sensor Networks (EWSN).

Communiceren in zulke extreme scenario's is zeer uitdagend. De combinatie van
hoge dichtheden en mobiliteit maakt dat de netwerkkarakteristieken sterk
fluctueren in zowel ruimte als tijd. Klassieke protocollen
hebben het moeilijk met zulke plotselinge veranderingen en vervallen in het
continue uitwisselen van (topologie) informatie, wat kostbare bandbreedte
in beslag neemt en het energieverbruik nodeloos verhoogt. Als de netwerk
belasting teveel stijgt resulteert dit in het nemen van beslissingen op
basis van verouderde informatie, en komen deze protocollen effectief tot
stilstand. Zelfs meer flexibele mechanismen hebben problemen met het zich
aanpassen aan de snel veranderende omstandigheden in EWSN en kampen met
verminderde prestaties.

Om effici\"ent te kunnen communiceren in EWSN's is het noodzakelijk 
de volgende ontwerp principes in acht te nemen. Ten eerste moet men
bestand zijn tegen storingen zowel in de nabijheid als verafgelegen, en
zo min mogelijk afhangen van de toestand waarin andere nodes verkeren
(toestandsonafhankelijkheid). Ten tweede, omdat de netwerk capaciteit
(bandbreedte) beperkt is, moet deze zoveel mogelijk gebruikt worden
voor het verzenden van {\em data}, en zo min mogelijk voor controle
berichten. Protocollen dienen daarom zo min mogelijk kunstmatig georchestreerd
te worden, en zoveel mogelijk met elkaar samen te werken (cross layer)
teneinde de communicatie-overhead tot een minimum te beperken (opportuniteit).
Ten derde, mechanismen moeten per ontwerp om kunnen gaan met de extreme
omstandigheden van EWSN's. Het aanpassen van traditionele concepten, die
gemaakt zijn voor mildere scenario's, leidt te vaak tot complexe en broze
oplossingen (anti-breekbaarheid). Ten vierde, als de limieten van het systeem
bereikt worden moet het niet abrupt maar geleidelijk terug schalen zodanig
dat er geen grote prestatieschommelingen optreden (robuustheid).

Op basis van deze vier principes breekt dit proefschrift met de traditionele
aanpak, die is gebaseerd op deterministische en vaste structuren, en
introduceert een nieuwe protocol stack op basis van {\em opportunistic anycast}.
Trouw aan deze communicatie vorm, communiceren nodes met de eerst beschikbare
buurnode, onafhankelijk van diens locatie en identiteit. Hoe meer buurnodes,
hoe effici\"enter de communicatie.

Aan de basis van deze protocol stack light SOFA, een medium access control
(MAC) protocol dat opportunistic anycast benut om effectief met de hoge
dichtheden van EWSN om te gaan. De implementatie details worden in
hoofdstuk~\ref{chapter:sofa} behandeld. SOFA wordt gebruikt om twee essenti\"ele
netwerk services mee te implementeren: dichtheidsschatting (neighborhood
cardinality) en datacollectie. De eerste service wordt geleverd door {\em
Estreme}, zoals behandeld in hoofdstuk~\ref{chapter:estreme}, die het {\em rendez-vous}
mechanisme van SOFA gebruikt om het aantal buurnodes te schatten met
minimale extra kosten. De tweede service wordt geleverd door {\em Staffetta},
zoals beschreven in hoofdstuk~\ref{chapter:staffetta}, die de wake-up frequentie van de
nodes instelt zodanig dat nodes dichter bij de uitgang van het netwerk (de sink
node) vaker wakker worden, en dus eerder gekozen worden door SOFA.

Het proefschrift besluit met een uitgebreide 
evaluatie van de complete opportunistische protocol stack zowel in simulatie,
een gecontroleerde testomgeving (met 100 nodes), als een uitdagend experiment in
het NEMO science museum te Amsterdam. De resultaten tonen aan dat
een opportunistische benadering leidt tot mechanismen die zowel lichtgewicht als
robuust zijn, en derhalve uitermate geschikt zijn om de schaalgrootte van
EWSN's te adresseren.

}





\thumbfalse


\bibliographystyle{dissertation}
\bibliography{library,martella_library}


\end{document}